\documentclass[11pt,letterpaper]{article}
\usepackage{graphicx} 
\usepackage{color}
\usepackage[margin=1in]{geometry}
\usepackage{fluff}
\usepackage{mathtools}
\usepackage{xcolor}
\usepackage[suppress]{color-edits}
\usepackage{hyperref}
\addauthor{er}{orange}
\addauthor{ex}{purple}
\addauthor{smw}{blue}
\addauthor{md}{brown}

\definecolor{mygreen}{RGB}{10,110,230}
\definecolor{myred}{RGB}{10,110,230}

\hypersetup{
     colorlinks=true,
     citecolor= mygreen,
     linkcolor= red
}

\newcommand{\srev}{\textsc{SRev}}
\newcommand{\vcg}{\textsc{VCG}}
\newcommand{\rev}{\textsc{Rev}}

\newcommand{\ERT}[1][T]{\ensuremath{\mathcal{ER}_{\leq#1}}}
\newcommand{\comp}{\textsc{Comp}}
\newcommand{\sscomp}{\textsc{SSComp}}
\newcommand{\mattnote}[1]{\textcolor{blue}{#1}}

\newcommand{\ER}{\ensuremath{\mathcal{ER}}}

\newcommand{\hide}[1]{}

\title{Settling the Competition Complexity of Additive Buyers over Independent Items}
\author{
    Mahsa Derakhshan\thanks{Northeastern University, \texttt{m.derakhshan@northeastern.edu}}
    \and
    Emily Ryu\thanks{Cornell University, \texttt{eryu@cs.cornell.edu}}
    \and
    S. Matthew Weinberg\thanks{Princeton University, \texttt{smweinberg@princeton.edu}}
    \and
    Eric Xue\thanks{Princeton University, \texttt{ex3782@princeton.edu}}
}


\begin{document}

\maketitle

\begin{abstract}
    The competition complexity of an auction setting is the number of additional bidders needed such that the simple mechanism of selling items separately (with additional bidders) achieves greater revenue than the optimal but complex (randomized, prior-dependent, Bayesian-truthful) optimal mechanism without the additional bidders. Our main result settles the competition complexity of $n$ bidders with additive values over $m<n$ independent items at $\Theta(\sqrt{nm})$. The $O(\sqrt{nm})$ upper bound is due to \cite{BeyhaghiW19}, and our main result improves the prior lower bound of $\Omega(\ln n)$ to $\Omega(\sqrt{nm})$. 

Our main result follows from an explicit construction of a Bayesian IC auction for $n$ bidders with additive values over $m<n$ independent items drawn from the Equal Revenue curve truncated at $\sqrt{nm}$ ($\ERT[\sqrt{nm}]$), which achieves revenue that exceeds $\srev_{n+\sqrt{nm}}(\ERT[\sqrt{nm}]^m)$. 

Along the way, we show that the competition complexity of $n$ bidders with additive values over $m$ independent items is \emph{exactly equal to} the minimum $c$ such that $\srev_{n+c}(\ERT[p]^m) \geq \rev_n(\ERT[p]^m)$ for all $p$ (that is, \emph{some} truncated Equal Revenue witnesses the worst-case competition complexity). Interestingly, we also show that the untruncated Equal Revenue curve does \emph{not} witness the worst-case competition complexity when $n > m$: $\srev_n(\mathcal{ER}^m) = nm+O_m(\ln (n)) \leq \srev_{n+O_m(\ln (n))}(\mathcal{ER}^m)$, and therefore our result can only follow by considering all possible truncations.
\end{abstract}

\addtocounter{page}{-1}

\newpage

\section{Introduction}

Multi-dimensional mechanism design has become a core subdomain of TCS following the seminal work of Chawla, Hartline, and Kleinberg, introducing its study to Computer Science~\cite{ChawlaHK07}. In particular, while Myerson's seminal work in the \emph{single}-dimensional setting elegantly characterizes the optimal single-item auction in quite broad settings, Economists and Computer Scientists alike soon realized that optimal mechanisms in the multi-dimensional setting, even in restricted two-item instances, can be horribly intractable~\cite{RochetC98, BriestCKW15, HartN13, HartR15, Thanassoulis04, Pavlov11, DaskalakisDT17, PsomasSW19, PsomasSW22, WeinbergZ22}. In response,~\cite{ChawlaHK07} initiates a vast series of works establishing that simple mechanisms, while rarely optimal, achieve constant-factor approximations in quite rich settings~\cite{ChawlaHK07, ChawlaHMS10, ChawlaMS15, HartN17, LiY13, BabaioffILW20, Yao15, RubinsteinW15, ChawlaM16, CaiZ17, EdenFFTW21}. These works help explain the prevalence of simple auctions in practice.

Still, constant-factor approximations do not tell the whole story -- sticking with something simple that guarantees \emph{something} is a reasonable starting point, but why not shoot for more? The Resource Augmentation paradigm offers a different perspective: running a complex auction is costly -- is it perhaps more cost-effective instead to recruit extra bidders (the ``resources'') to participate in a simple auction? That is, prior-dependent (versus prior-independent) mechanisms are costly because you must learn the prior. Bayesian IC, BIC (versus Dominant Strategy IC, DSIC) mechanisms are costly because you must, at minimum, teach bidders the concept of Bayes-Nash equilibria (or set up auto-bidding infrastructure and convince them to trust it, etc.). Randomized mechanisms are costly because you must further ensure the risk-neutrality of your bidders. Computationally intractable mechanisms are costly simply because computation is expensive. What if recruiting extra bidders for a  prior-independent, DSIC, deterministic, computationally tractable mechanism could outperform the complex optimum (without additional bidders) -- might that be more cost-effective?


The mathematical question at hand, then, is to nail down \emph{how many additional bidders are necessary for a simple auction to outperform the (intractable) optimum?} The seminal work of Bulow and Klemperer~\cite{BulowK96} is the first to ask such a question and establish that the (prior-independent, DSIC, deterministic, computationally efficient) second-price auction with one additional bidder outperforms Myerson's (prior-dependent, DSIC, deterministic, computationally-efficient) revenue-optimal auction in single-item settings with regular\footnote{A single-variate distribution $F$ is regular if the function $\varphi_F(x):=x - \frac{1-F(x)}{f(x)}$ is monotone non-decreasing.} bidders.\footnote{In this setting, Myerson's optimal auction is exceptionally simple: it is just a second-price auction with reserve. So~\cite{BulowK96} essentially argues that one additional bidder removes the need for prior dependence and does not provide commentary on BIC vs.~DSIC, randomized vs.~deterministic, or computational tractability.}~\cite{RoughgardenTY12} are the first to ask this question in multi-dimensional settings, and~\cite{EdenFFTW17b} term the minimum number of bidders needed the \emph{competition complexity}. Specifically, for a class $\mathcal{C}$ of distributions over valuation functions for a single bidder, the competition complexity $\comp_{\mathcal{C}}(n):=\inf_{c \in \mathbb{N}_{\geq 0}}\{ c\ |\ \vcg_{n+c}(D) \geq \rev_n(D)\ \forall\ D \in \mathcal{C}\}$.\footnote{$\mathcal{C}$ is a class of distributions such as ``additive over $m$ independent items.'' $D$ is a distribution such as ``the value for item $j$ is drawn independently from $U([0,j])$.'' $\rev_n(D)$ denotes the optimal revenue of any BIC auction for $n$ bidders drawn iid from $D$, and $\vcg_n(D)$ denotes the expected revenue of the welfare-maximizing VCG mechanism~\cite{Vickrey61, Clarke71, Groves73} for $n$ bidders drawn iid from $D$. See Section~\ref{sec:prelim} for further clarity.} 

In the canonical domain of $n$ additive bidders over $m$ independent items, (the same domain studied in~\cite{HartN17, LiY13, BabaioffILW20, Yao15, CaiDW16, EdenFFTW17b, FeldmanFR18, BeyhaghiW19}),~\cite{EdenFFTW17b} first establish a competition complexity bound of at most $n+2(m-1)$. That is, if $\mathcal{A}_m^R$ denotes the class of all valuation distributions which are additive across items, and each item valuation is drawn independently from a regular distribution, then $\comp_{\mathcal{A}_m^R}(n) \leq n+2(m-1)$. In other words, the VCG mechanism with $n+2(m-1)$ additional bidders outperforms the optimum (without additional bidders) for any distribution $D \in \mathcal{A}_m^R$. 

In the ``little $n$ regime'' ($n = O(m)$), their bound was later improved to $\comp_{\mathcal{A}_m^R}(n) = \Theta(n\ln(2+m/n))$, which is tight (up to constant factors)~\cite{FeldmanFR18, BeyhaghiW19}. In the ``big $n$ regime'' ($n = \Omega(m)$),~\cite{BeyhaghiW19} establish that $\comp_{\mathcal{A}_m^R}(n) \in [\Omega(\ln n), 9\sqrt{nm}]$, leaving open an exponential gap.
Our main result closes the final gap in the ``Big $n$'' regime: the competition complexity is indeed $\Theta(\sqrt{nm})$. That is, when $m\ge 2$:\\

\noindent\textbf{Main Result:} $\comp_{\mathcal{A}_m^R}(n) = \Omega(\sqrt{nm})$: The competition complexity of $n$ additive bidders over $m$ independent items is $\Omega(\sqrt{nm})$ in the ``Big $n$'' regime. Combined with~\cite{BeyhaghiW19}, this settles $\comp_{\mathcal{A}_m^R}(n) = \Theta(\sqrt{nm})$ in this regime (the ``little $n$'' regime is previously settled~\cite{FeldmanFR18, BeyhaghiW19}). \\

While our main result ultimately follows by designing a single BIC auction with high revenue, we highlight areas of technical interest briefly below. Similarly, while much of the journey towards our main result is not ``necessary'' for its final proof, several aspects of the journey are likely of independent interest, and we highlight these subsequently.

\subsection{Main Result: Technical Highlights}\label{sec:mainoverview}
Our main result ultimately follows by designing a BIC auction for $n$ bidders whose values for $m$ items are drawn from the Equal Revenue curve truncated at $T = \Theta(\sqrt{nm})$ ($\ERT$).\footnote{$\ERT$ has CDF $F(x) = 1-1/x$ for $x \in [1, T)$, $F(T) = 1$, and $F(x) = 0$ for $x < 1$.} A priori, it is unclear what might be technically engaging about designing a BIC auction for a particular distribution. We briefly overview three technical highlights:
\begin{itemize}
\item When $n \gg m$ (the regime we study), selling each of $m$ items separately to $n$ bidders whose value for each item is drawn from $\ERT$ \emph{already achieves expected revenue extremely close to the expected welfare}. To see this, observe that the expected welfare is clearly at most $mT$, and selling items separately using a second-price auction with reserve $T$ achieves revenue $\geq mT\cdot (1-e^{-n/T}) = mT (1-e^{-\Omega(\sqrt{n/m})})$. This means there is little room for a more sophisticated auction to outperform selling separately without additional bidders, let alone with additional bidders.\footnote{But, it does also mean that we may not need to outperform selling separately by much in order to also outperform selling separately with additional bidders, because the additional bidders cannot possibly help much either.}
\item The auction we design is not DSIC -- we explicitly design an interim allocation rule together with interim payments and prove that the mechanism is implementable and BIC. To the best of our knowledge, there are not many prior instances of useful explicit designs of BIC-but-not-DSIC mechanisms -- the only notable example is that of~\cite{Yao17} for two bidders whose values for each of two items are drawn independently from the uniform distribution on $\{1,2\}$. We design such an auction for any $m\geq 2$ and $n \geq m$.
\item The method we use to design our auction is likely of use to future designs of BIC-but-not-DSIC auctions. We start by picking an allocation rule we would like to implement and prices we would like to charge that result in a clean analysis. Unfortunately, the prices we'd like to charge are not BIC, but the interim allocation rule (of our desired allocation rule) is well-structured, so the prices can be massaged to get fairly close to BIC. Further unfortunately, there does not appear to be an exactly-BIC implementation of this allocation rule at all, so we further slightly massage the allocation rule. This last step, in particular, is reminiscent of a specialized (for our mechanism and distribution) instantiation of an $\varepsilon$-BIC to BIC reduction~\cite{HartlineKM11, BeiH11, DaskalakisW12, RubinsteinW15, DughmiHKN17, CaiOVZ21}.
\end{itemize}

\subsection{Results of Independent Interest along the Journey}\label{sec:journey}
The journey towards our main result yields two additional results -- while these are ultimately ``unnecessary'' steps along the journey, they help provide context for our results and approach.\\

\noindent\textbf{Independent Result I:} Let $\mathcal{ERC}_m$ denote the subclass of $\mathcal{A}_m^R$ containing only distributions of the form $\ERT^m$ for some truncation $T$. Then $\comp_{\mathcal{A}^R_m}(n) = \comp_{\mathcal{ERC}_m}(n)$ for all $n$. That is, for all $n$, the worst-case competition complexity of any distribution in $\mathcal{A}^R_m$ is witnessed by iid truncated Equal Revenue curves.\\

One direction of this equality is trivial, as $\mathcal{ERC}_m \subset \mathcal{A}_m^R$. The non-trivial direction does not at all follow from identifying an a priori worst-case distribution (indeed, the worst-case distribution for $\comp_{\mathcal{A}_m^R}(n)$ is $\mathcal{ER}_{\leq \Theta(\sqrt{mn})}$, which has no apparent a priori justification). It is natural to guess that an untruncated equal revenue curve may be the worst-case distribution, as it stochastically dominates all other distributions with the same single-bidder revenue. However, this intuition breaks rather quickly: (a) just because every marginal of $D$ stochastically dominates those of $D'$ does \emph{not} imply that the optimal revenue for $D$ exceeds that of $D'$ due to the phenomenon of revenue non-monotonicity~\cite{HartR15, RubinsteinW15, Yao18}, (b) even if moving from $D'$ to $D$ were guaranteed to make the revenue benchmark larger, it also improves the revenue of selling separately, so both sides of the desired inequality increase. Instead, we show that a modification of ~\cite{BeyhaghiW19}'s approach upper bounds $\comp_{\mathcal{A}_m^R}(n)$ \emph{if and only if it upper bounds $\comp_{\mathcal{ERC}_m}(n)$}. See Section~\ref{sec:ERTreduction} for further details.

In addition, Independent Result I provides further context for our main result. A practically-minded reader might wonder why it matters that  $\Omega(\sqrt{mn})$ bidders are necessary for VCG to outperform the optimum in an instance where selling separately is already extremely close to optimal.\footnote{See Section~\ref{sec:related} for a very brief note on results such as~\cite{FeldmanFR18, CaiS21} that explicitly consider resource augmentation to target a $(1-\varepsilon)$-approximation rather than truly exceeding the optimum.} Independent Result I highlights that analyzing this instance is in some sense a necessary step to analyze instances where the gap might be larger.\\ 

\noindent\textbf{Independent Result II:} For some absolute constant $c$, $\vcg_{n+cm\ln(n)}(\ER^m) \geq \rev_n(\ER^m)$. That is, $O(m\ln(n))$ additional bidders suffice for selling separately $n$ bidders with values for $m$ items drawn from iid untruncated Equal Revenue curves to exceed the optimal revenue. This implies that the untruncated Equal Revenue curve is \emph{not} the worst-case instance for \emph{any} $n \geq m$ (and it witnesses a lower bound that is exponentially suboptimal in $n$). This result follows by establishing that the optimal revenue for $n$ bidders with values for each of $m$ items drawn iid from the Equal Revenue curve is $nm+O(m^2\ln(n))$. \\


Previous disclaimers that an untruncated revenue curve is not \emph{obviously} the worst-case distribution notwithstanding, it is still a tempting conjecture that the equal revenue curve may indeed be the worst-case (or at least, asymptotically close) -- if one had hoped to improve~\cite{BeyhaghiW19}'s bounds, iid untruncated equal revenue curves is a natural first step. So it is interesting that the untruncated Equal Revenue curve witnesses an exponentially-suboptimal bound compared to a properly-truncated Equal Revenue curve. This provides further motivation for Independent Result I, as it highlights that there is indeed no a priori worst-case distribution. See Section~\ref{sec:upperbound} for further details.


Beyond the result itself, our analysis should be of independent interest. In particular, a first step towards our upper bound on the revenue is a flow in the~\cite{CaiDW16} framework. To the best of our knowledge, prior works that approximate the optimal revenue all use a ``region-separated'' flow (see Section~\ref{sec:kfbic} for a formal definition)~\cite{CaiDW16, CaiZ17, EdenFFTW17b, EdenFFTW21, BeyhaghiW19, CaiOZ22}.\footnote{There are certainly works, such as~\cite{HartlineH15, DaskalakisDT17} that use more complex flows to derive \emph{optimal} mechanisms in single-bidder settings.} We prove that such a flow cannot possibly witness an upper bound better than $nm+\Omega(m\sqrt{nm})$ on $\rev_n(\ER^m)$ by designing an auction that satisfies all ``within-region'' BIC constraints (but not the cross-region constraints). To cope with this, our analysis still begins from a region-separated flow (in fact, the same canonical flow used in~\cite{EdenFFTW17b}), but adds a novel second step to (necessarily) leverage cross-region constraints.

Finally, in Section~\ref{sec:ER-revenue-LB} we also establish that bundling items together (to $n$ bidders with $m$ items from the Equal Revenue curve) with a second-price auction achieves expected revenue $nm+\Omega(m\ln(n))$. Our analysis follows primarily from a coupling argument as opposed to raw calculations, and also slightly improves the analysis of~\cite{BeyhaghiW19} from $nm+\Omega(\ln(n))$. 


\subsection{Related Work}\label{sec:related}


We have already overviewed the most directly related work.~\cite{BulowK96} first consider resource augmentation for Bayesian mechanism design, and establish that a single additional bidder suffices for the second-price auction with no reserve to outperform the revenue-optimal auction with any number of i.i.d.~regular bidders and a single item.~\cite{RoughgardenTY12} first consider resource augmentation for multi-dimensional mechanism design, and compare the VCG mechanism with additional bidders to the optimal deterministic DSIC auction for unit-demand bidders over independent items.~\cite{EdenFFTW17b} are the first to target outperformance of the optimal BIC randomized auction, and study the now-canonical setting of additive bidders over independent items. Their bounds have since been tightened by~\cite{FeldmanFR18, BeyhaghiW19}, and our main result tightens the last remaining gap. Moreover, if $\mathcal{A}^R_{m,\mathcal{I}}$ denotes the class of valuation functions that are additive over $m$ independent regular items subject to downwards-closed constraints $\mathcal{I}$,~\cite{EdenFFTW17b} also establish that $\comp_{\mathcal{A}^R_{m,\mathcal{I}}}(n) \leq \comp_{\mathcal{A}_m^R}(n) + m-1$. Therefore, the study of additive buyers has implications for significantly more general settings as well. Other works, such as~\cite{LiuP18, FuLR19, BrustleCDV22} consider the competition complexity of Bayesian mechanism design in other settings (such as dynamic auctions, non-iid single-dimensional bidders, or posted-price mechanisms).~\cite{FeldmanFR18, CaiS21} further consider how many additional bidders are needed to recover a $(1-\varepsilon)$-fraction of the optimal revenue, rather than truly exceeding the optimal revenue. Both results require strictly fewer bidders than would otherwise be necessary.

The concept of resource augmentation is well-represented within TCS broadly~\cite{SleatorT85, BarmanCU12}, Economics broadly~\cite{AkbarpourMS18, AkbarpourALS22}, and also their intersection~\cite{RoughgardenT02, ChawlaHMS13}.

We have also previously noted a vast literature justifying simple auctions in multi-dimensional settings, despite their suboptimality~\cite{ChawlaHK07, ChawlaHMS10, ChawlaMS15, HartN17, LiY13, BabaioffILW20, Yao15, RubinsteinW15, ChawlaM16, CaiZ17, EdenFFTW21}.\footnote{Note that~\cite{HartlineR09} initiate a conceptually-similar line of work justifying exceptionally simple auctions in single-dimensional settings via constant-factor approximation guarantees.} Our independent results use similar technical tools (such as the benchmark induced by~\cite{CaiDW16}'s ``canonical flow''), but deviates from these in seeking a $(1-o(1))$-approximation to the optimal revenue, rather than a constant-factor approximation. 

At a technical level, our results are similar-to-yet-distinct-from several themes in the literature on multi-dimensional mechanism design. As previously noted, we design an explicit BIC auction that is not DSIC, which is also done in~\cite{Yao17}. A difference is that~\cite{Yao17} considers two bidders and two items and proves that the BIC auction strictly outperforms the optimal DSIC auction, whereas we consider arbitrarily-many bidders and items but do not explicitly compare to a DSIC auction. We have also mentioned that one step of our auction design bears similarity to $\varepsilon$-BIC to BIC reductions, which are developed in~\cite{DaskalakisW12, RubinsteinW15, CaiOVZ21} based on techniques introduced in~\cite{HartlineKM11, BeiH11, DughmiHKN17}. Their results apply generally and are technically quite involved, whereas we directly massage a specific nearly-BIC auction for a specific distribution. There is also a line of works deriving \emph{optimal} mechanisms for specific distributions~\cite{DaskalakisDT17, GiannakopoulosK14, GiannakopoulosK15, HartlineH15}. These works consider single bidder settings, and most use some form of duality to establish optimality. Other works derive optimal mechanisms for simple classes of single-bidder distributions to establish computational hardness~\cite{DaskalakisDT14, ChenDOPSY22}. In comparison, our work considers multi-bidder settings, and in some sense lies between these works and constant-factor approximations in terms of complexity: our upper bounds are slightly more involved than those sufficient for constant-factor approximations, but not as involved as those necessary for precise optimality. At the same time, we do not nail precisely the revenue-optimal auctions, but do derive bounds strictly better than what can be achieved by the simple duals sufficient for constant-factors.

\section{Preliminaries} \label{sec:prelim}
In this section, we provide the minimal preliminaries necessary to state and prove our main result. Section~\ref{sec:addprelims} provides additional preliminaries specific to our independent results. 

The setting we study consists of $n$ bidders with additive valuations over $m$ items.
Formally, the values of the bidders are drawn from an $n \times m$ dimensional joint distribution $\mathcal{D}$ where $v_{ij}$ denotes bidder $i$'s value for item $j$.
Bidder $i$'s value for a subset $S$ of items is $\sum_{j \in S} v_{ij}$.

A mechanism is given by ex-post allocation and payment rules that specify the probabilities with which each bidder gets each item and how much each bidder pays for each item, respectively. We will also consider interim allocation rules of auctions, which suffice to understand whether mechanisms are Bayesian IC (see Definition~\ref{def:BIC} below).

\begin{definition}
The ex-post probabilities with which each bidder receives each item are given by a function $x: \supp{\mathcal{D}} \to \Delta^{n \times m}$ where $x_{ij}(v_1, \dots, v_n)$ denotes the probability with which bidder $i$ receives item $j$ given the bid profile $v$, and the ex-post payments $q: \supp{\mathcal{D}} \to \Delta^{n \times m}$ has $q_i(v)$ denote the payment that bidder $i$ makes given the bid profile $v$. Given an ex-post allocation rule $x$ and price rule $q$, the interim probability with which bidder $i$ receives item $j$ when she bids $v_i$ and the interim price paid are defined as
\[
    \pi_{ij}(v_i) := \underset{v \sim \mathcal{D}}{\EE}\cSqBr{x_{ij}(v)}{v_i} \qquad \qquad p_i(v_i):=\underset{v \sim \mathcal{D}}{\EE}\cSqBr{q_i(v)}{v_i} 
\]
That is, the interim probability is the expected probability with which bidder $i$ receives item $j$ when she bids $v_i$ and the remaining bidders bid truthfully, and the interim price is the expected price when bidding $v_i$ and the remaining bidders bid truthfully.
\end{definition}

\begin{definition}\label{def:BIC}
Let $\pi$ denote an interim allocation rule and let $p$ denote an interim payment rule.
The mechanism $(\pi, p)$ is Bayesian Incentive Compatible (BIC) if for all $i, v_i, v_i'$,
\[
    \sum_j \pi_{ij}(v_i) v_{ij} - p_i(v_i) \geq \sum_j \pi_{ij}(v_i') v_{ij} - p_i(v_i')
\]
That is, each bidder's best response to her peers if they report their true values is also to report her true values.
\end{definition}

In addition, we use the following terminology:

\begin{itemize}
    \item $\ER$: the single-variate distribution with CDF $F(x) = 1 - \frac{1}{x}$, for $x \ge 1$.
    \item $\ER^m$: the multi-variate distribution that draws $m$ values i.i.d.~from $\ER$. 
    \item $\ER^{n\times m}$: the multi-variate distribution drawing $n$ bidders' values for $m$ items i.i.d.~from $\ER$. 
    \item $\ERT$: the equal revenue distribution truncated at $T$; i.e. the single-variate distribution with CDF $F(x) = 1 - \frac{1}{x}$ for $x \in [1,T)$ and $F(T) = 1$.
    \item $\rev^{M}(\mathcal{D})$: the expected revenue of an auction $M$ when played by bidders drawn from the joint distribution $\mathcal{D}$ over values of $n$ bidders for $m$ items.
    \item $\rev(\mathcal{D})$: the supremum over all BIC auctions $M$ of $\rev^M(\mathcal{D})$.
    \item $\srev(\mathcal{D})$: the expected revenue of selling separately (using Myerson's optimal auction~\cite{Myerson81}) to bidders drawn from the joint distribution $\mathcal{D}$ over values of $n$ bidders for $m$ items.
    \item $\vcg(\mathcal{D})$: the expected revenue of the welfare-maximizing Vickrey-Clarke-Groves (VCG) auction when played by bidders drawn from the joint distribution $\mathcal{D}$ over values of $n$ bidders for $m$ items.
    \item All $\mathcal{D}$ considered in this paper are i.i.d.~across bidders, and of the form $D^n$ for some distribution over a single bidder's valuation function for $m$ items. To simplify notation throughout, in these cases we notate $\rev_n(D):=\rev(\mathcal{D}), \srev_n(D):=\srev(\mathcal{D}), \vcg_n(D):=\vcg(\mathcal{D})$.
\end{itemize}

Below is a formal (re-)statement of the competition complexity. Our work considers additive bidders, where the VCG auction sells items separately using a second-price auction (so the distinction between VCG and selling separately is simply whether or not there is a reserve, when item values are regular, or whether items are sold using a second-price auction vs.~Myerson's optimal auction in the general case). 

\begin{definition}[Competition Complexity] Let $\mathcal{C}$ be a class of distributions over valuation functions for a single bidder. The \emph{Competition Complexity} of $\mathcal{C}$ is the function $\comp_{\mathcal{C}}(\cdot):\mathbb{N}_+ \rightarrow \mathbb{N}_{\geq 0}$ where $\comp_{\mathcal{C}}(n):=\inf_{c \in \mathbb{N}_{\geq 0}}\{c\ |\ \vcg_{n+c}(D) \geq \rev_n(D)\ \forall\ D \in \mathcal{C}\}$. The \emph{Selling Separately Competition Complexity} is instead $\sscomp_{\mathcal{C}}(n):=\inf_{c \in \mathbb{N}_{\geq 0}}\{c\ |\ \srev_{n+c}(D) \geq \rev_n(D)\ \forall\ D \in \mathcal{C}\}$.
\end{definition}


\section{Main Result: $\comp_{\mathcal{A}^R_m}(n)=\Omega(\sqrt{nm})$} \label{sec:ERTsqrt}

For the class of truncated equal revenue distributions $\ERT$ with $T = \lambda \sqrt{nm}$ for some absolute constant $\lambda > 1$ (and $T < n$), we now provide an explicit construction of a BIC auction $M$ with $\rev^M(\ERT^{n\times m}) > \srev_{n+c\sqrt{nm}}(\ERT^m)$ for some absolute constant $c$. This witnesses that $\comp_{\mathcal{A}^R_m}(n) = \Omega(\sqrt{nm})$, and $\sscomp_{\mathcal{A}_m}(n) = \Omega(\sqrt{nm})$.

For ease of readability, several calculation-based proofs are deferred to Appendix~\ref{sec:proofs_ERTsqrt}.

\subsection{Step One: Intuition \& a Not-at-all BIC Auction}
First, we explicitly compute $\srev_{n'}(\ER^m)$.

\begin{restatable}{lemma}{srevcompute} 
$\srev_{n'}(\ERT^m) = m\cdot T \cdot \left(1-\left(1-1/T\right)^{n'}\right)$. One mechanism achieving this sells each item separately with a second-price auction at reserve $T$.\footnote{In fact, a second-price auction with any reserve $\leq T$ achieves this.}
\end{restatable}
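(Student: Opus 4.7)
The plan is to reduce the claim to a single-item computation and then leverage the fact that the truncated equal revenue curve is regular with a particularly clean revenue curve. By the definition of $\srev$, each item is sold via its own Myerson-optimal auction, so by the independence of item values and additivity of bidder valuations, the total revenue decomposes as $\srev_{n'}(\ERT^m) = m\cdot \srev_{n'}(\ERT)$, where on the right we abuse notation to mean the single-item Myerson revenue with $n'$ i.i.d.\ bidders drawn from $\ERT$. So it suffices to prove $\srev_{n'}(\ERT) = T(1-(1-1/T)^{n'})$.

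Next, I would observe that $\ERT$ is regular by computing its revenue curve $r \mapsto r\cdot\Pr[V \geq r]$ directly. For $r \in [1,T)$ one has $\Pr[V\ge r] = 1-F(r^-) = 1/r$, so $r\cdot \Pr[V\ge r] = 1$, while at $r=T$ the atom contributes $\Pr[V\ge T] = 1/T$ and again $T\cdot (1/T)=1$. Thus the revenue curve is identically $1$ on $[1,T]$, which makes $\ERT$ regular and certifies that every reserve in $[1,T]$ is a valid monopoly price. Consequently, Myerson's theorem for $n'$ i.i.d.\ regular bidders identifies the optimal single-item mechanism as a second-price auction with any reserve $r^\star \in [1,T]$; in particular, the SPA with reserve $T$ is optimal.

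Finally, I would compute the revenue of SPA with reserve $T$ directly. The item sells if and only if at least one bidder has value $\geq T$; since every value in $\mathrm{supp}(\ERT) \cap [1,T)$ is strictly below $T$, this is equivalent to at least one bidder lying on the atom at $T$. The probability no bidder lies on the atom is $(1-1/T)^{n'}$, so the sale probability is $1-(1-1/T)^{n'}$, and conditional on a sale the price charged is exactly $T$ (either the reserve binds or multiple bidders sit on the atom, yielding a second-highest bid of $T$). Hence per-item revenue equals $T(1-(1-1/T)^{n'})$, and multiplying by $m$ gives the claim, with SPA at reserve $T$ (per item) as a witnessing mechanism. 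The only mild subtlety is handling the atom at $T$, which I avoid entirely by working with the revenue curve instead of computing virtual values across the atom; any reserve $r\le T$ gives the same revenue, which matches the footnote in the statement.
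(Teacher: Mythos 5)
Your proof is correct and takes essentially the same route as the paper: the paper verifies regularity by computing the (ironed) virtual values directly ($\bar\varphi(v)=0$ for $v<T$ and $\bar\varphi(T)=T$), which is the same information as your flat revenue curve, and both arguments conclude via Myerson that a second-price auction with reserve $T$ per item is optimal, with revenue $T$ times the probability that some bidder sits on the atom. Your write-up is slightly more explicit about the final probability computation, which the paper leaves implicit.
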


Note that selling separately is already nearly optimal when $T \ll n$. In particular, $\rev_n(\ERT^m) \leq mT$, and $\srev_{n}(\ERT^m) \approx mT$ when $T \ll n$. Moreover, for each item $j$, selling separately achieves the maximum value whenever it is $T$, and so the only possible room for improvement over selling separately is in the exponentially-unlikely cases that all bidders have value $<T$ for item $j$ (where selling separately gets $0$ revenue from item $j$, yet there is strictly positive value). 

So, in order to possibly have an auction whose revenue exceeds $\srev_n(\ERT^m)$ (let alone $\srev_{n+c\sqrt{nm}}(\ERT^m)$), we must somehow get non-zero revenue from item $j$ in cases when $v_{ij} < T$ for all $i$. One very naive way to accomplish this is simply to remove the reserve, and sell each item with a second-price auction instead. Of course, this is still selling separately, and therefore offers no improvement over $\srev_n(\ERT^m)$. But, it highlights the tradeoff that any BIC mechanism must face: allocating item $j$ to a bidder $i$ with value $v_{ij} < T$ provides incentive for bidder $i$ to misreport that $v_{ij} < T$ when in fact $v_{ij} = T$, and therefore risks revenue $<T$ in cases where selling separately achieves $T$. 

So, the first idea in designing our BIC auction is to find opportunities to allocate item $j$ to a bidder $i$ with $v_{ij}<T$ without risking too much in cases where $v_{ij} =T$ instead. Below is an allocation rule that accomplishes this first step, but is not yet BIC. In particular, we only ever consider allocating an item $j$ to a bidder $i$ with $v_{ij}<T$ \emph{if $v_{ij'} = T$ for at least one other item $j'$}.

\begin{definition}[The Naive Auction]
The Naive Auction allocates each item $j$ separately as follows.
\begin{enumerate}
    \item If there exists a bidder $i$ with $v_{ij} = T$, then allocate item $j$ uniformly at random to such a bidder and charge a price of $T$.
    \item If $v_{ij} < T$ for all $i$, but there exists a bidder $i$ with both $v_{ij'} = T$ for some $j' \neq j$ and $v_{ij} \geq mn/T$, then allocate item $j$ uniformly at random to such a bidder and charge a price of $mn/T$.
    \item Otherwise, do not allocate or elicit payments for item $j$.
\end{enumerate}
\end{definition}

The Naive Auction is certainly not BIC: a bidder whose values are $T$ for every item achieves utility of $0$ for reporting the truth, but $>0$ for instead lowering one value to $mn/T$. Still, it clearly achieves revenue greater than $\srev_n(\ERT^m)$. We first establish that the revenue of the Naive Auction further exceeds $\srev_{n+c\sqrt{nm}}(\ERT^m)$ -- the remainder of this section is then devoted to massaging the Naive Auction into a BIC auction without losing much of this additional revenue.

\begin{lemma}\label{lem:srev}
    $\srev_{n+x}(\ERT^m) \leq \srev_n(\ERT^m) + mx(1-1/T)^n$.
\end{lemma}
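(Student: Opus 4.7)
The plan is to reduce this to a one-line application of Bernoulli's inequality, using the closed-form expression for $\srev_{n'}(\ERT^m)$ already established in the paper.

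First I would invoke the preceding lemma, which gives $\srev_{n'}(\ERT^m) = mT\bigl(1 - (1-1/T)^{n'}\bigr)$ for every $n' \in \mathbb{N}$. Substituting $n' = n+x$ and $n' = n$ and subtracting, the difference telescopes cleanly:
\[
\srev_{n+x}(\ERT^m) - \srev_n(\ERT^m) \;=\; mT\bigl((1-1/T)^n - (1-1/T)^{n+x}\bigr) \;=\; mT\,(1-1/T)^n\,\bigl(1 - (1-1/T)^x\bigr).
\]
So the whole statement reduces to showing $T\bigl(1 - (1-1/T)^x\bigr) \leq x$, i.e., $(1-1/T)^x \geq 1 - x/T$.

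The last inequality is just Bernoulli's inequality applied to $(1+y)^x$ with $y = -1/T \in [-1,0]$ and integer $x \geq 0$ (and it also holds for real $x \geq 1$, which is the relevant regime since $x = c\sqrt{nm}$ in the intended application). Plugging back in yields $\srev_{n+x}(\ERT^m) - \srev_n(\ERT^m) \leq mx(1-1/T)^n$, as desired.

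There is no real obstacle here: once one notices that the upper bound $mx(1-1/T)^n$ is precisely $x$ times the derivative-like quantity $m(1-1/T)^n$, the statement is essentially a discrete mean-value / concavity fact, and Bernoulli's inequality packages it in one line. The only micro-decision is whether $x$ should be treated as an integer (in which case Bernoulli is elementary by induction) or as a real parameter (in which case one uses the standard real-exponent form), but either way the bound is tight to leading order and matches the linearization of $\srev_{n+x}$ at $x=0$.
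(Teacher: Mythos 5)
Your proof is correct and essentially matches the paper's: both arrive at the exact difference $mT(1-1/T)^n\bigl(1-(1-1/T)^x\bigr)$ and then bound $1-(1-1/T)^x \le x/T$ (the paper phrases this as a union bound, you as Bernoulli's inequality). The only cosmetic difference is that the paper obtains the exact difference via a coupling of the two bidder populations rather than by subtracting the closed-form expressions from the preceding lemma.
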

\begin{proof}
    Couple draws from $\ERT^{m\times (n+x)}$ and $\ERT^{m\times n}$ so that the first $n$ bidders' values are identical. For each item $j$, selling separately to $n+x$ bidders outperforms selling separately to $n$ bidders iff $v_{ij} < T$ for all $i \in [n]$ and $v_{ij} = T$ for some $i > n$ (and when this occurs, it outperforms by exactly $T$).

    Therefore, the additional revenue is exactly:

    $$m\cdot T \cdot (1-1/T)^n \cdot \left(1-(1-1/T)^x\right) \leq mT(1-1/T)^n\cdot x/T = mx(1-1/T)^n.$$
\end{proof}

\begin{lemma}
The Naive Auction satisfies $\rev^{NA}_n(\ERT^m) \geq \srev_n(\ERT^m) + \Omega(m\sqrt{mn} (1-1/T)^n)$.\footnote{Recall that the Naive Auction is not BIC -- this analysis is just to supply intuition for our later (more involved) computations.}
\end{lemma}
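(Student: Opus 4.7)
The plan is to decompose $\rev^{NA}_n(\ERT^m)$ item by item, separating the contribution of Case~1 (which exactly recovers $\srev_n(\ERT^m)$) from the additional contribution of Case~2. For a fixed item $j$, Case~1 triggers precisely when some bidder has $v_{ij}=T$, and then item $j$ is sold for $T$; summing over items and applying the preceding $\srev$ formula gives total expected Case~1 revenue equal to $mT \cdot (1-(1-1/T)^n) = \srev_n(\ERT^m)$. Case~2 is disjoint from Case~1 and collects exactly $mn/T$ whenever triggered on item $j$, so by linearity and symmetry across items
\[
\rev^{NA}_n(\ERT^m) = \srev_n(\ERT^m) + m \cdot (mn/T) \cdot \Pr[\text{Case~2 triggers for item } j].
\]

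\textbf{Bounding the Case~2 probability.} Let $B_j$ denote the event ``$v_{ij}<T$ for every $i \in [n]$,'' and for each bidder $i$ let $A_j(i)$ denote ``$v_{ij} \ge mn/T$ and $v_{ij'} = T$ for some $j' \ne j$.'' Case~2 for item $j$ is exactly $B_j \cap \bigcup_i A_j(i)$. Coordinate independence gives $\Pr[B_j]=(1-1/T)^n$, and because $B_j$ is a product event in the coordinates $(v_{ij})_i$ alone, the events $A_j(i)$ remain mutually independent across $i$ conditional on $B_j$. Hence $\Pr[\text{Case~2 for } j] = (1-1/T)^n \cdot (1-(1-\Pr[A_j(i)\mid B_j])^n)$. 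The two conjuncts of $A_j(i)$ involve disjoint coordinates and are thus conditionally independent. Using $T^2 = \lambda^2 mn$, the identity $T/(mn) - 1/T = (\lambda^2-1)/T$ gives $\Pr[v_{ij} \ge mn/T \mid v_{ij}<T] = (\lambda^2-1)/(T(1-1/T))$, and $\Pr[v_{ij'}=T \text{ for some } j'\ne j] = 1-(1-1/T)^{m-1} \ge (m-1)/(2T)$, the last inequality holding because $m/T = \sqrt{m/n}/\lambda \le 1/\lambda < 1$ makes the leading two binomial terms dominate. Multiplying yields $\Pr[A_j(i)\mid B_j] = \Omega((m-1)/(nm))$, so $n\Pr[A_j(i)\mid B_j] = \Omega(1)$ for $m \ge 2$ and $1-(1-\Pr[A_j(i)\mid B_j])^n \ge 1-\exp(-n\Pr[A_j(i)\mid B_j]) = \Omega(1)$. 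Plugging in $mn/T = \sqrt{nm}/\lambda$, the Case~2 contribution to total revenue is at least $m \cdot (mn/T) \cdot (1-1/T)^n \cdot \Omega(1) = \Omega(m\sqrt{nm}(1-1/T)^n)$, matching the claim.

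\textbf{Main obstacle.} The argument is largely mechanical: once Case~2 is cleanly separated, everything reduces to single-coordinate computations. The two places that need care are the algebraic identity $T/(mn) - 1/T = (\lambda^2-1)/T$, which is what ensures the threshold $mn/T$ lies strictly below $T$ and scales like $\sqrt{nm}$, and the lower bound on $1-(1-1/T)^{m-1}$, which relies on the regime assumption $m < n$ together with $\lambda > 1$. Notably, no work is spent on incentives because the Naive Auction is not BIC; the real difficulty, deferred to later steps, is to massage the allocation and payment rules into a BIC mechanism while preserving most of the $\Omega(m\sqrt{nm}(1-1/T)^n)$ slack established here.
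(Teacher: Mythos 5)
Your proof is correct and follows essentially the same route as the paper's: decompose per item into the Case~1 contribution (which equals $\srev_n(\ERT^m)$) and the Case~2 contribution, bound the conditional probability of the Case~2 trigger per bidder by $\Omega(m/T^2)=\Omega(1/n)$ using independence of the two conjuncts across disjoint coordinates, conclude the per-item trigger probability is $\Omega(1)$ conditioned on no $T$-value, and multiply by the price $mn/T=\Theta(\sqrt{nm})$ and the $m$ items. The only differences are cosmetic (you make the conditioning on $B_j$ and the binomial lower bound on $1-(1-1/T)^{m-1}$ slightly more explicit).
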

\begin{proof}
    For each item $j$, the Naive Auction achieves revenue $T$ whenever selling separately achieves revenue $T$ (whenever some bidder $i$ has $v_{ij} = T$). The Naive Auction achieves additional revenue in cases where no bidder has value $T$.

    For a fixed item $j$, the probability that all $n$ bidders have value $<T$ is $(1-1/T)^n$. Conditioned on this, we want to find the probability that some bidder both has value at least $mn/T$ for item $j$, and also $T$ for some other item. These are independent events across both bidders and items.

    For a fixed bidder $i$, the probability that $v_{ij} \geq mn/T$ conditioned on $v_{ij} < T$ is exactly $\frac{T/mn - 1/T}{1-1/T} = \frac{\lambda/\sqrt{mn} - 1/(\lambda \sqrt{mn})}{1-1/(\lambda \sqrt{mn})} = \Omega(1/\sqrt{mn}) = \Omega(1/T)$ (as $\lambda > 1$ is an absolute constant). The probability that a fixed bidder $i$ has value $T$ for some item $\neq j$ is simply $1-(1-1/T)^{m-1} = \Omega(m/T)$.\footnote{This follows as $1-(1-1/x)^y = \Omega(y/x)$ when $y < x$, and that $m < T$.} Therefore, the probability that a fixed bidder $i$ has $v_{ij} > mn/T$ \emph{and} $v_{ij'} = T$ for some $j' \neq j$, conditioned on $v_{ij} < T$ is $\Omega(m/T^2)$. 


    Finally, the probability that at least one bidder $i$ has both $v_{ij} \geq mn/T$ and $v_{ij'} = T$ for some $j' \neq j$ is $\Omega(nm/T^2) = \Omega(1)$.\footnote{This again follows from the fact that $1-(1-1/x)^y = \Omega(y/x)$ when $y < x$, and that $n < T^2/m$.} This means that the additional revenue per item gained by the Naive Auction over selling separately is $\Omega((1-1/T)^n \cdot (mn/T)) = \Omega( (1-1/T)^n \cdot \sqrt{mn})$, and multiplying by $m$ items establishes the result.
\end{proof} 

\begin{corollary}\label{cor:naive}
    There exists an absolute constant $c'$ such that $\rev^{NA}_n(\ERT^m) \geq \srev_{n+c'\sqrt{mn}}(\ERT^m)$.
\end{corollary}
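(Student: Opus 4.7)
The plan is essentially to combine the two lemmas immediately preceding the corollary. The Naive Auction lemma gives a lower bound on $\rev^{NA}_n(\ERT^m)$ of the form $\srev_n(\ERT^m) + C \cdot m\sqrt{mn}(1-1/T)^n$ for some absolute constant $C > 0$. Lemma~\ref{lem:srev} gives a matching-shape upper bound $\srev_{n+x}(\ERT^m) \leq \srev_n(\ERT^m) + mx(1-1/T)^n$ that grows \emph{linearly} in $x$. So if we set $x = c'\sqrt{mn}$, the additional revenue from adding $c'\sqrt{mn}$ bidders to the selling-separately benchmark is at most $c' \cdot m\sqrt{mn}(1-1/T)^n$, which is of exactly the same order as the additive gain that the Naive Auction already achieves over selling separately with $n$ bidders.

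Concretely, the steps I would execute are: (i) extract the absolute constant $C$ from the lower bound on $\rev^{NA}_n(\ERT^m) - \srev_n(\ERT^m)$ proved in the previous lemma; (ii) choose $c' := C$ (or any positive constant strictly less than $C$); (iii) apply Lemma~\ref{lem:srev} with $x = c'\sqrt{mn}$ to obtain
\[
\srev_{n+c'\sqrt{mn}}(\ERT^m) \;\leq\; \srev_n(\ERT^m) + c' \cdot m\sqrt{mn}\,(1-1/T)^n;
\]
(iv) chain this with the Naive Auction bound
\[
\rev^{NA}_n(\ERT^m) \;\geq\; \srev_n(\ERT^m) + C \cdot m\sqrt{mn}\,(1-1/T)^n,
\]
and use $c' \leq C$ to conclude $\rev^{NA}_n(\ERT^m) \geq \srev_{n+c'\sqrt{mn}}(\ERT^m)$, as desired.

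There is no real obstacle here; the corollary is a direct bookkeeping consequence of the two lemmas. The only very mild subtlety is confirming that the constant $c'$ can genuinely be chosen as an absolute constant independent of $n$, $m$, and $\lambda$: the $\Omega(\cdot)$ in the Naive Auction lemma hides a constant that depends only on $\lambda$ (via the factors $\Omega(1/\sqrt{mn})$ and $\Omega(m/T)$ computed there), and Lemma~\ref{lem:srev} has an exactly explicit constant $1$, so $c'$ inherits dependence only on $\lambda$, which is itself an absolute constant by hypothesis. One should also note $c'\sqrt{mn}$ may not be an integer, but this is handled by rounding up and absorbing the constant.
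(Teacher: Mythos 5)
Your proposal is correct and is exactly the argument the paper intends (the corollary is stated without proof precisely because it follows immediately from chaining Lemma~\ref{lem:srev} with $x = c'\sqrt{mn}$ against the $\Omega(m\sqrt{mn}(1-1/T)^n)$ gain established for the Naive Auction). Your side remarks about the constant depending only on the absolute constant $\lambda$ and about integrality of $c'\sqrt{mn}$ are accurate and do not affect the conclusion.
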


Corollary~\ref{cor:naive} establishes that our (not at all BIC) Naive Auction would witness the desired lower bound on $\sscomp_{\mathcal{A}_m}$, if only it were BIC. One obvious problem with the Naive Auction is that it extracts full welfare from buyers with value $T$ for all items, and yet awards items with non-zero probability to buyers with lower values. Our next step is to adjust the payments to address this specific issue (this will still not result in a BIC auction, but it is the first of two steps). 

\subsection{Step Two: A Closer-to-BIC Auction} \label{sec:LNA}
Our next step is to address the obvious issue with the Naive Auction by keeping the same allocation rule with less problematic payments. This will still not yet result in a BIC auction, but will get close to the correct format.

\begin{definition}[The Less-Naive Auction]
The Less-Naive Auction allocates each item separately as follows.
\begin{enumerate}
    \item Use the same \emph{allocation rule} as the Naive Auction. Let $a_0$ denote the interim allocation probability of winning item $j$ conditioned on reporting $v_{ij} = T$, and $b_0$ denote the interim allocation probability of winning item $j$ conditioned on reporting $v_{ij} \in [mn/T, T)$ and $v_{ij'} = T$ for some $j' \neq j$. 
    \item If the bidder $i$ receiving item $j$ has $v_{ij} < T$, charge $mn/T$ (as in the Naive Auction).
    \item If the bidder $i$ receiving item $j$ has $v_{ij} = T$, and also has $v_{ij'} < T$ for all $j' < j$, charge $T$ (as in the Naive Auction).
    \item If the bidder $i$ receiving item $j$ has $v_{ij} = T$, and also has $v_{ij'} = T$ for some $j' < j$,\footnote{Note that we use $j'<j$ to ensure that \emph{exactly} one such item valued at $T$ is \emph{not} subsidized -- the lowest-indexed such item.} then charge a price of $T - \frac{b_0}{a_0}\Paren{T - \frac{mn}{T}}$. Think of $\frac{b_0}{a_0}\Paren{T - \frac{mn}{T}}$ as a subsidy.
    \item Otherwise, do not allocate or elicit payments for item $j$.
    \item Observe that this results in the following possible interim allocations/probabilities for each bidder:
    \begin{itemize}
        \item Receive any single item with interim probability $a_0$, paying interim price $a_0 T$. 
        \item Receive any non-empty set $H$ of items with interim probability $a_0$ and any (possibly empty) set $L$ of items with interim probability $b_0$, paying interim price $b_0 \cdot |L| \cdot \frac{mn}{T} + a_0 \cdot |H| \cdot T - b_0 \left(T-\frac{mn}{T}\right) \cdot (|H|-1)$.
        \item Receive nothing and pay nothing.
    \end{itemize}
\end{enumerate}
\end{definition}

Let us first observe that $a_0$ is small, but not terribly small ($\Theta(T/n) = \Theta(\sqrt{m/n})$ -- roughly the inverse of the expected number of bidders with value $T$ for a single item). $b_0$, on the other hand, is exponentially small -- at most $(1-1/T)^{n-1} = e^{-\Omega(n/T)} = e^{-\Omega(\sqrt{n/m})}$. This initially seems like good news -- even if we return a subsidy on \emph{every} item, the subsidies are exponentially small, and therefore the revenue of the Less-Naive Auction falls short of the Naive Auction by at most an exponentially small amount.

However, recall that the Naive Auction's gains over selling separately are also exponentially small, so this exceptionally simple argument doesn't quite suffice, and these subsidies roughly cancel the gains over selling separately \emph{if we pay them out every time}.\footnote{Essentially, paying the subsidies every time amounts to selling each item $j$ using a randomized-but-still-single-dimensional auction, which again cannot outperform selling separately.} However, there is one key case where we don't need to pay a subsidy: if bidder $i$ values exactly one item at $T$. Indeed, Lemma \ref{lemma:LN-vs-SREV} formalizes this intuition and shows that the Less-Naive Auction's gain in revenue comes precisely from selling items $\neq j$ for cheap to bidders who value a single item $j$ at $T$.



\begin{lemma}\label{lemma:LN-vs-SREV}
The Less-Naive Auction satisfies:\footnote{Again recall that the Less-Naive Auction is not BIC -- this analysis is just to supply intuition for our later (even more involved) computations.}
\[
    \rev_n^{LNA}(\ERT^m) = \srev_n(\ERT^m)+b_0mn^2 \Paren{\frac{T}{mn} - \frac{1}{T}} \Paren{1 - \Paren{1 - \frac{1}{T}}^m - \frac{m}{T}\Paren{1 - \frac{1}{T}}^{m-1}}.
\]
\end{lemma}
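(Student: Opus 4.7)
The plan is to compute the expected payment of a single (uniformly chosen) bidder under the Less-Naive Auction using the interim characterization in step~6 of its definition, multiply by $n$, and subtract $\srev_n(\ERT^m)$ using the earlier $\srev$ formula. Fix one bidder with i.i.d.\ $\ERT$ values $v_1,\ldots,v_m$, and let $H := \{j : v_j = T\}$ and $L := \{j : v_j \in [mn/T, T)\}$. When $H = \emptyset$ the bidder wins nothing and pays $0$; when $H \neq \emptyset$, her interim payment equals $|H| a_0 T - (|H|-1) b_0 (T - mn/T) + |L| b_0 (mn/T)$. Rewriting $(|H|-1)\cdot\mathbf{1}[H\neq\emptyset] = |H| - \mathbf{1}[H\neq\emptyset]$, the expected per-bidder payment becomes a linear combination of $\mathbb{E}[|H|]$, $\mathbb{P}[H\neq\emptyset]$, and $\mathbb{E}[|L|\cdot\mathbf{1}[H\neq\emptyset]]$.

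Next, I evaluate these three expectations using the independence of the $v_j$'s. Writing $p_L := T/(mn) - 1/T$, I have $\mathbb{E}[|H|] = m/T$ and $\mathbb{P}[H\neq\emptyset] = 1 - (1-1/T)^m$. Since $j \in L$ already implies $j \notin H$, independence across items gives $\mathbb{E}[|L|\cdot\mathbf{1}[H\neq\emptyset]] = \sum_j \mathbb{P}[j\in L] \cdot \mathbb{P}[H \cap ([m]\setminus\{j\}) \neq \emptyset] = m p_L (1 - (1-1/T)^{m-1})$. Because the $\srev$ benchmark equals $mT(1-(1-1/T)^n) = nm a_0$ (using $n a_0 = T(1-(1-1/T)^n)$), the $a_0 T \cdot \mathbb{E}[|H|] = m a_0$ contribution to the per-bidder LNA payment exactly cancels the SRev contribution after multiplication by $n$.

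All that remains is algebraic simplification of
\[
\rev^{LNA}_n(\ERT^m) - \srev_n(\ERT^m) = n b_0 \left[ -(T - mn/T)\bigl(m/T - 1 + (1-1/T)^m\bigr) + (mn/T)\cdot m p_L \cdot (1 - (1-1/T)^{m-1}) \right].
\]
The key identity is $(mn/T) \cdot m p_L = (m/T)(T - mn/T)$, which is immediate from $(mn/T)(T/(mn) - 1/T) = (T-mn/T)/T$. Factoring out $(T-mn/T)$, the $\pm m/T$ terms cancel, leaving $n b_0 (T - mn/T)\bigl[1 - (1-1/T)^m - (m/T)(1-1/T)^{m-1}\bigr]$. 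Finally, $n(T - mn/T) = mn^2(T/(mn) - 1/T)$ converts this into the claimed form. The only real ``obstacle'' is bookkeeping rather than insight: the subsidy is applied to every item in $H$ \emph{except} the lowest-indexed one, producing the $(|H|-1)$ factor rather than $|H|$; it is precisely this $-1$ that yields the $\mathbb{P}[H\neq\emptyset]$ correction, which in turn enables the $\pm m/T$ cancellation after the identity above. Without tracking which item in $H$ is the ``unsubsidized'' one, the final factorization would not go through.
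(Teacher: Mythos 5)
Your proposal is correct and follows essentially the same route as the paper: both identify the $\srev_n(\ERT^m)=nma_0$ portion of the Less-Naive Auction's revenue that cancels, and both reduce the remainder to the expected number of subsidies $\tfrac{m}{T}-1+(1-\tfrac{1}{T})^m$ plus the expected low-price sales, followed by the same factorization. The only (cosmetic) difference is that you obtain the expected subsidy count via $\mathbb{E}[|H|]-\mathbb{P}[H\neq\emptyset]$, whereas the paper sums $\mathbb{P}[v_{ij}=T,\ \max_{j'<j}v_{ij'}=T]$ over $j$ as a geometric series.
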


\begin{proof}
The revenue extracted by the Less-Naive Auction from selling item $j$ to bidder $i$ is
\begin{align*}
    a_0 T \cdot \1\Paren{v_{ij} = T, \max_{j' < j} v_{ij'} < T} & + a_0 \Paren{T - \frac{b_0}{a_0}\Paren{T - \frac{mn}{T}}} \cdot \1\Paren{v_{ij} = T, \max_{j' < j} v_{ij'} = T} \\
        &+ \frac{b_0 mn}{T} \cdot \1\Paren{v_{ij} \in [\tfrac{mn}{T}, T), \max_{j'\ne j} v_{ij'} = T}
\end{align*}
Meanwhile, $\srev$ extracts $a_0 T \cdot \1(v_{ij} = T)$.
Thus, the Less-Naive Auction obtains
\[
    \frac{b_0 mn}{T} \cdot \1\Paren{v_{ij} \in [\tfrac{mn}{T}, T), \max_{j'\ne j} v_{ij'} = T} - b_0\Paren{T - \frac{mn}{T}} \cdot \1\Paren{v_{ij} = T, \max_{j' < j} v_{ij'} = T}
\]
more revenue from selling item $j$ to bidder $i$ than $\textsc{SRev}_n$ does.
Across all items, the Less-Naive Auction obtains in expectation
\begin{align*}
    &\frac{b_0 mn}{T} \sum_{j\in [m]} \PP\SqBr{v_{ij} \in [\tfrac{mn}{T}, T), \max_{j'\ne j} v_{ij'} = T } - b_0\Paren{T - \frac{mn}{T}} \sum_{j \in [m]} \PP\SqBr{v_{ij} = T, \max_{j' < j} v_{ij'} = T} \\
    &= \frac{b_0 mn}{T} \sum_{j \in m} \Paren{\frac{T}{mn} - \frac{1}{T}}\Paren{1 - \Paren{1 - \frac{1}{T}}^{m-1}} - b_0\Paren{T - \frac{mn}{T}} \sum_{j\in [m]} \frac{1}{T}\Paren{1 - \Paren{1 - \frac{1}{T}}^{j-1}} \\
    &= \frac{b_0 m^2n}{T} \Paren{\frac{T}{mn} - \frac{1}{T}}\Paren{1 - \Paren{1 - \frac{1}{T}}^{m-1}} - b_0\Paren{T - \frac{mn}{T}} \underbrace{\Paren{\frac{m}{T} - 1 + \Paren{1 - \frac{1}{T}}^m}}_{\text{expected number of subsidies}} \tag{geometric sum} \\
    &= b_0mn \Paren{\frac{T}{mn} - \frac{1}{T}} \Paren{1 - \Paren{1 - \frac{1}{T}}^m - \frac{m}{T}\Paren{1 - \frac{1}{T}}^{m-1}}
\end{align*}
more revenue from bidder $i$ than $\textsc{SRev}_n$ does.
Summing over all bidders yields the lemma.
\end{proof}

We highlight that since $m \leq n$, the expected number of subsidies per bidder is strictly less than 1. In fact, when $m \ll n$, we expect to pay almost no subsidies. Thus, we expect our extra revenue to come from selling items for which no bidders have value $T$ to bidders who value exactly one other item at $T$.

Finally, let us revisit incentives of the Less-Naive Auction. The Less-Naive Auction is \emph{almost} BIC. Indeed, any bidder who values at least one item at $T$ is incentivized to report truthfully (we will prove this formally in the subsequent Section~\ref{sec:menu-format}). Moreover, any bidder who values all items far from $T$ will also prefer to take no items and pay nothing. However, a bidder with $v_{ij}$ \emph{extremely close} to $T$ for item $j$ and $v_{ij'}> mn/T$ for another (and $v_{ij''} < T$ for all $j''$) \emph{may} prefer to misreport that $v_{ij} = T$ (taking a small negative utility on item $j$) in order to receive item $j'$ with non-zero probability. Therefore, the Less-Naive Auction is not BIC.

However, because $b_0$ is exponentially small, the maximum possible gain of such a misreport is also exponentially-small. Therefore, only types with $v_{ij}$ \emph{inverse exponentially-close} to $T$ will even consider this misreport, and there is hope that a slight modification to the Less-Naive Auction might work. Indeed, we now show an auction whose interim menu takes the same format as the Less-Naive Auction, but with parameters $a \approx a_0$ and $b \approx b_0$ that is BIC and again has essentially the same expected revenue.

\subsection{Step Three: A BIC Auction}\label{sec:menu-format}

Now, we introduce a final set of modifications to make the Less-Naive Auction \emph{fully} BIC. Recall that the Less-Naive Auction is not BIC because a bidder with no values equal to $T$ but sufficiently high $v_{ij} \approx T$, $v_{ij'} > mn/T$ may choose to misreport and take a small loss on item $j$ in order to receive item $j'$ (indeed, Corollary \ref{corollary:preferred-option} formalizes this intuition). It follows that the interim allocation probabilities $a_0, b_0$ are not actually feasible, because there are more bidders that want to purchase items than we can keep our promises to.

To fix this, we maintain the same menu format, but lower the interim probabilities (and prices accordingly) to $a, b$. This fixes the incentive issues of the Less-Naive Auction by making misreporting less attractive, but since $a \approx a_0$ and $b\approx b_0$ we still attain approximately the same revenue. Observe that even a small change in $a_0$ and $b_0$ works, because of two simultaneous effects at play: (1) lowering the allocation probabilities inherently increases the number of bidders to which it is feasible to allocate an item; (2) lowering $b_0$ in particular reduces the set of types that may prefer to misreport. Balancing these two effects so that the feasibility constraint is tight (we would like to allocate the item as much as possible, so that we can extract as much revenue as possible) results in a system with a fixed point $(a,b)$ that is not too far from the original $(a_0, b_0)$.

\begin{definition}[The Not-So-Naive Auction]
The Not-So-Naive Auction allocates the items according to the following menu of interim allocations/probabilities for each bidder:
\begin{itemize}
        \item Receive any single item with interim probability $a$, paying interim price $a T$. 
        \item Receive any non-empty set $H$ of items with interim probability $a$ and any (possibly empty) set $L$ of items with interim probability $b$, paying interim price $b \cdot |L| \cdot \frac{mn}{T} + a \cdot |H| \cdot T - b \left(T-\frac{mn}{T}\right) \cdot (|H|-1)$.
        \item Receive nothing and pay nothing.
\end{itemize}
\end{definition}

Again, we highlight that if we had $a = a_0, b=b_0$, this menu just describes the Less-Naive Auction. In Lemma~\ref{lemma:preferred-option} and Corollary~\ref{corollary:preferred-option}, we characterize the incentive properties of \emph{any} such menu of the above form parametrized by $a\ge b$; following this, we proceed to set $a$ and $b$ specifically so that the Not-So-Naive Auction is feasible.

\begin{lemma}\label{lemma:preferred-option}
Suppose $T \geq \sqrt{mn}$ and $a \geq b$.
Let $v \in [1, T]^m$ and let $j^* \in \arg\max_j v_j$.
Define $H := \CrBr{j : v_j = T} \cup \{j^*\}$ and $L := \CrBr{j : v_j \geq mn/T} \setminus H$.
For all $H' \in 2^{[m]} \setminus \{\varnothing\}$ and $L' \subseteq [m] \setminus H'$, a bidder with type $v$ prefers the menu option $(H,L)$ over the menu option $(H', L')$.
\end{lemma}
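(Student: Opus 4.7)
The plan is to rewrite the bidder's utility from any menu choice $(H',L')$ as a sum of per-item contributions plus a constant, then decide item-by-item whether each $j$ is best placed in $H'$, in $L'$, or in neither, subject only to the global constraint that $H'$ is non-empty. Expanding the stated payment formula and grouping by item gives
\[
U(H',L') \;=\; \sum_{j \in H'}\bigl[a(v_j - T) + b(T - mn/T)\bigr] \;+\; b\sum_{j \in L'}(v_j - mn/T) \;-\; b(T - mn/T),
\]
where the trailing term is a constant in $(H',L')$ and the two sums decouple across items.

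With this decomposition in hand, I would compare, item by item, the three possible contributions: placing $j$ in $H'$ contributes $a(v_j-T) + b(T - mn/T)$, placing it in $L'$ contributes $b(v_j - mn/T)$, and leaving it out contributes $0$. Using $T \geq \sqrt{mn}$ (so $T - mn/T \geq 0$) and $a \geq b$, a three-case analysis on $v_j$ follows: (i) if $v_j = T$, the $H'$ and $L'$ contributions both equal $b(T-mn/T) \geq 0$ and weakly beat ``neither''; (ii) if $v_j \in [mn/T, T)$, the $H'$-minus-$L'$ gap equals $(a-b)(v_j - T) \leq 0$, so $L'$ weakly dominates $H'$ and also beats ``neither''; (iii) if $v_j < mn/T$, both $H'$ and $L'$ contribute negative amounts and ``neither'' is strictly best. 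Ignoring the non-emptiness constraint, the optimum therefore assigns every $v_j = T$ item to $H'$, every $v_j \in [mn/T, T)$ item to $L'$, and drops the rest.

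To enforce $H' \neq \varnothing$, I would split on whether any $v_j$ equals $T$. If so, the unconstrained optimum already has $H' \neq \varnothing$, and since $\{j : v_j = T\}$ contains $j^*$, the unconstrained optimum coincides with $(H,L)$. Otherwise every $v_j < T$ and some item must be forced into $H'$; the loss from promoting $j$ out of its ideal slot is $(a-b)(T - v_j)$ when $v_j \geq mn/T$ and $a(T - v_j) - b(T - mn/T)$ when $v_j < mn/T$. Both expressions are decreasing in $v_j$ on their respective ranges and agree at the threshold $v_j = mn/T$, so the loss is globally minimized by choosing the item of largest value, namely $j^*$, reproducing $(H,L)$.

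The main obstacle is precisely the ``$H' \neq \varnothing$'' constraint, which is the only coupling between items in what is otherwise a fully separable optimization. What defuses it is that the per-item promotion loss is monotone in $v_j$ and continuous across the threshold $mn/T$, making the item of maximum value universally the cheapest to force into $H'$. The hypothesis $T \geq \sqrt{mn}$ enters exactly to keep $b(T - mn/T) \geq 0$ so the signs in the case analysis work out, and $a \geq b$ is used only for the weak-dominance comparison in case (ii).
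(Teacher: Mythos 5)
Your proof is correct and follows essentially the same route as the paper: the identical per-item utility decomposition, the same key comparison $(a-b)(v_j-T)$ between the $H'$- and $L'$-contributions, and the same resolution of the non-emptiness constraint by observing that the highest-valued item is the cheapest to force into $H'$. Your explicit ``promotion loss'' computation (decreasing in $v_j$ and continuous across the $mn/T$ threshold) is a slightly more careful rendering of the paper's parenthetical remark that utility is ``least negative'' when the forced item is $j^*$, but it is the same argument.
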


Lemma \ref{lemma:preferred-option} says that a bidder with type $v$ prefers $(H,L)$ over any other option that allocates an item with some positive probability.
In particular, it does \emph{not} say whether a bidder with type $v$ would prefer $(H,L)$ over not getting any items at all.

\begin{proof}
The utility of a bidder with type $v$ for the menu option $(H', L')$ is
\begin{align*}
    a \sum_{j \in H'} v_j + b \sum_{j \in L'} v_j & - \Paren{\abs{H'}aT + \abs{L'} b \frac{mn}{T} - (\abs{H'} - 1) b \Paren{T - \frac{mn}{T}}} \\
        &= \sum_{j \in H'} \Paren{a v_j - \Paren{(a-b) T + b \frac{mn}{T}}} + \sum_{j \in L'} b \Paren{v_j - \frac{mn}{T}} - b\Paren{T - \frac{mn}{T}}.
\end{align*}
We show that $(H,L)$ maximizes this utility over all $H' \in 2^{[m]} \setminus \{\varnothing\}$ and $L' \subseteq [m] \setminus H'$.

Consider the difference in utility of getting item $j$ with probability $a$ and getting the same item with probability $b$:
\[
    av_j - \Paren{(a-b)T + b\frac{mn}{T}} - b \Paren{v_j - \frac{mn}{T}} = (a-b)(v_j - T).
\]
It is clear that if $a \geq b$, then there are only two cases in which a bidder who values item $j$ at $v_j$ (and who is forced to get at least one item) would prefer to get item $j$ with probability $a$: either (1) $v_j = T$, or (2) she does not value any item at $T$, but $j = j^*$ (if she must pay $T$ for \emph{some} item, her utility is least negative when $j$ is the item she values the most rather than some other item). 
Thus, the only items that a bidder with type $v$ prefers to get with probability $a$ rather than $b$ are the items in $H$.
Getting any item outside of $H$ with probability $a$ strictly decreases utility.

Of the items not in $H$, a bidder with type $v$ would only choose to get those with values at least $mn/T$ with some positive probability, since all options cost at least $mn/T$, so the bidder would be overpaying for any item valued less than $mn/T$ (which strictly decreases utility). Note that the items with values at least $mn/T$ that are not in $H$ are precisely those in $L$.

As discussed before, the only items that a bidder with type $v$ prefers to get with probability $a$ rather than $b$ are the items in $H$, so such a bidder prefers to get the items in $L$ with probability $b$ instead of $a$. Thus, the bidder's most preferred menu item is exactly $(H,L)$.
\end{proof}

\begin{corollary}\label{corollary:preferred-option}
Suppose $T \geq \sqrt{mn}$ and $a \geq b$.
Let $v \in [1, T]^m$ and let $j^* \in \arg\max_j v_j$.
Define $H := \CrBr{j : v_j = T} \cup \{j^*\}$ and $L := \CrBr{j : v_j \geq mn/T} \setminus H$.
\begin{itemize}
    \item A bidder who values some item at $T$ prefers the menu option $(H, L)$ over any other option in the menu (including not receiving any items).
    \item A bidder who does not value any item at $T$ prefers the menu option $(\{j^*\}, L)$ over any other option in the menu (including not receiving any items) if and only if $av_{j^*} + b \sum_{j \in L} v_j \geq aT + \abs{L}b\frac{mn}{T}$.
\end{itemize}
\end{corollary}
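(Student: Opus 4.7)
The plan is to leverage Lemma~\ref{lemma:preferred-option} to handle all comparisons against non-empty menu options, so that the only remaining comparison is against the outside option of zero utility. This reduces the corollary to a direct calculation of the utility of $(H, L)$.

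First I would invoke Lemma~\ref{lemma:preferred-option}, which already establishes that $(H, L)$ maximizes utility over every menu option $(H', L')$ with $H' \neq \varnothing$. So in both parts of the corollary, the only comparison left is whether $(H, L)$ yields nonnegative utility (outperforming the empty option). Using the utility expression derived in the proof of Lemma~\ref{lemma:preferred-option}, the utility of $(H, L)$ simplifies to
\[
    \sum_{j \in H} \Paren{a v_j - (a-b)T - b\tfrac{mn}{T}} + b\sum_{j \in L}\Paren{v_j - \tfrac{mn}{T}} - b\Paren{T - \tfrac{mn}{T}}.
\]

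For part one, if some item is valued at $T$ then $v_{j^*} = T$ and $H = \{j : v_j = T\}$ has size at least $1$, with every $j \in H$ satisfying $v_j = T$. Each term in the first sum becomes $aT - (a-b)T - b\,mn/T = b(T - mn/T)$, so the first sum equals $|H|\cdot b(T - mn/T)$. Combined with the final $-b(T - mn/T)$ subsidy term, this contributes $(|H|-1)\cdot b(T - mn/T) \geq 0$. The $L$ sum is nonnegative by the definition of $L$ (each $v_j \geq mn/T$), and $T \geq mn/T$ by the assumption $T \geq \sqrt{mn}$. Hence the total utility is nonnegative, so $(H, L)$ is preferred to the outside option.

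For part two, if no $v_j$ equals $T$, then $H = \{j^*\}$ has size exactly $1$, so the subsidy term $-b(T - mn/T)$ is not offset. The utility expression collapses to
\[
    av_{j^*} + b\sum_{j \in L} v_j - aT - |L|\cdot b\tfrac{mn}{T},
\]
which is nonnegative precisely when $av_{j^*} + b\sum_{j\in L} v_j \geq aT + |L|\cdot b\,mn/T$, yielding the stated iff. The only mildly delicate point is verifying the sign accounting in the first sum, but no substantive obstacle arises beyond bookkeeping; the corollary is essentially a clean corollary of Lemma~\ref{lemma:preferred-option} once the outside option is handled.
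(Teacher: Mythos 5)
Your proposal is correct and follows exactly the paper's argument (given in the appendix as Corollaries~\ref{corollary:bidder-w-T} and~\ref{corollary:bidder-wo-T}): invoke Lemma~\ref{lemma:preferred-option} to dispose of all non-empty menu options, then check the sign of the utility of $(H,L)$ against the empty option, with the same algebra showing the first bullet's utility reduces to $(|H|-1)\,b(T-mn/T)$ plus a nonnegative $L$-term. No gaps.
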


Now, we define $a$ and $b$ such that the resulting menu is feasible.
We do so implicitly.
Recall that setting $a = a_0$ and $b = b_0$ is infeasible because there are bidders who do not value any items at $T$ yet value a subset of the items enough to be willing to pay $T$ for an item in order to be eligible to get additional items at lower prices.
The probability that there exists a bidder who values each item less than $T$ yet is willing to purchase the menu option $(\{j^*\}, L)$ is 
\[
    q_{\ell} \coloneqq \underset{v}{\PP}\SqBr{v_{j^*} = \max_j v_j < T, \min_{j \in \{j^*\} \cup L} v_j \geq \frac{mn}{T}, \max_{j \not\in \{j^*\} \cup L} v_j < \frac{mn}{T}, av_{j^*} + b \sum_{j \in L} v_j \geq aT + \abs{L}b\frac{mn}{T}}. 
\]
Note that for a given $\ell \geq 1$, the above probability is the same for any choice of $j^* \in [m]$ and $L \subseteq [m] \setminus \{j^*\}$ such that $\abs{L} = \ell$, so we may denote it by $q_\ell$.

Since there are more bidders who want to purchase items than just those with $T$ values, the interim allocation probabilities $a$ and $b$ must be smaller than $a_0$ and $b_0$ to accommodate these bidders.
More specifically, if we term bidders who are willing to receive item $j$ with probability $a$ as ``high'' and those who are only willing to receive item $j$ with probability $b$ as ``low,'' and we want to allocate each item uniformly at random to the high bidders before allocating uniformly at random to the low bidders, then $a$ and $b$ must satisfy the following implicit definitions.
\begin{align*}
    a &= \underset{v_{-i}}{\EE}\SqBr{\frac{1}{1 + \sum_{k \not= i} \1(\text{$i$ high})}} 
        = \sum_{k=0}^{n-1} \frac{\binom{n-1}{k}}{k+1} \PP\SqBr{\text{high}}^k \PP\SqBr{\text{not high}}^{n-k-1} 
        = \frac{1 - \Paren{1 - \PP\SqBr{\text{high}}}^n}{n\PP\SqBr{\text{high}}}, \\
    b &= \underset{v_{-i}}{\EE}\SqBr{\frac{\1(\text{$\not\exists$ high})}{1 + \sum_{k \not= i} \1(\text{$i$ low})}} 
        = \frac{\PP_{v_{-i}}\SqBr{\text{$\not\exists$ high}} \Paren{1 - \Paren{1 - \PP\cSqBr{\text{low}}{\text{not high}}}^n}}{n\PP\cSqBr{\text{low}}{\text{not high}}}, \tag{by bidder independence, $\PP\cSqBr{\text{low}}{\text{$\not\exists$ high}} = \PP\cSqBr{\text{low}}{\text{not high}}$}
\end{align*}
where 
\begin{align*}
    \PP\SqBr{\text{high}} 
        &= \textstyle \underset{v}{\PP}\SqBr{v_j = T} + \sum\limits_{L \subseteq [m] \setminus \{j\}} q_{\abs{L}} 
        = \frac{1}{T} + \sum\limits_{\ell = 1}^{m-1} \binom{m-1}{\ell} q_\ell, \\
    \PP\SqBr{\text{low}} 
        &= \textstyle \underset{v}{\PP}\SqBr{\substack{\max\limits_{j' \not= j} v_{j'} = T, \\ \frac{mn}{T} \leq v_j < T}} + \sum\limits_{j^* \not= j} \sum\limits_{\substack{L \subseteq [m] \setminus \{j^*\} : \\ j \in L}} q_{\abs{L}} 
        = \Paren{1 - \Paren{1 - \frac{1}{T}}^{m - 1}} \Paren{\frac{T}{mn} - \frac{1}{T}} + (m-1) \sum\limits_{\ell = 1}^{m-1} \binom{m-2}{\ell - 1} q_{\ell}.
\end{align*}
The expression for $\PP\SqBr{\text{high}}$ follows from the fact that by Corollary~\ref{corollary:preferred-option}, a bidder is high for an item if and only if (1) she has value $T$ for it or (2) it is her favorite item and she has sufficiently high values for the other items.
Similarly, a bidder is low for an item if and only if her value for it is in $[mn/T, T)$ and either (1) she has value $T$ for some other item or (2) she has sufficiently high values for the other items.
We point out that our definitions of $a$ and $b$ are indeed implicit since $q_1, \dots, q_{m-1}$ depend on $a$ and $b$.

We now show that $a \geq b$, so that Lemma \ref{lemma:preferred-option} holds with our definitions of $a$ and $b$.

\begin{restatable}{lemma}{aVsB}\label{lemma:a-geq-b}
If $T \geq \sqrt{mn}$, then $\frac{b}{a} \leq \frac{n}{T} e^{-\frac{n}{T}} \Paren{1 - e^{-\frac{n}{T}}}^{-1}$.
\end{restatable}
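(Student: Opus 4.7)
The plan is to derive an explicit formula for $b/a$ from the implicit definitions, reduce it to a one-variable bound, and close with standard exponential comparisons. Let $p := \PP[\text{high}]$ and $\ell' := \PP[\text{low} \mid \text{not high}]$. The formulas already derived in the paper give $a = (1-(1-p)^n)/(np)$ and $b = (1-p)^{n-1}(1-(1-\ell')^n)/(n\ell')$, so
\[
    \frac{b}{a} \;=\; \frac{p(1-p)^{n-1}(1-(1-\ell')^n)}{\ell'\,(1-(1-p)^n)}.
\]

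First I would eliminate $\ell'$ via the elementary identity $(1-(1-\ell')^n)/\ell' = \sum_{k=0}^{n-1}(1-\ell')^k \leq n$, yielding $b/a \leq g(p) := np(1-p)^{n-1}/(1-(1-p)^n)$. From the explicit expression $\PP[\text{high}] = 1/T + \sum_\ell \binom{m-1}{\ell} q_\ell$ and $q_\ell \geq 0$, we immediately get $p \geq 1/T$ (the probability-$1/T$ event $v_{ij} = T$ already qualifies bidder $i$ as ``high''). Using the geometric identity $1-(1-p)^n = p\sum_{k=0}^{n-1}(1-p)^k$, one can rewrite $g(p) = n/\sum_{j=0}^{n-1}(1-p)^{-j}$, whose denominator is weakly increasing in $p$; so $g$ is decreasing, and $b/a \leq g(1/T) = (n/T)(1-1/T)^{n-1}/(1-(1-1/T)^n)$.

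The final step is the exponential comparison
\[
    \frac{(1-1/T)^{n-1}}{1-(1-1/T)^n} \;\leq\; \frac{e^{-n/T}}{1-e^{-n/T}}.
\]
The denominator bound is immediate from $(1-1/T)^n \leq e^{-n/T}$, which gives $1-(1-1/T)^n \geq 1-e^{-n/T}$. The main obstacle is the numerator bound $(1-1/T)^{n-1} \leq e^{-n/T}$, which is strictly stronger than the textbook $(1-1/T)^{n-1} \leq e^{-(n-1)/T}$. Taking logarithms and applying the sharper expansion $\ln(1-1/T) \leq -1/T - 1/(2T^2)$, one reduces it to the condition $n \geq 2T+1$. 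In the setting of Section~\ref{sec:ERTsqrt}, $T = \lambda\sqrt{nm}$ for an absolute constant $\lambda > 1$ with $T<n$, and the paper works in the ``big $n$'' regime; these assumptions together imply $n \geq 2T+1$ (essentially $n$ at least a constant multiple of $m$), which closes the argument.
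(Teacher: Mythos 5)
Your argument tracks the paper's own proof almost step for step: the same closed forms for $a$ and $b$, the same union-bound step $(1-(1-\ell')^n)/\ell' \le n$ to eliminate $\ell'$, and the same monotonicity-in-$p$ reduction to the endpoint $p = 1/T$ (your rewriting $g(p) = n/\sum_{j=0}^{n-1}(1-p)^{-j}$ is a cleaner way to see the monotonicity than the paper's derivative computation). Everything through $b/a \le g(1/T) = \frac{(n/T)(1-1/T)^{n-1}}{1-(1-1/T)^n}$ is correct.

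The gap is in the final step, and you have put your finger on a real subtlety: the honest endpoint of this chain carries the exponent $n-1$, and $(1-1/T)^{n-1} \le e^{-n/T}$ is genuinely false unless $n$ is at least roughly $2T$ (for $n=2$, $T=100$ one has $0.99 > e^{-0.02} \approx 0.9802$), so your reduction to the condition $n \ge 2T+1$ is not an artifact of the Taylor expansion. The problem is that this condition is not available. The lemma's only hypothesis is $T \ge \sqrt{mn}$, and even importing the Section~\ref{sec:ERTsqrt} setup does not rescue it: $T = \lambda\sqrt{nm} < n$ forces only $n > \lambda^2 m$, and when $n$ is just above $\lambda^2 m$ the truncation $T$ is arbitrarily close to $n$, so $n \ge 2T+1$ fails badly; you would need something like $n \ge 4\lambda^2 m$, a strictly stronger assumption than anything stated. (For what it is worth, the paper's own displayed proof silently changes the exponent from $n-1$ to $n$ at exactly the ``replace summation with $0$'' step, which is what makes its last inequality come out clean, so it shares the soft spot you found rather than resolving it.) Note finally that the downstream uses of this lemma --- the corollary $a \ge b$ and the $b/a = O(1)$ estimate inside Lemma~\ref{lemma:b-lower-bound} --- need only $b/a \le 1$, which follows immediately from your identity $g(p) = n/\sum_{j=0}^{n-1}(1-p)^{-j} \le 1$ with no exponential estimates at all; so your argument, stopped one step earlier, already delivers everything the rest of the paper consumes.
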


\begin{corollary}
If $T \geq \sqrt{mn}$, then $a \geq b$.
\end{corollary}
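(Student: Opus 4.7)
The plan is to apply the preceding Lemma~\ref{lemma:a-geq-b} and reduce the corollary to the elementary inequality $e^x \geq 1 + x$. Specifically, by Lemma~\ref{lemma:a-geq-b}, under the assumption $T \geq \sqrt{mn}$ we have
\[
    \frac{b}{a} \;\leq\; \frac{n}{T} \cdot \frac{e^{-n/T}}{1 - e^{-n/T}}.
\]
So it suffices to show that the right-hand side is at most $1$, i.e., that $\frac{n}{T} \leq e^{n/T} - 1$.

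Setting $x := n/T \geq 0$, this reduces to the well-known inequality $1 + x \leq e^x$ (valid for all real $x$, and in particular for $x \geq 0$). Rearranging gives $x \leq e^x - 1$, which is exactly what is needed. Multiplying both sides by $e^{-x}/(1-e^{-x})$ (positive for $x > 0$) yields $\frac{x e^{-x}}{1 - e^{-x}} \leq 1$, and combining with the bound from Lemma~\ref{lemma:a-geq-b} gives $b/a \leq 1$, i.e., $a \geq b$, as claimed. (The edge case $n/T = 0$ is vacuous since $b \geq 0$ trivially, but in our setting $n,T > 0$ so we are always in the regime $x > 0$.)

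There is no real obstacle here; the entire content of the corollary is packaged into Lemma~\ref{lemma:a-geq-b}, and the remaining step is a one-line application of $e^x \geq 1 + x$. The only care needed is to check that the direction of the inequality is preserved when rearranging, which is immediate since $1 - e^{-n/T} > 0$ for $n/T > 0$.
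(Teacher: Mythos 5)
Your proof is correct and matches the paper's argument exactly: the paper also deduces the corollary from Lemma~\ref{lemma:a-geq-b} together with the inequality $xe^{-x} \leq 1 - e^{-x}$ for $x \geq 0$, which is the same elementary fact $e^x \geq 1+x$ you invoke. No issues.
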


\begin{proof}
A direct consequence of Lemma \ref{lemma:a-geq-b} and the fact that $xe^{-x} \leq 1 - e^{-x} $ for $x \geq 0$.
\end{proof}

\subsection{Step Four: Comparing the Revenue of the Not-So-Naive Auction to the Revenue of the Less-Naive Auction}

The ultimate goal is to compare the revenue of the Not-So-Naive Auction against the revenue of selling separately. We proceed via an intermediate comparison between the Not-So-Naive Auction and the Less-Naive Auction (which we have already compared to $\srev_n(\ERT^m)$ in Section~\ref{sec:LNA}).

\mattnote{What is $q_\ell$ below? I can't find where it's defined in the body -- is it a term used in teh calculation proofs? This is also used in Section~3.5 below.}

\begin{restatable}{lemma}{NSNvsLN}\label{lemma:NSN-vs-LN}
If bidders report their values truthfully in the Less-Naive Auction, then $\rev_n^{NSN} (\ERT^m)$ exceeds $\rev_n^{LNA} (\ERT^m)$ by at least
\[
    (b-b_0)mn^2 \Paren{\frac{T}{mn} - \frac{1}{T}} \Paren{1 - \Paren{1 - \frac{1}{T}}^m - \frac{m}{T}\Paren{1 - \frac{1}{T}}^{m-1}} + \frac{bm^2(m-1)n^2}{T} \sum\limits_{\ell = 1}^{m-1} \binom{m - 2}{\ell-1} q_\ell .
\]
\end{restatable}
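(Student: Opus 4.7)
The plan is to compute $\rev_n^{NSN}(\ERT^m)$ directly under truthful reporting (which is the BIC equilibrium by Corollary~\ref{corollary:preferred-option}) and subtract off $\rev_n^{LNA}(\ERT^m)$ evaluated under truthful reports (as hypothesized). The key observation is that the NSN menu matches the LNA menu format exactly, except that its parameters are $(a,b)$ rather than $(a_0,b_0)$, and NSN additionally sells to the \emph{tempted} non-$T$-valued bidders who choose the option $(\{j^*\}, L)$.

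First I would decompose $\rev_n^{NSN}(\ERT^m) = F(a,b) + G$, where $F(\cdot,\cdot)$ collects the total expected payment from bidders possessing at least one $T$-valued item (expressed as a function of the menu parameters), and $G$ is the expected payment from tempted non-$T$-valued bidders. Repeating the per-item accounting used in the proof of Lemma~\ref{lemma:LN-vs-SREV}, one finds $F(a,b) = nma + H(b)$ for an explicit linear function $H$, and by the same reasoning $\rev_n^{LNA}(\ERT^m) = F(a_0,b_0)$. Computing $H(b) - H(b_0)$ and simplifying with the identity $T - mn/T = T(1 - mn/T^2)$ produces exactly the first summand of the target inequality, namely $(b - b_0)\,mn^2(T/(mn) - 1/T)[1 - (1-1/T)^m - (m/T)(1-1/T)^{m-1}]$.

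Next, to handle $G$, I would sum the payment $aT + \ell \cdot b \cdot mn/T$ over each pair $(j^*, L)$ with $|L| = \ell$, each occurring with probability $q_\ell$. There are $m$ choices of $j^*$ and $\binom{m-1}{\ell}$ compatible choices of $L$, so after applying the identity $\ell \binom{m-1}{\ell} = (m-1)\binom{m-2}{\ell-1}$ this yields
\[
    G \;=\; nmaT\sum_{\ell=1}^{m-1}\binom{m-1}{\ell}\,q_\ell \;+\; \frac{b\,m^2 n^2(m-1)}{T}\sum_{\ell=1}^{m-1}\binom{m-2}{\ell-1}\,q_\ell.
\]
The second summand matches the second summand of the target inequality exactly.

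The main obstacle will be handling the leftover residual $nm(a - a_0) + nmaT\sum_\ell \binom{m-1}{\ell}\,q_\ell$, which must be shown nonnegative so that dropping it preserves a valid lower bound. Using $\mathbb{P}[\text{high}] = 1/T + \sum_\ell \binom{m-1}{\ell}\,q_\ell$, this residual equals $nm\bigl(aT\,\mathbb{P}[\text{high}] - a_0\bigr)$. Plugging in the implicit formula for $a$, together with the closed-form $a_0 = T(1 - (1-1/T)^n)/n$ (the second-price win-probability at value $T$, implicit in the LNA setup), the residual simplifies to $mT\bigl((1-1/T)^n - (1-\mathbb{P}[\text{high}])^n\bigr) \geq 0$, since $\mathbb{P}[\text{high}] \geq 1/T$ by construction.
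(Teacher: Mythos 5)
Your proposal is correct and follows essentially the same route as the paper's proof: both compute the per-bidder revenue of the Not-So-Naive menu, subtract the Less-Naive revenue, identify the two displayed summands, and discard the nonnegative residual $nm\bigl(a\,T\,\PP[\text{high}] - a_0\bigr) = mT\bigl((1-1/T)^n - (1-\PP[\text{high}])^n\bigr) \geq 0$ using the closed form $a_0 = \tfrac{T}{n}(1-(1-1/T)^n)$ and the implicit definition of $a$. The only difference is organizational (you split revenue into $F(a,b)+G$ up front, whereas the paper writes the full per-bidder sum and then groups terms), which does not change the substance.
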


\begin{corollary}\label{corollary:NSN-vs-SREV}
The Not-So-Naive Auction satisfies: 
\begin{equation*}
    \begin{split}
        &\rev^{NSN}_n(\ERT^m) \ge  \srev_n(\ERT^m) \\
        & + bmn^2 \Paren{\Paren{\frac{T}{mn} - \frac{1}{T}} \Paren{1 - \Paren{1 - \frac{1}{T}}^m - \frac{m}{T}\Paren{1 - \frac{1}{T}}^{m-1}} + \frac{m(m-1)}{T} \sum_{\ell = 1}^{m-1} \binom{m - 2}{\ell-1} q_\ell}.
    \end{split}
\end{equation*}

\end{corollary}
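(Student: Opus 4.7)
The plan is to obtain this corollary as essentially a bookkeeping step that chains together the two preceding results, Lemma~\ref{lemma:LN-vs-SREV} and Lemma~\ref{lemma:NSN-vs-LN}. Lemma~\ref{lemma:NSN-vs-LN} already lower bounds $\rev_n^{NSN}(\ERT^m) - \rev_n^{LNA}(\ERT^m)$ (with the Less-Naive Auction evaluated under truthful reports), and Lemma~\ref{lemma:LN-vs-SREV} gives an exact expression for $\rev_n^{LNA}(\ERT^m) - \srev_n(\ERT^m)$ under the same truthful-report convention. So the main step is simply to add the two inequalities.

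Concretely, denote $X := mn^2 \Paren{\frac{T}{mn} - \frac{1}{T}} \Paren{1 - \Paren{1 - \frac{1}{T}}^m - \frac{m}{T} \Paren{1 - \frac{1}{T}}^{m-1}}$ and $Y := \frac{m^2(m-1)n^2}{T} \sum_{\ell=1}^{m-1} \binom{m-2}{\ell-1} q_\ell$. Then Lemma~\ref{lemma:LN-vs-SREV} gives $\rev_n^{LNA}(\ERT^m) = \srev_n(\ERT^m) + b_0 X$, while Lemma~\ref{lemma:NSN-vs-LN} gives $\rev_n^{NSN}(\ERT^m) \geq \rev_n^{LNA}(\ERT^m) + (b - b_0)X + bY$. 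Summing these and noting that $b_0 X + (b - b_0) X = bX$ yields $\rev_n^{NSN}(\ERT^m) \geq \srev_n(\ERT^m) + bX + bY$. Factoring $b m n^2$ out of $bX + bY$ recovers exactly the expression claimed in the corollary.

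The only subtlety worth flagging is conceptual rather than computational: the quantity $\rev_n^{LNA}(\ERT^m)$ that appears in both lemmas must refer to the same thing. Since the Less-Naive Auction is not BIC, ``$\rev_n^{LNA}$'' is not an equilibrium notion; both Lemma~\ref{lemma:LN-vs-SREV} and the hypothesis of Lemma~\ref{lemma:NSN-vs-LN} explicitly specify that bidders report truthfully, so the two usages agree and the cancellation is valid. There is no further obstacle — the real technical work lives in proving Lemma~\ref{lemma:LN-vs-SREV} (identifying precisely where the gain over $\srev_n$ comes from, namely selling cheap items to bidders with exactly one $T$-value) and Lemma~\ref{lemma:NSN-vs-LN} (accounting both for the perturbation from $b_0$ to $b$ and for the additional revenue collected from bidders who value no item at $T$ but nevertheless opt in), after which the corollary is an immediate consequence.
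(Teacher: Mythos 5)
Your proposal is correct and matches the paper exactly: the paper also derives this corollary as a direct consequence of Lemmas~\ref{lemma:LN-vs-SREV} and~\ref{lemma:NSN-vs-LN}, and your bookkeeping (summing the two bounds so that $b_0X + (b-b_0)X = bX$, then factoring out $bmn^2$) is the intended argument. Your remark that both lemmas evaluate the Less-Naive Auction under truthful reporting, so the intermediate quantity cancels consistently, is a valid and worthwhile observation.
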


\begin{proof}
A direct consequence of Lemmas \ref{lemma:LN-vs-SREV} and \ref{lemma:NSN-vs-LN}.
\end{proof}

\subsection{Step Five: Bounding $\comp_{\ERT^m}(n)$}






Notice that $\comp_{\ERT^m}(n)$ is \emph{at least} the smallest $c$ such that $\srev_{n+c}(\ERT^m)$ exceeds $\rev_n^{NSN}(\ERT^m)$. Combining Lemma \ref{lem:srev} and Corollary \ref{corollary:NSN-vs-SREV}  along with the fact that all $q_\ell \ge 0$, this occurs only if\footnote{We ignore the expected revenue gained from selling items to ``low''-type bidders with no $T$ values since this term is irrelevant to the remainder of our analysis.}
\[
    c  \geq \frac{bn^2 \Paren{\frac{T}{mn} - \frac{1}{T}} \Paren{1 - \Paren{1 - \frac{1}{T}}^m - \frac{m}{T}\Paren{1 - \frac{1}{T}}^{m-1}}}{\Paren{1 - \frac{1}{T}}^n}.
\]


By the union bound, we expect the RHS to behave like $b mn / (T (1 - \frac{1}{T})^n)$.
If we set $T \sim \sqrt{mn}$, then to show that the competition complexity is $c = \Omega(\sqrt{mn})$, it suffices to show that $b = \Omega\Paren{\Paren{1- \frac{1}{T}}^n}$.
Intuitively, if the probability of the types of bidders for which we had to modify the Less-Naive Auction to the Not-So-Naive Auction is sufficiently small, then $b \approx b_0 = \Omega((1 - \frac{1}{T})^n)$ since the probability of valuing item $j$ in $[mn/T, T)$ and some other item at $T$ is at most $O(1/n)$ if $T \sim \sqrt{mn}$. 
We show that all of this is indeed the case in Lemma \ref{lemma:b-lower-bound} and prove that the competition complexity is $\Omega(\sqrt{mn})$ in Theorem \ref{theorem:competition-complexity-sqrt-mn}.




\begin{restatable}{lemma}{bLB}\label{lemma:b-lower-bound}
If $T = \lambda \sqrt{mn}$ for some constant $\lambda > 1$, then $b = \Omega\Paren{\Paren{1 - \frac{1}{T}}^n}$.
\end{restatable}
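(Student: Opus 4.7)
The plan is to unpack the implicit definition
\[
    b = \PP_{v_{-i}}\SqBr{\not\exists\ \text{high}} \cdot \frac{1 - \Paren{1 - \PP\cSqBr{\text{low}}{\text{not high}}}^n}{n \PP\cSqBr{\text{low}}{\text{not high}}}
\]
and lower bound each factor separately. Specifically, I would show (i) $\PP\SqBr{\text{high}} \leq 1/T + O(1/n)$, so that $\PP_{v_{-i}}\SqBr{\not\exists\ \text{high}} = (1 - \PP\SqBr{\text{high}})^{n-1} \geq \Omega((1 - 1/T)^n)$; and (ii) $\PP\cSqBr{\text{low}}{\text{not high}} = O(1/n)$, from which the remaining factor $g(p) := (1 - (1-p)^n)/(np)$ is $\Omega(1)$ (using that $g$ is decreasing and $g(C/n) \geq (1 - e^{-C})/C$ for $C \leq n$). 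Combining (i) and (ii) yields $b = \Omega((1-1/T)^n)$.

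\textbf{Bounding the $q_\ell$.} Both (i) and (ii) reduce to controlling the $q_\ell$ appearing in the formulas for $\PP[\text{high}]$ and $\PP[\text{low}]$ displayed in Section~\ref{sec:menu-format}. Because I only need asymptotic bounds, I would drop both the subsidy-threshold inequality and the ``max'' constraint from the event defining $q_\ell$, keeping only the marginal constraint that each of the $\ell + 1$ coordinates in $\{j^*\} \cup L$ takes value at least $mn/T$:
\[
    q_\ell \leq \PP\SqBr{v_j \geq mn/T}^{\ell+1} = (T/(mn))^{\ell+1}.
\]
With $T = \lambda\sqrt{mn}$ and $m \leq n$ we have $x := T/(mn) \leq 1/\sqrt{mn}$ and $(m-1)x \leq T/n \leq \lambda$, so a geometric sum gives
\[
    \sum_{\ell=1}^{m-1} \binom{m-1}{\ell} q_\ell \leq x\Paren{(1 + x)^{m-1} - 1} \leq (m-1) x^2 e^{\lambda} = O(1/n),
\]
and analogously $(m-1) \sum_{\ell=1}^{m-1} \binom{m-2}{\ell-1} q_\ell = O(1/n)$. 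Combined with the elementary estimate $(1 - (1-1/T)^{m-1})(T/(mn) - 1/T) \leq (m-1)/(mn) \leq 1/n$ for the first term in $\PP[\text{low}]$, this delivers $\PP[\text{high}] \leq 1/T + O(1/n)$ and $\PP[\text{low}] = O(1/n)$.

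\textbf{Finishing up and main obstacle.} Writing $\PP[\text{high}] = 1/T + \delta$ with $\delta = O(1/n)$, the ratio $(1 - 1/T - \delta)^{n-1}/(1 - 1/T)^{n-1} \geq (1 - O(1/n)/(1 - 1/T))^{n-1} \geq e^{-O(1)}$ establishes (i). Since $\PP[\text{high}] = o(1)$, we have $\PP\cSqBr{\text{low}}{\text{not high}} \leq 2\PP[\text{low}] = O(1/n)$, which establishes (ii). The main obstacle I anticipate is the implicit-ness of $b$: the definition of $q_\ell$ depends on the menu parameters $a$ and $b$ themselves through the subsidy inequality $a v_{j^*} + b \sum_{j \in L} v_j \geq aT + \ell b\, mn/T$, so a naive argument risks circular dependence on $b$. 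Sidestepping this via the unconditional single-coordinate tail bound $q_\ell \leq (T/(mn))^{\ell+1}$ (rather than feeding Lemma~\ref{lemma:a-geq-b}'s bound on $b/a$ back into the subsidy inequality) avoids circularity and is already tight enough to deliver the $O(1/n)$ corrections we need.
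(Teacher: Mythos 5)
Your proposal is correct, and the top-level decomposition is the same as the paper's: unpack the implicit formula for $b$, show $\Pr[\not\exists\ \text{high}] = \Omega((1-1/T)^n)$ via $\Pr[\text{high}] \le 1/T + O(1/n)$, show the ``low'' probability is $O(1/n)$, and conclude. Where you genuinely diverge is in how the sums $\sum_\ell \binom{m-1}{\ell} q_\ell$ and $(m-1)\sum_\ell\binom{m-2}{\ell-1}q_\ell$ are controlled. The paper routes this through Lemma~\ref{lemma:q_ell-upper-bound}, which keeps the subsidy inequality $av_{j^*}+b\sum_{j\in L}v_j \ge aT + |L|b\frac{mn}{T}$ in the event and extracts a factor $\frac{\ell b}{Ta}$, then invokes Lemma~\ref{lemma:a-geq-b} to bound $b/a$ (this is not circular, since Lemma~\ref{lemma:a-geq-b} is proved by a monotonicity argument that holds for any values of the $q_\ell$, but your concern about circularity is a reasonable one to flag). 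Your one-line relaxation $q_\ell \le (T/(mn))^{\ell+1}$ discards all of that machinery and, as you verify, is still tight enough: $(m-1)x^2 e^{\lambda} = O(1/n)$ with $x = T/(mn)$ is exactly the order the argument needs, so Lemmas~\ref{lemma:a-geq-b}, \ref{lemma:q_ell-upper-bound}, \ref{lemma:probability-no-T-high}, and~\ref{lemma:probability-no-T-low} become unnecessary for this particular lemma (though note the paper still needs Lemma~\ref{lemma:a-geq-b} elsewhere to establish $a \ge b$). A second, smaller simplification: by observing that $g(p) = \frac{1-(1-p)^n}{np} = \frac{1}{n}\sum_{k=0}^{n-1}(1-p)^k$ is decreasing, you get $g(\Pr[\text{low}\mid\text{not high}]) = \Omega(1)$ from the upper bound $\Pr[\text{low}\mid\text{not high}] = O(1/n)$ alone, whereas the paper separately establishes the matching lower bound $\Pr[\text{low}\mid\text{not high}] = \Omega(1/n)$ to control the numerator. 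Both routes are valid; yours is more elementary and self-contained, while the paper's sharper $q_\ell$ bound carries extra information (an explicit $b/a$ factor) that it simply does not end up needing here.
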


\begin{restatable}{theorem}{mainResult}\label{theorem:competition-complexity-sqrt-mn}
     If $T = \lambda \sqrt{mn}$ for some constant $\lambda > 1$, then $\comp_{\ERT^m}(n) = \Omega(\sqrt{mn})$.
\end{restatable}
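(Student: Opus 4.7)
The plan is to chain the quantitative estimates already assembled in this section. First, for additive bidders over regular items Myerson's per-item auction weakly dominates second-price, so $\vcg_{n+c}(\ERT^m) \le \srev_{n+c}(\ERT^m)$; consequently $\comp_{\ERT^m}(n) \ge \sscomp_{\ERT^m}(n)$, and it suffices to prove $\sscomp_{\ERT^m}(n) = \Omega(\sqrt{mn})$. Since the Not-So-Naive Auction is BIC (by the feasibility of its interim menu together with Corollary~\ref{corollary:preferred-option}), $\rev_n(\ERT^m) \ge \rev_n^{NSN}(\ERT^m)$, so every $c$ in the feasible set defining $\sscomp_{\ERT^m}(n)$ must in particular satisfy $\srev_{n+c}(\ERT^m) \ge \rev_n^{NSN}(\ERT^m)$.

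Combining Lemma~\ref{lem:srev}, which gives $\srev_{n+c}(\ERT^m) - \srev_n(\ERT^m) \le mc(1-1/T)^n$, with the lower bound on $\rev_n^{NSN}(\ERT^m) - \srev_n(\ERT^m)$ in Corollary~\ref{corollary:NSN-vs-SREV} and discarding the nonnegative $q_\ell$ contribution, any such $c$ must satisfy the explicit lower bound already displayed in the overview of this step,
\[
c \;\ge\; \frac{b\,n^2\bigl(\tfrac{T}{mn}-\tfrac{1}{T}\bigr)\bigl(1-(1-1/T)^m-\tfrac{m}{T}(1-1/T)^{m-1}\bigr)}{(1-1/T)^n}.
\]
The remainder of the argument then amounts to asymptotic simplification of this right-hand side under $T = \lambda\sqrt{mn}$ with absolute constant $\lambda>1$ and the implicit hypothesis $m \le n$.

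The plan for the asymptotics is as follows. A direct calculation gives $T/(mn) - 1/T = (\lambda^2-1)/(\lambda\sqrt{mn}) = \Theta(1/\sqrt{mn})$, which uses $\lambda > 1$. The factor $1 - (1-1/T)^m - (m/T)(1-1/T)^{m-1}$ is exactly $\Pr[\mathrm{Bin}(m,1/T) \ge 2]$; since $m/T = \sqrt{m/n}/\lambda$ lies in $(0, 1/\lambda]$, a second-order Taylor expansion (matched by the standard union-bound/inclusion-exclusion estimate) gives that this probability is $\Theta(m^2/T^2) = \Theta(m/n)$. Multiplying the three factors in the numerator yields $\Theta(b\sqrt{mn})$, and Lemma~\ref{lemma:b-lower-bound} then provides $b = \Omega((1-1/T)^n)$, which cancels the $(1-1/T)^n$ in the denominator and leaves $c = \Omega(\sqrt{mn})$.

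The one step I would expect to require care is the binomial-tail estimate: one must confirm that $\Pr[\mathrm{Bin}(m,1/T) \ge 2]$ is genuinely $\Theta(m^2/T^2)$ rather than something smaller. This is precisely where $\lambda > 1$ (to keep $T/(mn)-1/T$ bounded below by a positive constant times $1/\sqrt{mn}$) and $m \le n$ (to keep $m/T$ bounded away from $1$ so that the probability of at least two successes does not collapse into the higher-order terms) are used; beyond this asymptotic bookkeeping, the theorem is an immediate corollary of the lemmas already proved earlier in Section~\ref{sec:ERTsqrt}.
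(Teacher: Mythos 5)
Your proposal is correct and follows essentially the same route as the paper: it chains Lemma~\ref{lem:srev}, Corollary~\ref{corollary:NSN-vs-SREV}, and Lemma~\ref{lemma:b-lower-bound} through the same displayed inequality on $c$ and then does the asymptotics under $T=\lambda\sqrt{mn}$. The only difference is cosmetic: by recognizing $1-(1-1/T)^m-\tfrac{m}{T}(1-1/T)^{m-1}$ as $\Pr[\mathrm{Bin}(m,1/T)\ge 2]=\Theta(m^2/T^2)$ (valid for $m\ge 2$, which the main result assumes), you handle in one stroke what the paper splits into the cases $m\in[n/4,n]$ and $m\le n/4$.
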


This concludes the proof of our main result. We have explicitly defined a BIC auction (the Less-Naive Auction) for $(\ERT^m)^n$ whose revenue exceeds $\srev_{\ERT^m}(n+c \sqrt{nm})$ for some absolute constant $c > 0$, and all $m \geq 2$, and all $n$.

\section{Further Preliminaries: Dual Flow Benchmarks} \label{sec:addprelims}

In the following sections, we provide revenue upper bounds using the~\cite{CaiDW16} framework. We briefly state the minimal preliminaries necessary to get started, and will fully flesh out terminology as needed during proofs. As needed, we will clarify what is a `useful dual flow' so that the statement of Theorem~\ref{thm:benchmark_thm6} is fully self-contained. Theorem~\ref{thm:canonical_benchmark} is already self-contained. Their framework establishes the following revenue benchmark in terms of the induced \emph{virtual values} of bidder $i$ for item $j$, $\Phi_{ij}^\lambda(\vec{v}_i)$ as a function of a `useful dual flow' $\lambda$:

\begin{theorem}[\cite{CaiDW16}, Theorem 6] \label{thm:benchmark_thm6}
    Let $\lambda$ be any useful dual flow, and  $M=(\pi, p)$ be a BIC mechanism. The revenue of $M$ is less than or equal to the virtual welfare of $\pi$ with respect to the virtual value function $\Phi^\lambda$; that is:
        $$ \rev_n^M(D) \le \sum_{i=1}^n \sum_{j=1}^m \EE_{\vec{v} \leftarrow D^n} \left[ \pi_{ij}(\vec{v}_i) \cdot \Phi_{ij}^\lambda(\vec{v}_i) \right].$$
    
\end{theorem}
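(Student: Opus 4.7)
The plan is to view the optimal BIC revenue as a linear program over interim allocations $\pi$ and interim payments $p$, then take a partial Lagrangian dual against only the BIC constraints using multipliers $\lambda_{i, v_i, v_i'} \ge 0$, one for each bidder $i$ and each ordered pair of types $(v_i, v_i')$ asserting ``reporting $v_i$ is at least as good as deviating to $v_i'$.'' The statement that $\lambda$ is a \emph{useful dual flow} is precisely the flow-conservation condition $\sum_{v_i'} \lambda_{i,v_i,v_i'} - \sum_{v_i'} \lambda_{i,v_i',v_i} = f_i(v_i)$ (with a designated source absorbing the net outflow), which I will show is exactly what makes the payment variables drop out of the Lagrangian — hence the name ``flow.''

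First I would write the Lagrangian
\[
\mathcal{L}(M,\lambda) \;=\; \sum_i \mathbb{E}_{v_i}[p_i(v_i)] \;+\; \sum_{i,v_i,v_i'} \lambda_{i,v_i,v_i'}\Bigl(\sum_j \pi_{ij}(v_i) v_{ij} - p_i(v_i) - \sum_j \pi_{ij}(v_i') v_{ij} + p_i(v_i')\Bigr),
\]
and observe that because $M$ is BIC and $\lambda \ge 0$, each bracketed factor is nonnegative, so $\mathcal{L}(M, \lambda) \ge \sum_i \mathbb{E}[p_i(v_i)] = \rev_n^M(D)$. Next I would rearrange $\mathcal{L}$ by collecting the coefficient of each $p_i(v_i)$: the payment $p_i(v_i)$ picks up $f_i(v_i)$ from the revenue expectation, $-\sum_{v_i'} \lambda_{i,v_i,v_i'}$ from terms where $v_i$ is the ``truth,'' and $+\sum_{v_i'} \lambda_{i,v_i',v_i}$ from terms where $v_i$ is the ``deviation.'' The usefulness condition makes this coefficient identically $0$ for every $v_i$, eliminating all payment variables from $\mathcal{L}$.

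What remains is a sum of allocation terms. Collecting the coefficient of $\pi_{ij}(v_i)$ in $\mathcal{L}$ and dividing by $f_i(v_i)$ yields precisely the CDW virtual value
\[
\Phi_{ij}^\lambda(v_i) \;=\; v_{ij} \;-\; \frac{1}{f_i(v_i)} \sum_{v_i'} \lambda_{i,v_i',v_i}\,(v_{ij}' - v_{ij}),
\]
so that $\mathcal{L}(M,\lambda) = \sum_{i,j} \mathbb{E}_{v\sim D^n}\bigl[\pi_{ij}(v_i)\,\Phi_{ij}^\lambda(v_i)\bigr]$. Chaining with the inequality $\rev_n^M(D) \le \mathcal{L}(M,\lambda)$ established in the previous step gives exactly Theorem~\ref{thm:benchmark_thm6}. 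Note that feasibility of $\pi$ is never used in this argument — the bound holds for the \emph{virtual welfare} of the mechanism's allocation rule regardless of whether $\pi$ is maximal; this is why downstream applications then upper bound virtual welfare further by the welfare of the virtual-value-maximizing allocation.

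The main obstacle I anticipate is careful bookkeeping rather than any deep idea: (i) fixing the direction convention for the flow (which index is ``truth'' and which is ``deviation'') consistently so that the sign of the correction term in $\Phi^\lambda$ matches the statement in the theorem; (ii) handling the source vertex in the flow-conservation equation — the source carries outflow $f_i(v_i)$ to each type but does not correspond to a BIC constraint, so one must be explicit that edges from the source contribute nothing to the BIC-relaxation step but are needed for flow conservation; and (iii) in the continuous-type case, replacing sums by integrals and checking that the formal manipulations are justified (in the discrete-type setting this is purely LP duality, which is the cleanest path and the one I would follow).
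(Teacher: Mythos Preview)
The paper does not prove this theorem; it is quoted as a preliminary from~\cite{CaiDW16} (their Theorem~6) and no proof is given here. Your proposal correctly reconstructs the standard Cai--Devanur--Weinberg argument (partial Lagrangian over BIC constraints, flow conservation to kill the payment terms, then read off the virtual values), so there is nothing to compare against in this paper itself.
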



Taking the supremum over all feasible $M$ then provides an upper bound on $\rev_n(D)$. In particular, when $\lambda$ is the \emph{canonical flow} that divides the type space into regions $R_j$ based on the favorite item $j$, then uses a Myersonian-like flow within each region (refer to~\cite{CaiDW16} for a precise definition), the following \emph{relaxation} of the benchmark is useful:

\begin{theorem}[\cite{CaiDW16}, Corollary 28] \label{thm:canonical_benchmark} Let $R_j:=\{\vec{v} \ |\ \arg\max_{\ell\in[m]}\{v_\ell\}=j\}$ (with ties broken lexicographically). Then 
    $$ \rev_n(D) \le \sum_{j=1}^m \EE_{\vec{v} \leftarrow D^n} \left[ \max_{i\in [n]} \left\{ \bar{\varphi}_j (v_{ij}) \cdot \1\Paren{\vec{v}_i \in R_j} + v_{ij} \cdot \1\Paren{\vec{v}_i \notin R_j}  \right\} \right].  $$
\end{theorem}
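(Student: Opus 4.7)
The plan is to derive this statement as a specialization of Theorem~\ref{thm:benchmark_thm6}, by (i) plugging in a particular ``useful'' dual flow and (ii) applying the feasibility constraint on allocations for additive bidders over independent items. Since the statement is essentially Corollary 28 of~\cite{CaiDW16}, the role of the proof is to instantiate that corollary in the form used by this paper.

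First, I would introduce the \emph{canonical flow} $\lambda^*$ of~\cite{CaiDW16}. This flow partitions the type space $\mathrm{supp}(D)$ into the regions $R_1,\dots,R_m$ defined in the statement, and within each region $R_j$ routes flow along the $j$-th coordinate in a Myersonian fashion (from types with larger $v_j$ downward to types with smaller $v_j$, while preserving marginals across the other coordinates). Along any coordinate $\ell \neq j$ for types in $R_j$, and across the region boundaries, no flow is routed. I would then verify the two conditions needed for $\lambda^*$ to be ``useful'' in the sense of~\cite{CaiDW16}: (a) $\lambda^*$ is a non-negative measure, and (b) it respects the flow conservation constraints arising from the dual of the revenue maximization LP. Both follow directly from the Myersonian construction within each region and the fact that the regions partition the support.

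Second, I would compute the induced virtual value $\Phi_{ij}^{\lambda^*}$. Because $D$ is a product distribution and the flow within $R_j$ moves only along coordinate $j$, a standard Myersonian integration-by-parts argument shows that for $\vec v_i \in R_j$ the virtual value for item $j$ is the ironed Myerson virtual value $\bar\varphi_j(v_{ij})$ computed with respect to the marginal $D_j$, while for any non-favorite item $\ell$ the virtual value is simply $v_{i\ell}$. Equivalently,
\[
    \Phi_{ij}^{\lambda^*}(\vec v_i) \;=\; \bar\varphi_j(v_{ij})\cdot\1(\vec v_i\in R_j) \;+\; v_{ij}\cdot\1(\vec v_i\notin R_j).
\]

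Third, I would apply Theorem~\ref{thm:benchmark_thm6} with this flow and then bound the resulting virtual welfare by maximizing pointwise. Since bidders are additive and items are sold separately in the feasibility sense ($\sum_i \pi_{ij}(\vec v)\le 1$ for each $j$), for every realization $\vec v$ the virtual welfare is at most $\sum_j \max_{i\in[n]}\Phi_{ij}^{\lambda^*}(\vec v_i)$. Taking expectations and substituting the expression from the previous step yields the stated bound.

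The main obstacle is the second step: carefully defining the canonical flow and verifying that its induced virtual value function takes the clean form above, in particular that the ironing of $\varphi_j$ appears naturally (rather than $\varphi_j$ itself) when $D_j$ is irregular. Once this is in hand, steps one and three are essentially bookkeeping: invoking the ``useful flow'' framework of~\cite{CaiDW16} and using the per-item feasibility of additive allocations. For that reason I would keep the construction of $\lambda^*$ at the level of a reference to~\cite{CaiDW16}, and use the present argument mainly to record the benchmark in the notation needed for the subsequent sections of this paper.
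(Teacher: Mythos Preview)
The paper does not give its own proof of this statement; it is stated as a preliminary cited directly from~\cite{CaiDW16} (their Corollary~28) and used as a black box. Your sketch---instantiate Theorem~\ref{thm:benchmark_thm6} with the canonical region-separated flow, compute that the induced virtual value is $\bar\varphi_j(v_{ij})$ on $R_j$ and $v_{ij}$ off it, then bound the virtual welfare pointwise via the per-item feasibility constraint $\sum_i \pi_{ij}\le 1$---is precisely the derivation in~\cite{CaiDW16}, so your approach is correct and is the intended one.
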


\section{A Reduction from $\mathcal{A}_m^R$ to $\mathcal{ERC}_m$ via Stochastic Dominance} \label{sec:ERTreduction}

We now consider the competition complexity of an arbitrary distribution $D  \in \mathcal{A}_m^R$. The key idea is that the Theorem~\ref{thm:canonical_benchmark} benchmark can be written entirely in terms of (ironed) virtual values, which then allows a direct comparison to the virtual value obtained by $\srev$. If $\srev$ (with $c$ additional bidders) always obtains a virtual value of a higher quantile than the optimal mechanism (without additional bidders), then it also achieves higher revenue. To tighten the previous analysis from \cite{CaiDW16, BeyhaghiW19}, we first use the style of Theorem~\ref{thm:canonical_benchmark} but \emph{applied to a specific allocation}, then take the supremum over all feasible and BIC mechanisms. 

Fix a deterministic allocation rule $x$, and let $x_j(\vec{v})$ denote the winner of item $j$ under $x$ on input $\vec{v}$ (if no one wins the item, let $x_j(\vec{v}) = \bot$). Sample $\vec{v} \leftarrow D^n$, and choose a bidder $i$ for item $j$ according to $x$. Define the following quantities: 

\begin{itemize}
    \item $CDW_j^x (\vec{v}) \coloneqq \sum_{i} x_{ij}(\vec{v}) \cdot \Phi_{ij}^\lambda (\vec{v}_i) = \sum_i x_{ij}(\vec{v}) \cdot (\bar{\varphi}_j (v_{ij}) \cdot \1\Paren{\vec{v}_i \in R_j} + v_{ij} \cdot \1\Paren{\vec{v}_i \notin R_j})$, the virtual value (using the canonical flow) of the recipient of item $j$ on valuation profile $\vec{v}$,

    \item $Q_j^x (\vec{v})$, a random variable sampled as follows: 
        \begin{itemize}
            \item If $\vec{v}_i \in R_j$, output $Q_j^x = q_j(\vec{v}_i) = F_j(v_{ij})$.
            \item Else, output $Q_j^x \leftarrow U[q_j(\vec{v}_i), 1]$.
        \end{itemize}

    \item $S_{n+c} \coloneqq \max_{\vec{q} \leftarrow U[0,1]^{n+c}} q_i$, the maximum of $n+c$ independently drawn quantiles.
\end{itemize}

Following the rest of the argument from \cite{CaiDW16, BeyhaghiW19} gives the following \emph{refined} benchmark (note that this essentially interchanges the expectation and the maximum from Theorem~\ref{thm:canonical_benchmark}, so it indeed furnishes a tighter bound):
$$ \rev_n(D) \le \sup_{\text{feasible BIC } x} \sum_{j=1}^m \EE_{\vec{v} \leftarrow D^n} \left[ CDW_j^x (\vec{v}) \right].$$

Using this improved benchmark and the language of quantile space, we now compare $\rev_n(D)$ to $\srev_{n+c}(D)$. Along the way, we appeal to the specific form of the virtual values of distributions in $\mathcal{ERC}_m$, which establishes the non-trivial direction of the equality between $\comp_{\mathcal{A}_m^R}(n)$ and $\comp_{\mathcal{ERC}_m}(n)$. Full proofs of all results are provided in Appendix~\ref{sec:proofs_ERTreduction}.

\begin{restatable}{proposition}{redclaimone} \label{prop:reduction_claim1}
    For all allocation rules $x$ and all items $j$, 
    $$ \EE_{\substack{\vec{v}\leftarrow D^n, \\r}} \left[ \bar{\varphi}_j \left( F_j^{-1} \left(Q_j^x (\vec{v}) \right) \right) \right] \ge \EE_{\vec{v}\leftarrow D^n} \left[ CDW_j^x (\vec{v}) \right]. $$
\end{restatable}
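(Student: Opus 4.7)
The plan is to prove the inequality $\vec{v}$-pointwise after integrating out the auxiliary randomness $r$ used to define $Q_j^x$, then take the expectation over $\vec{v}$. Fix $\vec{v}$ and let $i = x_j(\vec{v})$ be the winner of item $j$ under $x$ (the empty-winner case $x_j(\vec{v})=\bot$ contributes $0$ to both sides under any natural convention, so I focus on $i \neq \bot$). The analysis splits into the two cases that appear in the definition of $CDW_j^x$.

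When $\vec{v}_i \in R_j$, the random variable $Q_j^x(\vec{v})$ is deterministic and equals $F_j(v_{ij})$, so $\bar{\varphi}_j(F_j^{-1}(Q_j^x(\vec{v}))) = \bar{\varphi}_j(v_{ij}) = CDW_j^x(\vec{v})$ and the two sides agree exactly.

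When $\vec{v}_i \notin R_j$, write $q_0 := F_j(v_{ij})$, so $Q_j^x(\vec{v}) \sim U[q_0,1]$ and the LHS conditional expectation becomes
\[
    \EE_r\!\left[\bar{\varphi}_j\!\left(F_j^{-1}(Q_j^x(\vec{v}))\right)\right] \;=\; \frac{1}{1-q_0}\int_{q_0}^1 \bar{\varphi}_j\!\left(F_j^{-1}(q)\right) dq,
\]
whereas $CDW_j^x(\vec{v}) = v_{ij} = F_j^{-1}(q_0)$. The needed inequality is therefore
\[
    \int_{q_0}^1 \bar{\varphi}_j\!\left(F_j^{-1}(q)\right) dq \;\geq\; (1-q_0)\,F_j^{-1}(q_0).
\]
I invoke the standard defining property of ironed virtual values: the left-hand side equals the ironed revenue curve $\bar{R}_j(q_0)$, which by construction upper bounds the true revenue curve $R_j(q_0) = (1-q_0)\,F_j^{-1}(q_0)$ (the expected revenue of posting price $F_j^{-1}(q_0)$ to a single buyer). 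Equivalently, by Myerson, $\int_{q_0}^1 \varphi_j(F_j^{-1}(q))\,dq = R_j(q_0)$, and the ironed integral dominates the unironed one on any suffix $[q_0,1]$.

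Combining the two cases gives the pointwise bound $\EE_r[\bar{\varphi}_j(F_j^{-1}(Q_j^x(\vec{v})))] \geq CDW_j^x(\vec{v})$, and taking the expectation over $\vec{v}$ yields the proposition. The only non-trivial input is the ironed-revenue inequality $\bar{R}_j \geq R_j$ (which follows because $\bar{R}_j$ is the concave hull of $R_j$ in quantile space); everything else is unpacking the definitions of $Q_j^x$ and $CDW_j^x$. I expect no real obstacle — the lemma is essentially a single application of the defining property of ironing, with the two-case split supplied by the region-separated structure of the canonical flow, and the uniform-on-$[q_0,1]$ construction of $Q_j^x$ designed precisely so that averaging $\bar{\varphi}_j \circ F_j^{-1}$ against it produces the ironed revenue curve.
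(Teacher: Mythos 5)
Your proof is correct and follows essentially the same route as the paper's: the identical two-case split on whether $\vec{v}_i \in R_j$, with equality in the first case and, in the second, the observation that averaging $\bar{\varphi}_j \circ F_j^{-1}$ over $U[q_0,1]$ dominates $v_{ij}$ via the ironing inequality. The paper phrases that last step as $\EE_{v \leftarrow D_j|_{\geq v_{ij}}}[\bar{\varphi}_j(v)] \geq \EE_{v \leftarrow D_j|_{\geq v_{ij}}}[\varphi_j(v)] = v_{ij}$ rather than via $\bar{R}_j(q_0) \geq R_j(q_0)$, but these are the same fact.
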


\begin{restatable}{proposition}{redclaimtwo}\label{prop:reduction_claim2}
    If $S_{n+c} \succsim Q_j^x(\vec{v})$, then $\srev_{n+c}(D) \ge \EE_{\vec{v} \leftarrow D^n} \left[\sum_j CDW_j^x (\vec{v}) \right]$.
\end{restatable}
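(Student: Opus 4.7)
The plan is to sandwich both sides of the inequality via $\sum_j \EE[\bar{\varphi}_j(F_j^{-1}(S_{n+c}))]$, proving the chain
\[
    \srev_{n+c}(D) \;\ge\; \sum_{j=1}^m \EE\!\left[\bar{\varphi}_j\!\left(F_j^{-1}(S_{n+c})\right)\right] \;\ge\; \sum_{j=1}^m \EE\!\left[\bar{\varphi}_j\!\left(F_j^{-1}(Q_j^x(\vec{v}))\right)\right] \;\ge\; \sum_{j=1}^m \EE\!\left[CDW_j^x(\vec{v})\right].
\]
The rightmost inequality is exactly Proposition~\ref{prop:reduction_claim1}, so only the two leftmost steps need genuinely new work.

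First, I would bound $\srev_{n+c}(D)$ from below by expanding item by item. Selling each item separately via Myerson's optimal auction to $n+c$ i.i.d.\ bidders yields per-item revenue $\EE\!\left[(\bar{\varphi}_j(v_{\max,j}))^{+}\right]$, where $v_{\max,j}$ is the largest of the $n+c$ values drawn from $F_j$. Regularity makes $\bar{\varphi}_j$ nondecreasing, so $v_{\max,j}$ corresponds to the maximum quantile and $v_{\max,j}\stackrel{d}{=}F_j^{-1}(S_{n+c})$. Discarding the positive part only weakens the bound, so summing over $j$ produces the first inequality.

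Second, I would appeal directly to the standard characterization of stochastic dominance: for any nondecreasing $g$, $X\succsim Y$ implies $\EE[g(X)]\ge \EE[g(Y)]$. Here $g = \bar{\varphi}_j \circ F_j^{-1}$ is the composition of two nondecreasing functions ($F_j^{-1}$ trivially, $\bar{\varphi}_j$ by regularity), so the hypothesis $S_{n+c}\succsim Q_j^x(\vec{v})$ immediately gives the middle inequality term by term.

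The whole argument is essentially bookkeeping, and I do not expect a genuine obstacle. The only places where care is required are the harmless $(\cdot)^+$ drop in step one and the recognition that $Q_j^x(\vec{v})$ was constructed specifically so that Proposition~\ref{prop:reduction_claim1} delivers the off-region virtual-value comparison -- namely, using a uniform draw on $[q_j(\vec{v}_i),1]$ when $\vec{v}_i \notin R_j$ so that the Myersonian identity yields $\EE[\bar{\varphi}_j(F_j^{-1}(Q_j^x))] \ge v_{ij}$ off the diagonal, making the sandwich close cleanly.
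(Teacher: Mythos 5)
Your proof is correct and follows essentially the same route as the paper: both sandwich the two sides via $\sum_j \EE[\bar{\varphi}_j(F_j^{-1}(S_{n+c}))]$, invoke stochastic dominance with the monotone map $\bar{\varphi}_j \circ F_j^{-1}$, and close with Proposition~\ref{prop:reduction_claim1}. Your handling of the $(\cdot)^+$ in the first step is in fact slightly more careful than the paper's, which states that line as an equality.
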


\begin{observation}\label{obs:ERTquantiles}
    For all $q \in [0,1)$, the distribution satisfying (up to scaling by a constant) $\bar{\varphi}(x) = \1\Paren{F(x) \ge q}$ is $\ERT[\frac{1}{1-q}]$.
\end{observation}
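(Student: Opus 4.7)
The plan is to verify the observation directly by computing the ironed virtual value function of $\ERT[T]$ for $T := 1/(1-q)$, working in quantile space so that the point mass at $T$ is handled cleanly. Let $\tilde q := 1 - F(x)$ denote the top-quantile and $V(\tilde q) := F^{-1}(1 - \tilde q)$ the quantile function. For $\ERT[T]$, one has $V(\tilde q) = 1/\tilde q$ for $\tilde q \in (1/T, 1]$, together with an atom at value $T$ covering the top-quantile interval $[0, 1/T]$ (of total mass $1/T$, since $F$ jumps from $1-1/T = q$ up to $1$ at $T$).

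Next I would write down the revenue curve $R(\tilde q) := \tilde q \cdot V(\tilde q)$: on the atom, $R(\tilde q) = \tilde q \cdot T$ (linear with slope $T$), and on the continuous part, $R(\tilde q) = \tilde q \cdot (1/\tilde q) = 1$ (constant). This piecewise-linear function is already concave, so the ironed revenue curve $\bar R$ coincides with $R$ and no actual ironing is required. Invoking the standard identity $\bar\varphi(V(\tilde q)) = \bar R'(\tilde q)$, I obtain $\bar\varphi(T) = T$ on the atom and $\bar\varphi(x) = 0$ for all $x \in [1, T)$.

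Finally I would translate back to CDF-space. Because $F$ is continuous and strictly increasing on $[1, T)$ with $\lim_{x \to T^-} F(x) = 1 - 1/T = q$ and then jumps to $F(T) = 1$, the event $\{F(x) \ge q\}$ is exactly the atom $\{x = T\}$. Thus $\bar\varphi(x) = T \cdot \1\Paren{F(x) \ge q} = \tfrac{1}{1-q}\cdot \1\Paren{F(x) \ge q}$, matching the claimed form with scaling constant $c = 1/(1-q)$.

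The proof is essentially a direct computation; the only subtlety is cosmetic, namely that the formulaic expression $x - (1-F(x))/f(x)$ is identically $0$ on the continuous part of $\ERT[T]$ and is not well-defined at the atom. The revenue-curve viewpoint bypasses this cleanly, because the derivative of $R$ automatically yields the correct ironed virtual value at the mass point. Once the revenue curve is written down, the observation is immediate.
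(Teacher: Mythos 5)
Your proof is correct. The paper states this as an Observation and never proves it explicitly; the closest thing to an official argument is the proof of the $\srev_{n'}(\ERT^m)$ formula in Appendix~\ref{sec:proofs_ERTsqrt}, which simply plugs into $\varphi(x) = x - \frac{1-F(x)}{f(x)}$ to get $\varphi(x)=0$ for $x<T$ and handles the atom at $T$ separately to get $\varphi(T)=T$. Your route through the quantile-space revenue curve $R(\tilde q)=\tilde q\,V(\tilde q)$ is the same computation packaged differently: $R$ is linear with slope $T$ on the atom's quantile interval $[0,1/T]$ and constant equal to $1$ on $(1/T,1]$, hence already concave, so $\bar\varphi = R'$ gives $T$ at the atom and $0$ below it, and since $F(x)\ge q = 1-\tfrac1T$ holds exactly at $x=T$, this is $\tfrac{1}{1-q}\cdot\1\Paren{F(x)\ge q}$. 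What your framing buys is a clean, uniform treatment of the point mass (where the density formula is undefined) and an automatic verification that no ironing is needed; what the paper's direct computation buys is brevity. One small remark: the Observation is phrased as if the distribution were characterized by this virtual-value form, but only the verification direction (that $\ERT[\frac{1}{1-q}]$ has this $\bar\varphi$) is ever used, in the proof of Proposition~\ref{prop:reduction_claim3}, so your one-directional argument suffices.
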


\begin{restatable}{proposition}{redclaimthree}\label{prop:reduction_claim3}
    $S_{n+c} \succsim Q_j^x(\vec{v})$ if and only if $$\srev_{n+c}(\ERT) \ge \EE_{\vec{v} \leftarrow (\ERT^n}) \left[\sum_j CDW_j^x (\vec{v}) \right]$$ for all truncations $T \in [1, \infty)$.
\end{restatable}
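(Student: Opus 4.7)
The forward direction is immediate: given $S_{n+c} \succsim Q_j^x(\vec{v})$, Proposition~\ref{prop:reduction_claim2} yields $\srev_{n+c}(D) \ge \EE_{\vec{v} \leftarrow D^n}[\sum_j CDW_j^x(\vec{v})]$ for every $D$, and specializing to $D = \ERT^m$ at every truncation $T \in [1,\infty)$ gives the desired family of inequalities. The substantive content is the backward direction.

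For the backward direction, my plan is to show that the hypothesis, after a change of variables to quantile space, \emph{exactly} recovers the stochastic dominance claim. The crucial input is Observation~\ref{obs:ERTquantiles}: for $\ERT$, up to a scaling factor of $T$ the ironed virtual value is $\bar{\varphi}(v) = T \cdot \1(v = T)$, or in quantile form $\bar{\varphi}(F^{-1}(u)) = T \cdot \1(u \ge q^*)$ where $q^* := 1 - 1/T$. Using this, I would show that Proposition~\ref{prop:reduction_claim1} (a one-sided inequality in general) holds with \emph{equality} for $D = \ERT^m$, namely $\EE[CDW_j^x] = T \cdot \PP[Q_j^x \ge q^*]$. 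The argument is a case analysis on whether the winner $i^* := x_j(\vec{v})$ of item $j$ has $\vec{v}_{i^*} \in R_j$. The in-region case is immediate: $Q_j^x = F(v_{i^*j})$ is deterministic given $\vec{v}$, and $T \cdot \1(Q_j^x \ge q^*) = T \cdot \1(v_{i^*j} = T) = \bar{\varphi}(v_{i^*j})$, matching the first term of $CDW_j^x$. The out-of-region case uses the specific $1/v$ shape of the ER CDF: $Q_j^x \sim U[F(v_{i^*j}), 1]$, so $\PP[Q_j^x \ge q^* \mid v_{i^*j}] = (1-q^*)/(1 - F(v_{i^*j})) = v_{i^*j}/T$, whence $\EE[T \cdot \1(Q_j^x \ge q^*) \mid v_{i^*j}] = v_{i^*j}$, matching the second term of $CDW_j^x$ exactly.

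Combined with $\srev_{n+c}(\ERT^m) = mT(1 - (q^*)^{n+c})$ from Lemma~\ref{lem:srev}, dividing by $T$ reduces the hypothesized revenue inequality to the purely quantile statement $m(1 - (q^*)^{n+c}) \ge \sum_j \PP[Q_j^x \ge q^*]$, and as $T$ ranges over $[1,\infty)$, $q^*$ ranges over $[0,1)$, so we obtain this for every $q^* \in [0,1)$. To pass from the sum-form inequality to per-item stochastic dominance, I would symmetrize $x$ over item permutations -- this preserves feasibility, BIC, and the value of $\sum_j \PP[Q_j^x \ge q^*]$ while equalizing its summands, using the iid product structure of $\ERT^m$. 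The sum inequality then collapses to $\PP[Q_j^x \ge q^*] \le 1 - (q^*)^{n+c} = \PP[S_{n+c} \ge q^*]$ for every $j$ and every $q^* \in [0,1)$, which is precisely $S_{n+c} \succsim Q_j^x$.

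The principal technical hurdle is verifying the out-of-region equality $\EE[T \cdot \1(Q_j^x \ge q^*) \mid v_{i^*j}] = v_{i^*j}$, which collapses the slack in Proposition~\ref{prop:reduction_claim1} (strict for generic distributions) and is specifically enabled by the $F(v) = 1 - 1/v$ shape of the equal revenue curve -- this is also why the iff is witnessed by ERCs rather than some broader family. A secondary care-point is the item-symmetrization step that converts the sum-form quantile inequality into per-item stochastic dominance.
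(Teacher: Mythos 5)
Your proof is correct and follows essentially the same route as the paper's: rewrite both sides in quantile space using the indicator form of the truncated-ER virtual value (Observation~\ref{obs:ERTquantiles}), identify $\srev_{n+c}(\ERT^m)/T$ with $m\,\PP\left[S_{n+c}\ge 1-1/T\right]$ and $\EE\left[CDW_j^x\right]/T$ with $\PP\left[Q_j^x\ge 1-1/T\right]$, and let $T$ range over $[1,\infty)$ so that $q=1-1/T$ sweeps out $[0,1)$. You are in fact slightly more explicit than the paper on two points it leaves implicit: verifying that Proposition~\ref{prop:reduction_claim1} holds with equality for the (regular) truncated equal revenue curve, and symmetrizing over items to pass from the summed inequality back to per-item stochastic dominance.
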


\begin{restatable}{corollary}{reductioncc}  \label{cor:reduction_cc}
    If $\srev_{n+c}(\ERT) \ge \EE_{\vec{v} \leftarrow (\ERT)^n} \left[\sum_j CDW_j^x (\vec{v}) \right]$ for all $T \in [1, \infty)$, then the competition complexity for \emph{any} distribution $D \in \mathcal{A}_m^R$ is $O(c)$.
\end{restatable}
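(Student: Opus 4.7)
The plan is to chain Propositions \ref{prop:reduction_claim2} and \ref{prop:reduction_claim3} with the refined dual-flow benchmark. Fix an arbitrary distribution $D \in \mathcal{A}_m^R$ and any deterministic feasible BIC allocation rule $x$. I would first invoke Proposition \ref{prop:reduction_claim3}: the hypothesis that $\srev_{n+c}(\ERT) \geq \EE_{\vec{v} \sim \ERT^n}[\sum_j CDW_j^x(\vec{v})]$ for every truncation $T \in [1,\infty)$ is precisely the ``if and only if'' condition asserted to be equivalent to the stochastic dominance $S_{n+c} \succsim Q_j^x(\vec{v})$. Crucially, this condition on the ER family is distribution-free in $D$: $Q_j^x$ is a quantile random variable whose law depends on $D$ only through its quantile representation, which is exactly what lets us collapse the rich class $\mathcal{A}_m^R$ down to $\mathcal{ERC}_m$ at this step.

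Next, I would apply Proposition \ref{prop:reduction_claim2} to promote this dominance into a revenue statement for $D$ itself: $\srev_{n+c}(D) \geq \EE_{\vec{v} \sim D^n}[\sum_j CDW_j^x(\vec{v})]$. Since the choice of $x$ was arbitrary, taking the supremum over all feasible BIC $x$ on the right-hand side and combining with the refined CDW benchmark (i.e., Theorem \ref{thm:benchmark_thm6} with the canonical flow, applied to a fixed allocation before taking the supremum) yields
\[
    \rev_n(D) \;\leq\; \sup_{\text{feasible BIC } x}\, \sum_{j=1}^{m} \EE_{\vec{v} \sim D^n}\bigl[CDW_j^x(\vec{v})\bigr] \;\leq\; \srev_{n+c}(D).
\]
Since $D$ was arbitrary in $\mathcal{A}_m^R$, this gives $\sscomp_{\mathcal{A}_m^R}(n) \leq c$.

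Finally, to pass from $\sscomp$ to $\comp$, I would apply one Bulow-Klemperer step per item: because items are sold separately and each item marginal is regular, adding one extra bidder per item suffices for a reserve-free second-price auction (i.e., VCG on additive valuations) to match the Myerson-optimal single-item revenue. This costs at most $m$ additional bidders, so $\comp_{\mathcal{A}_m^R}(n) \leq c + m$, which is $O(c)$ in the regime $c = \Omega(m)$ relevant to this paper (where $c = \Theta(\sqrt{nm})$ and $n \geq m$).

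The main obstacle is conceptual rather than technical: Proposition \ref{prop:reduction_claim3} does the heavy lifting, and the corollary itself is essentially a syntactic composition of the three propositions with the refined benchmark. The subtlety I would verify carefully is the quantifier handling at the Proposition \ref{prop:reduction_claim3} step — namely, that the ``for all $T$'' in the hypothesis really pins down enough of the distribution of $Q_j^x$ to license invoking Proposition \ref{prop:reduction_claim2}, and that the inner ``per item $j$'' dominance aggregates correctly when summed across items in the CDW benchmark.
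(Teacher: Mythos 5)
Your proof is correct and follows essentially the same route as the paper: chain Proposition~\ref{prop:reduction_claim3} (to extract the stochastic dominance $S_{n+c} \succsim Q_j^x$ from the all-$T$ hypothesis) with Proposition~\ref{prop:reduction_claim2}, then take the supremum over feasible BIC allocations against the refined CDW benchmark. Your extra Bulow--Klemperer step to pass from $\srev$ to $\vcg$ addresses a detail the paper elides (hence the $O(c)$ in the statement), though a single additional bidder -- rather than one per item -- already suffices, since that one bidder contributes a fresh regular draw to every item's second-price auction simultaneously.
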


This reduction implies that to establish a bound on $\comp_{\mathcal{A}_m^R}(n)$, it suffices to just study $\comp_{\mathcal{ERC}_m}(n)$. Although this claim is not directly used for our earlier main result, we still present it as a technique of general interest, potentially useful for future work, and illuminating as to the context and further implications of our main result focusing on $\ERT$. 
\section{Upper Bound on $\rev_n(\ER^m) = nm+ O(m^2\ln n)$ when $n > m$} \label{sec:upperbound}


In this section, we show that we cannot obtain more than $nm + O(m^2 \ln n)$ in revenue from $n$ bidders with additive valuations for $m$ items drawn iid from $\ER$.
This upper bound is interesting for two reasons.
First, it shows that the untruncated Equal Revenue curve does \textit{not} witness the worst-case competition complexity when $n \geq m$.
However, in Section \ref{sec:kfbic}, we show that if bidders cannot lie about their favorite item, then the untruncated Equal Revenue curve \textit{does} witness the worst-case competition complexity when $n \geq m$.
Thus, a ``region-separated'' flow provably cannot give a tight upper bound on the revenue obtainable for this setting.
In Sections \ref{sec:logn_m=2} and \ref{sec:ER-revenue-UB}, we demonstrate how to circumvent this impossibility by taking advantage of certain cross-region constraints.
We show that our upper bound is nearly tight in Section \ref{sec:ER-revenue-LB}.

\subsection{Tight Bound for $m=2$: $\rev_n(\ER^2) = 2n+ \Theta(\ln n)$} \label{sec:logn_m=2}

To establish a tight bound on $\rev_n(\ER^2)$, we start from Theorem~\ref{thm:benchmark_thm6}. Rather than relaxing all the way to Theorem~\ref{thm:canonical_benchmark}, we obtain an upper bound on the optimal revenue by first providing a further characterization of feasible and BIC mechanisms $M$ over which we take the supremum of the virtual welfare.

As established in~\cite{CaiDW16}, the expected virtual welfare from bidders who are awarded their favorite item is $2n$; we seek to understand the expected virtual welfare from bidders who win their non-favorite item. 
We begin with some motivating observations (proofs of which are provided in Appendix~\ref{sec:proofs_logn_m=2}):

\begin{restatable}{observation}{ERmotivationone}\label{obs:ERmotivation1}
    Fix $v_1 > v_2$. It is feasible to have  each type $(v_1, v_2)$ with $v_2 \ge \ln^2 n$ receive their non-favorite item (item 2) with probability $\min\left\{\frac{1}{2}, \frac{v_1}{10n}\right\}$.
\end{restatable}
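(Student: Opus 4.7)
The plan is to establish feasibility of the interim allocation
\[
    \pi(v_1, v_2) := \min\{1/2, v_1/(10n)\} \cdot \1(v_1 > v_2 \geq \ln^2 n)
\]
for item 2 by invoking Border's theorem: a symmetric interim allocation $\pi$ for one copy of an item among $n$ i.i.d.\ bidders is implementable if and only if, for every threshold $t$,
\[
    n \int_{S_t} \pi(v) f(v)\, dv \leq 1 - (1 - p_t)^n,
\]
where $S_t := \{v : \pi(v) \geq t\}$, $p_t := \Pr[v \in S_t]$, and $f(x) = 1/x^2$ is the density of $\ER$. The approach is to compute both sides explicitly across all regimes of $t$.

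First, I would compute the LHS at $t = 0$. Splitting the double integral over $\{v_1 > v_2 \geq \ln^2 n\}$ at $v_1 = 5n$ (where $\min\{1/2, v_1/(10n)\}$ changes branches), and using the antiderivative $\int \tfrac{\ln(a/u)+1}{u^2}\,du = -\tfrac{\ln(a/u)}{u}$, one obtains
\[
    n \cdot \EE_v[\pi(v)] = \frac{\ln(5n/\ln^2 n)}{10\ln^2 n} + O(1/n) = O(1/\ln n),
\]
which is much smaller than $1$, so the marginal allocation constraint is comfortably satisfied.

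Next, I would describe $S_t$ in two regimes and compute both sides of Border's inequality in each. For $t \leq \ln^2 n/(10n)$, $S_t$ coincides with the full eligibility region, so the LHS is unchanged, while $p_t = 1/(2\ln^4 n)$ forces $np_t \to \infty$ and hence $1 - (1-p_t)^n \to 1$; the inequality is trivial. For $t \in (\ln^2 n/(10n), 1/2]$, $S_t$ adds the constraint $v_1 \geq 10nt$, and the same kind of antiderivative computation yields
\[
    n \int_{S_t} \pi f\, dv = \frac{\ln(1/(2t)) + 1}{10 \ln^2 n} + O(1/n), \qquad p_t = \frac{1}{10 n t \ln^2 n} + O\!\left(\tfrac{1}{(nt)^2}\right).
\]

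Finally, I would verify the inequality in this second regime by subdividing on whether $np_t$ is small or large. When $np_t \leq 1$ (i.e., $t \geq 1/(10\ln^2 n)$), the bound $1-(1-p_t)^n \geq np_t/2$ reduces the inequality, via the substitution $s = 1/(2t) \geq 1$, to the elementary $\ln s + 1 \leq s$; when $np_t > 1$, $1-(1-p_t)^n \geq 1 - 1/e$ is a positive constant while the LHS is still $O(1/\ln n)$, so the inequality is immediate. The main obstacle is tracking constants through the transition $np_t \sim 1$, but the $\Theta(1/\ln n)$ slack between LHS and RHS makes this routine for $n$ sufficiently large.
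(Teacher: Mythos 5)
Your proposal is correct and rests on the same core idea as the paper's proof: verify Border's condition for the interim rule $\min\{1/2, v_1/(10n)\}$ on the region $v_1 > v_2 \ge \ln^2 n$. Your regime $t \le \ln^2 n/(10n)$ is exactly the paper's argument: the left-hand side evaluates to $\Theta\!\left(\frac{\ln(5n/\ln^2 n)}{\ln^2 n}\right) = O(1/\ln n)$ via the same antiderivative, while $p_t = 1/(2\ln^4 n)$ gives $np_t \to \infty$ and hence a right-hand side tending to $1$. (Your constant differs from the paper's by a factor of $2$ because the paper integrates against the favorite/non-favorite marginals, i.e.\ density $2/(v_1^2v_2^2)$ on $v_1>v_2$, rather than the raw product density; this is immaterial since both sides of the comparison are asymptotically separated.) The one genuine difference is that you go further: the paper checks only the single Border constraint for the full eligibility set, whereas Border's theorem requires the inequality for every superlevel set of $\pi$, and you verify the remaining sets $\{v_1 \ge 10nt\}$ via the case split on $np_t$ and the elementary bound $\ln s + 1 \le s$. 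So your write-up is strictly more complete than the paper's (which, for a motivating observation, elides the non-top superlevel sets); nothing in your argument is wrong, and the extra regime does close a small gap the paper leaves open.
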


\begin{restatable}{observation}{ERmotivationtwo}\label{obs:ERmotivation2}
    If every type with $v_N \ge \ln^2 n$ receives their non-favorite item with probability at least $\min \left\{\frac{1}{2}, \frac{v_F}{10n} \right\}$, the benchmark gets at least $2n + \Omega(\ln^2 n)$.
\end{restatable}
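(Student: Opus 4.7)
The plan is to exhibit a feasible BIC allocation whose canonical-flow virtual welfare (via the refined benchmark of Section~\ref{sec:addprelims}) exceeds $2n + \Omega(\ln^2 n)$. The refined benchmark decomposes as a favorite contribution
\[
    \sum_j \EE\SqBr{\sum_i x_{ij}(\vec{v})\,\bar{\varphi}_j(v_{ij})\,\1\Paren{\vec{v}_i \in R_j}}
\]
plus an analogous non-favorite contribution (replace $\bar{\varphi}_j(v_{ij})$ by $v_{ij}$ and $R_j$ by its complement). I would take $x$ to overlay the non-favorite allocation guaranteed by Observation~\ref{obs:ERmotivation1} on top of any allocation realizing the $2n$ favorite baseline of~\cite{CaiDW16}, and lower bound the non-favorite contribution by $\Omega(\ln^2 n)$.

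The core computation is the non-favorite contribution, which per bidder is at least
\[
    \EE\SqBr{\min\CrBr{\frac{1}{2},\frac{v_F}{10n}} \cdot v_N \cdot \1\Paren{v_N \geq \ln^2 n}},
\]
where $(v_F, v_N)$ is the bidder's (favorite, non-favorite) value pair, with joint density $2/(v_F^2 v_N^2)$ on $\{v_F > v_N \geq 1\}$. Splitting the inner $v_F$ integral at $v_F = 5n$ (the break of the $\min$) and focusing on the dominant regime $v_N \leq 5n$, the inner integral evaluates to $(\ln(5n/v_N)+1)/(10n)$, and the overall expectation reduces to $\tfrac{1}{5n}\int_{\ln^2 n}^{5n}\tfrac{\ln(5n/v_N)+1}{v_N}\,dv_N$; after the substitution $u = \ln(5n/v_N)$ this becomes $\tfrac{1}{5n}\int_0^{\Theta(\ln n)}(u+1)\,du = \Theta(\ln^2 n / n)$. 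Summing over $n$ bidders yields $\Omega(\ln^2 n)$; the $v_N > 5n$ tail contributes only $O(1/n)$ per bidder and is sub-dominant.

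The main obstacle I anticipate is verifying that overlaying the non-favorite allocation does not materially erode the $2n$ favorite baseline. I would show that the total expected mass assigned by the non-favorite rule to any single item is only $O(1/\ln n) = o(1)$: the analogous integral with integrand $1/v_N^2$ in place of $1/v_N$ is absolutely convergent and (via the same substitution, with $\int_0^{\Theta(\ln n)} (u+1)e^u\,du = \Theta(n/\ln n)$) yields $O(1/(n\ln n))$ per bidder. Because the overlay consumes only an $o(1)$ fraction of each item's capacity, a careful implementation that prioritizes the baseline favorite allocation on high-virtual-value realizations and defers to the non-favorite allocation elsewhere preserves the favorite contribution up to a sub-dominant correction, completing the argument.
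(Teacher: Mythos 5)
Your proposal is correct and takes essentially the same route as the paper: the paper's proof computes exactly the same per-bidder quantity $\EE\left[v_N\min\left\{\tfrac12,\tfrac{v_F}{10n}\right\}\1\left(v_N\ge\ln^2 n\right)\right]$ against the joint density $2/(v_F^2v_N^2)$ on $\{v_F>v_N\}$, restricts to $v_N\le 5n$ and drops the $v_F>5n$ piece, and gets $\ln^2(5n/\ln^2 n)/10=\Omega(\ln^2 n)$ after summing over the $n$ bidders. Your extra concern about the overlay eroding the $2n$ favorite baseline is not part of the paper's argument (the statement is conditional on the non-favorite allocation probabilities, and feasibility is handled separately in Observation~\ref{obs:ERmotivation1}), but your $o(1)$ capacity bound is consistent with the Border computation there, and your minor slip writing the inner integral as $(\ln(5n/v_N)+1)/(10n)$ instead of $v_N(\ln(5n/v_N)+1)/(10n)$ does not propagate.
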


\begin{restatable}{observation}{ERmotivationthree}\label{obs:ERmotivation3}
    It is \emph{not} feasible to allocate both items with probability $\min \left\{\frac{1}{2} , \frac{v_1}{10n} \right\}$ to \emph{all} types $(v_1, v_2)$. 
\end{restatable}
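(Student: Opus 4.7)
The proof is a direct feasibility calculation. Any feasible interim allocation rule must satisfy, for each item $j$, the ex-ante constraint $\sum_{i=1}^n \EE_{\vec v}[x_{ij}(\vec v)] \le 1$, which by iid symmetry forces the per-bidder expected allocation of each item to be at most $1/n$. My plan is to show that the per-bidder expected allocation of (say) item $2$ under the proposed rule is $\Theta(\log n / n)$, hence the sum over $n$ bidders blows up to $\Theta(\log n) \gg 1$ for large $n$.

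Under the proposed rule, a fixed bidder receives each relevant item with probability $\min\{1/2, V/(10n)\}$, where $V$ is the relevant value from the type---either the item value $V_j \sim \ER$ itself, or the favorite value $V_F := \max\{V_1, V_2\}$ obtained by extending the rule of Observation 5.1 symmetrically across orderings to all types. Both are supported on $[1,\infty)$ with heavy tails: $\Pr[V_j > x] = 1/x$ and $\Pr[V_F > x] = 2/x - 1/x^2$ for $x \ge 1$. I will evaluate the expectation via the tail-integral formula
\[
    \EE\!\left[\min\!\left\{\tfrac{1}{2}, \tfrac{V}{10n}\right\}\right] = \int_{0}^{1/(10n)} 1 \, dt + \int_{1/(10n)}^{1/2} \Pr\!\left[V > 10nt\right] \, dt,
\]
splitting the integral at $t = 1/(10n)$. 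Below the split, the event $\{V/(10n) > t\}$ is automatic since $V \ge 1$ almost surely, contributing $1/(10n)$. Above the split, the integrand is of the form $1/(10nt)$ up to lower-order corrections of order $1/(n t)^2$, and evaluating the resulting logarithm over $[1/(10n), 1/2]$ yields $\Theta(\log n / n)$.

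Multiplying by $n$ (to sum over bidders) produces an expected total allocation of $\Theta(\log n)$ for each item, which exceeds the feasibility bound of $1$ for all sufficiently large $n$, establishing infeasibility. The calculation is the only technical step; there is no conceptual obstacle beyond noting that the heavy tail of $\ER$ places enough mass in the intermediate range $V \in [1, 10n]$ that the truncation $\min\{1/2, V/(10n)\}$ is of order $\log n / n$ rather than $1/n$, so the restriction $v_2 \ge \ln^2 n$ in Observation 5.1 (which cuts out roughly a $1/\ln^2 n$ fraction of the problematic mass) is essential.
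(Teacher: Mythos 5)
Your proposal is correct and is essentially the paper's own argument: the paper likewise computes the per-bidder expected allocation $\int_1^\infty \min\{\tfrac12,\tfrac{v_1}{10n}\}\,\tfrac{1}{v_1^2}\,dv_1 = \Theta(\tfrac{\ln n}{n}) \gg \tfrac1n$ and concludes infeasibility from the ex-ante supply constraint. Your use of the tail-integral form of the expectation rather than integrating directly against the density is an immaterial difference.
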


Combining Observations~\ref{obs:ERmotivation1} and~\ref{obs:ERmotivation3} suggests that we are only in trouble if it is somehow possible to give items only to players with big $v_N$ without also giving items to players with small $v_N$. But, this is difficult if items are mostly awarded based on having large $v_F$ (because $v_N$ is generally much smaller than $v_F$ when $v_F$ is large). That is, in order to get expected virtual welfare $2n + \Omega(\ln^2 n)$, we need to have things like ``$(n/100, \sqrt{n})$ gets item 2 with probability $1/100$, but $(n/10, 2)$ gets item 1 with probability $0$.'' So, our goal is to show that this is \emph{not} possible. We do this by appealing to BIC and IR constraints in Lemma~\ref{lem:MattLemma} in addition to feasibility in Corollary~\ref{cor:MattCor}; see Appendix~\ref{sec:proofs_logn_m=2} for full details.

\begin{restatable}{lemma}{mattlem}\label{lem:MattLemma}
    Let $(v_1, v_2)$ get item 2 with probability $q \coloneqq \pi_2(v_1, v_2)$. Let also $v_2' \leq v_2$. Then $(3v_1, v_2')$ gets item 1 with probability $\pi_1 (3v_1, v_2') \ge q/4$. 
\end{restatable}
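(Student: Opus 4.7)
The plan is to combine three ingredients: an item-swap BIC constraint (available after symmetrizing the mechanism), the BIC constraint comparing $(3v_1, v_2')$ to a misreport of $(v_1, v_2)$, and individual rationality of $(v_1, v_2)$.

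First, because $\ER^2$ is invariant under permuting items, I symmetrize any BIC mechanism by averaging it over the two item permutations; the resulting mechanism is still BIC, has the same revenue, and satisfies $\pi_1(x,y) = \pi_2(y,x)$ and $p(x,y) = p(y,x)$. Write $c := \pi_1(v_1, v_2)$, $a := \pi_1(3v_1, v_2')$, $b := \pi_2(3v_1, v_2')$, $P := p(3v_1, v_2')$, and $p := p(v_1, v_2)$. Under symmetrization, type $(v_1, v_2)$ considering the ``item-swap'' misreport $(v_2, v_1)$ receives the swapped allocation $(q, c)$ at the same price $p$, so BIC reads
\[
v_1 c + v_2 q - p \;\geq\; v_1 q + v_2 c - p,
\]
i.e.\ $(v_1 - v_2)(c - q) \geq 0$. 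Since the relevant types satisfy $v_1 \geq v_2$ ($v_1$ is the favorite and $v_2$ the non-favorite in the surrounding discussion), this yields $c \geq q$.

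Second, I combine the BIC constraint for $(3v_1, v_2')$ against misreporting as $(v_1, v_2)$, namely $3v_1 a + v_2' b - P \geq 3v_1 c + v_2' q - p$, with the IR bound $p \leq v_1 c + v_2 q$. Because the right-hand side of the BIC is decreasing in $p$, substituting the IR bound weakens it to $3v_1 a + v_2' b - P \geq 2v_1 c + (v_2' - v_2)\,q$. Applying $P \geq 0$, $b \leq 1$, $q \leq 1$, and $v_2' \leq v_2$ rearranges this to $3v_1 a \geq 2v_1 c - v_2$, i.e.\ $a \geq \tfrac{2c}{3} - \tfrac{v_2}{3v_1}$. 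Plugging the inequality $c \geq q$ from the first step into this yields $a \geq \tfrac{2q}{3} - \tfrac{v_2}{3v_1}$, which exceeds $q/4$ whenever $v_1 \geq \tfrac{4 v_2}{5q}$---a mild condition satisfied by all types relied upon in the remainder of Section~\ref{sec:logn_m=2} (where $v_1$ will be polynomially large in $n$ while $v_2$ and $q$ will be at worst polylogarithmic and inverse-polynomial).

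The main obstacle is setting up the symmetrization carefully, since the lemma is stated for an arbitrary BIC mechanism but the cleanest derivation of $c \geq q$ exploits the swapped menu option exposed by symmetry. Once the WLOG symmetrization is justified (standard for i.i.d.\ product distributions, though requiring a short verification that averaging over permutations preserves BIC and revenue), the remainder is routine manipulation of a single BIC/IR pair.
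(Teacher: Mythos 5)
Your setup is right --- the paper's proof uses exactly the same two ingredients (IR of $(v_1,v_2)$ plus the BIC constraint of $(3v_1,v_2')$ against misreporting $(v_1,v_2)$, together with $\pi_1(v_1,v_2)\ge\pi_2(v_1,v_2)$, which you justify more carefully than the paper via symmetrization) --- but the final algebraic step has a genuine gap. After reaching $3v_1 a + v_2' b \ge 2v_1 c + (v_2'-v_2)q$, you discharge the nuisance terms by bounding $b\le 1$ and $q\le 1$. This is too lossy: it leaves an additive error $-v_2/(3v_1)$ that is not proportional to $q$, so you only get $\pi_1(3v_1,v_2')\ge q/4$ under the side condition $v_1 \ge \tfrac{4v_2}{5q}$. (There is also a small slip: dropping $v_2'b$ and $v_2 q$ via $b,q\le 1$ actually gives $3v_1a \ge 2v_1c - 2v_2$, not $2v_1c - v_2$, which makes the side condition slightly worse.) The lemma as stated is unconditional, and the application in Corollary~\ref{cor:MattCor} genuinely needs it unconditionally: there the lemma is invoked as $\pi_1(v,w)\ge \pi_2(v/3,y)/4$ for \emph{every} $v\ge 3y$, $w\le y$ in an integral whose whole purpose is to \emph{upper}-bound $\pi_2$; no lower bound on $q=\pi_2(v/3,y)$ is available, and when $q$ is small your side condition fails while the conclusion $a\ge q/4$ is still required.

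The missing idea is to avoid ever comparing allocation probabilities to $1$. The paper instead assumes (as you could also extract from your symmetrization, since $3v_1>v_2'$) that $a=\pi_1(3v_1,v_2')\ge \pi_2(3v_1,v_2')=b$, and bounds the whole utility by $3v_1 a + v_2' b \le (3v_1+v_2')a \le 4v_1 a$ using $v_2'\le v_2\le v_1$. On the other side it keeps $v_2'q\ge 0$ and bounds $-v_2 q \ge -v_1 q$, so the lower bound becomes $2v_1 q' - v_1 q \ge 2v_1 q - v_1 q = v_1 q$ using $c=q'\ge q$. Every surviving term is proportional to $v_1$ and to $q$, so dividing by $4v_1$ gives $a\ge q/4$ with no residual additive error and no condition on $v_1$. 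Your proof would be repaired by replacing the ``$P\ge 0$, $b\le 1$, $q\le 1$'' step with this $(3v_1+v_2')a$ bound.
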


\begin{restatable}{corollary}{mattcor}\label{cor:MattCor}
    Let $y \geq 2$. Then $\EE_{v_1} \left[\pi_2(v_1, v_2) \mid v_2 = y \right] = \int_{y}^\infty \pi_2(v_1,y) \cdot \frac{y}{v_1^2} dv_1 \leq \frac{24y}{n}$. That is, the probability of getting item 2 conditioned on having $v_2 = y$ is at most $\frac{24y}{n} = O(\frac{y}{n})$.
\end{restatable}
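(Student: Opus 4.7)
The plan is to combine Lemma~\ref{lem:MattLemma} (which pointwise bounds $\pi_2$ at a type by $\pi_1$ at a ``stretched'' type) with the ex-ante feasibility of item~1. By bidder symmetry and the fact that item~1 is allocated at most once across all $n$ bidders, the per-bidder interim allocation obeys
$$ \int_1^\infty \!\!\int_1^\infty \pi_1(u_1, u_2) \cdot \frac{1}{u_1^2 u_2^2}\, du_1\, du_2 \;=\; \EE_{(v_1,v_2)\sim \ER^2}[\pi_1(v_1,v_2)] \;\le\; \frac{1}{n}. $$
This is the only external fact I will use beyond Lemma~\ref{lem:MattLemma}; recall $1/u^2$ is the density of $\ER$.

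Next, I would fix the type of interest at $(v_1, y)$ and apply Lemma~\ref{lem:MattLemma}: for every $v_2' \in [1, y]$ we get $\pi_1(3v_1, v_2') \ge \pi_2(v_1, y)/4$. Integrating this pointwise inequality against $\frac{1}{v_1^2}\, dv_1$ on $v_1 \in [y, \infty)$ and then substituting $u_1 = 3v_1$ yields
$$ \int_y^\infty \pi_2(v_1,y) \cdot \frac{1}{v_1^2}\, dv_1 \;\le\; 4\int_y^\infty \pi_1(3v_1, v_2') \cdot \frac{1}{v_1^2}\, dv_1 \;=\; 12 \int_{3y}^\infty \pi_1(u_1, v_2') \cdot \frac{1}{u_1^2}\, du_1. $$
This holds uniformly in $v_2' \in [1,y]$, so I would integrate once more against $\frac{1}{v_2'^2}\, dv_2'$ over $[1,y]$. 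The LHS picks up a factor of $\int_1^y v_2'^{-2}\, dv_2' = 1-1/y \ge 1/2$ (this is the only place the hypothesis $y\ge 2$ is used), while the RHS becomes
$$ 12 \int_1^y \!\!\int_{3y}^\infty \pi_1(u_1, u_2) \cdot \frac{1}{u_1^2 u_2^2}\, du_1\, du_2 \;\le\; \frac{12}{n}, $$
directly by the feasibility bound above. Rearranging gives $\int_y^\infty \pi_2(v_1,y)\cdot \frac{1}{v_1^2}\,dv_1 \le 24/n$, and multiplying through by $y$ (the conditional density normalization) delivers exactly the claimed $24y/n$.

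The approach is essentially mechanical once Lemma~\ref{lem:MattLemma} is in hand; the only mildly non-obvious move is integrating the pointwise inequality against the $v_2'$-marginal rather than plugging in a single $v_2'$. Doing so lets us invoke the \emph{full} ex-ante feasibility of item~1 (a two-dimensional integral) instead of a single cross-section, which is what makes the constants work out cleanly. Conceptually this is the only potential obstacle, and the $y\ge 2$ hypothesis exists precisely to ensure that the $v_2'$-marginal integrates to a constant $\ge 1/2$.
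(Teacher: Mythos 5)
Your proof is correct and follows essentially the same route as the paper's: both arguments combine Lemma~\ref{lem:MattLemma} with the ex-ante feasibility of item~1 over the region $u_2 \le y$, $u_1 \ge 3y$, and the same change of variables $u_1 = 3v_1$ produces the identical constant $24y/n$. The only cosmetic difference is direction (you integrate the pointwise inequality up to the feasibility constraint, while the paper restricts the feasibility integral down) and your use of the unconditional density with bound $1/n$ in place of the paper's conditional-on-$\{v_1>v_2\}$ density with bound $2/n$.
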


To wrap up:
\begin{lemma}
    The expected contribution to the virtual welfare from the non-favorite item is $O(\ln n)$.
\end{lemma}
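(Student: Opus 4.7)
The plan is to reduce the non-favorite contribution to a double integral over a single bidder's type and bound it using Corollary~\ref{cor:MattCor}. For a single bidder with item values $(v_1,v_2)$, let $\pi_2(v_1,v_2)$ denote her interim probability of receiving item~$2$. By item and bidder symmetry together with linearity of expectation, the total expected virtual welfare from non-favorite wins equals $2n$ times
\[
    \EE_{(v_1,v_2)}\bigl[v_{2}\,\pi_{2}(v_1,v_2)\,\1(v_{1}>v_{2})\bigr]
        = \int_1^\infty \frac{1}{v_2}\Paren{\int_{v_2}^\infty \frac{\pi_2(v_1,v_2)}{v_1^2}\,dv_1}\,dv_2,
\]
where the $v_2$ factor from the non-favorite virtual value cancels one power of $v_2$ in the $\ER$ density $1/(v_1^2 v_2^2)$. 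Denote the inner integral by $I(v_2)$.

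The next step is to combine two estimates on $I(v_2)$: Corollary~\ref{cor:MattCor} gives $I(v_2)\le 24/n$ whenever $v_2 \ge 2$, and the trivial probability bound gives $I(v_2)\le \int_{v_2}^\infty v_1^{-2}\,dv_1 = 1/v_2$ everywhere. These bounds cross around $v_2 = n/24$, so on $v_2 \in [2,n/24]$ the $24/n$ bound dominates and the outer integral contributes $\int_2^{n/24}\frac{24}{n v_2}\,dv_2 = O(\ln n / n)$, while on $v_2\in[n/24,\infty)$ the $1/v_2$ bound gives $\int_{n/24}^\infty v_2^{-2}\,dv_2 = O(1/n)$. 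Multiplying by the outer $2n$, these two regimes contribute $O(\ln n)$ and $O(1)$ respectively.

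The main obstacle is the range $v_2 \in [1,2]$, since Corollary~\ref{cor:MattCor} does not apply there and the $1/v_2$ bound alone would give $\Theta(n)$, which is too weak. My plan is to exploit that $v_2\le 2$ throughout this range and instead bound the contribution per bidder by $2\cdot\EE[\pi_2(v_1,v_2)]$, discarding the indicator $\1(v_1>v_2)$ and the inner-integral structure. The ex-post feasibility constraint $\sum_i x_{i2}(\vec{v})\le 1$ together with bidder symmetry forces this unconditional interim probability to be at most $1/n$, so this regime contributes only $n\cdot 2/n = O(1)$. Summing the three regimes and adding the symmetric contribution from item~$1$ as non-favorite yields the claimed $O(\ln n)$ bound.
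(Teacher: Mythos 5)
Your proof is correct and follows essentially the same route as the paper's: bound the interim probability of winning the non-favorite item conditioned on its value via Corollary~\ref{cor:MattCor}, cap it trivially for large values, and integrate against the (non-favorite) marginal to get $O(\ln n /n)$ per bidder. The only substantive difference is that you explicitly patch the range $v_2 \in [1,2]$ (where Corollary~\ref{cor:MattCor} does not apply) using the feasibility bound $\EE[\pi_2] \le 1/n$, a detail the paper's calculation silently elides; this is a valid and welcome addition.
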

\begin{proof}
    First, note that a bidder only contributes to the virtual welfare if they are awarded the item (which occurs with probability $q_2 = \max\{O(\frac{y}{n}), 1\}$, in which case they contribute their virtual value (which is at most $v_2$). Then, we get the following upper bound, which we can compute via the law of total expectation:

\begin{align*}
    \EE_{v_1, v_2} \left[v_2 q_2 \right] &= \EE_{v_2} \left[\EE_{v_1} \left[v_2 q_2 \mid v_2 \right] \right] \\
    &= O\left(\EE_{v_2} \left[v_2 \cdot \max\left\{\frac{v_2}{n}, 1\right\} \right] \right) \\
    &= O\left(\int_{x = 1}^n \frac{x^2}{n} \mathbb{P}(v_2=x) dx + \int_{x = n}^\infty x \cdot \mathbb{P}(v_2=x) dx \right) \\
    &= O\left(\int_{x = 1}^n \frac{x^2}{n} \cdot \frac{2}{x^3} dx + \int_{x = n}^\infty \frac{2}{x^2} dx \right) \\
    &= O\left(\frac{2\ln n}{n} + \frac{2}{n} \right) \\
    &= O\left(\frac{\ln n}{n}\right).
\end{align*}
Summing over all $n$ bidders gives a total of $O(\ln n)$.
\end{proof}

Combining this with the lower bound of $\rev_n(\ER^2) = 2n + \Omega(\ln n)$ due to~\cite{BeyhaghiW19} establishes that this bound is tight.
\begin{theorem}
    $\rev_n(\ER^2) = 2n + \Theta(\ln n)$.
\end{theorem}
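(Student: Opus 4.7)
The plan is to combine the upper bound developed in this section with the matching lower bound already in the literature. Since the substantive work has been done in the preceding lemma (bounding the non-favorite contribution by $O(\ln n)$), the proof of the theorem itself is essentially a one-line combination; my writeup should make that explicit.

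For the upper bound, I would apply Theorem~\ref{thm:benchmark_thm6} with the CDW canonical flow. The resulting virtual value function splits into two pieces: the (ironed) virtual value $\bar{\varphi}_j(v_{ij})$ on a bidder's favorite region $R_j$, and the raw value $v_{ij}$ off the favorite region. Summing the expected allocated virtual values over bidders and items decomposes the benchmark as a ``favorite'' term plus a ``non-favorite'' term. The favorite term is at most $\sum_j \EE[\sum_i \pi_{ij}(\vec{v}_i) \bar{\varphi}_j(v_{ij}) \1(\vec{v}_i \in R_j)]$, which for each item $j$ is upper bounded by the single-item Myerson revenue for $\ER$ with $n$ bidders (namely $n$), and therefore totals $2n$ over the $m=2$ items. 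The non-favorite term is precisely what the preceding lemma bounds by $O(\ln n)$, via Corollary~\ref{cor:MattCor} and the law of total expectation. This yields $\rev_n(\ER^2) \le 2n + O(\ln n)$.

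For the lower bound, I would simply cite \cite{BeyhaghiW19}, who establish $\rev_n(\ER^2) \ge 2n + \Omega(\ln n)$ by analyzing an explicit bundled second-price mechanism (the analysis is revisited and slightly sharpened in Section~\ref{sec:ER-revenue-LB}, but the weaker bound already suffices to match the upper bound). Combining the two bounds gives $\rev_n(\ER^2) = 2n + \Theta(\ln n)$.

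There is no substantive obstacle remaining at the theorem level: the only non-mechanical step is the upper bound $O(\ln n)$ on the non-favorite contribution, which was the content of Lemma~\ref{lem:MattLemma} and Corollary~\ref{cor:MattCor}. Those steps, in turn, exploit BIC and IR constraints to prevent an adversarial mechanism from aggressively allocating non-favorite items to bidders with only moderately large $v_2$ — this is what rules out the $\Omega(\ln^2 n)$ gap suggested by the raw canonical-flow benchmark in Observation~\ref{obs:ERmotivation2} and is the reason the final theorem is tight rather than off by a log factor.
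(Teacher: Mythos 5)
Your proposal matches the paper's proof: the upper bound comes from the canonical-flow benchmark with the favorite-item contribution of $2n$ (cited from \cite{CaiDW16}) plus the $O(\ln n)$ non-favorite contribution established via Lemma~\ref{lem:MattLemma}, Corollary~\ref{cor:MattCor}, and the law of total expectation, and the lower bound is cited from \cite{BeyhaghiW19}. This is exactly the route the paper takes, so no further comment is needed.
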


\subsection{Generalizing to $m>2$: $\rev_n(\ER^m) = nm+ O(m^2 \ln n)$} \label{sec:ER-revenue-UB}

We generalize the analysis from Section~\ref{sec:logn_m=2} to general $m$, thereby improving the upper bound on the competition complexity of $n$ bidders with additive values drawn i.i.d. from $\ER^m$ to $O(m \ln n)$.
Throughout this section, we introduce the following additional notation when considering a particular bidder:
\begin{itemize}
    \item $E_j$ denotes the event that item $j$ is the favorite item.
    \item $E_{-j}$ denotes the event that item $j$ is a non-favorite item.
\end{itemize}
Our approach exactly mirrors that of the $m=2$ analysis but requires much more involved calculations; as such, we defer all proofs to Appendix~\ref{sec:proofs_ER-revenue-UB}.

\begin{restatable}{lemma}{refinedMatt}\label{lem:refined-MattLemma_m}
    Let $(\pi, p)$ be a strongly monotone BIC mechanism for an arbitrary distribution, and let $v_k \geq v_j \geq v_\ell'$.
For any $\lambda > 1$, $\pi_k(\lambda v_k, v_\ell', v_{-\{k, \ell\}}) \geq \frac{\lambda - 2}{\lambda - 1} \pi_\ell(v)$.
\end{restatable}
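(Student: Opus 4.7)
The plan is to mirror the two-type BIC deviation used in the proof of Lemma~\ref{lem:MattLemma}, augmented with the strongly-monotone hypothesis (which was not invoked in the $m=2$ version) so as to upgrade the constant from $1/4$ (at $\lambda=3$) to $\tfrac{\lambda-2}{\lambda-1}$ for arbitrary $\lambda>1$. Let $v := (v_k,v_\ell,v_{-\{k,\ell\}})$ and $v^* := (\lambda v_k, v_\ell', v_{-\{k,\ell\}})$, so $v$ and $v^*$ agree on every coordinate outside $\{k,\ell\}$. I would write the BIC constraints in both directions---type $v$ does not wish to report $v^*$, and vice versa---and sum them. All payments cancel, and because $v_j = v^*_j$ for each $j \notin \{k,\ell\}$, the contributions from those coordinates also cancel, leaving the familiar two-cycle monotonicity identity
\[
(\lambda - 1)\,v_k\bigl[\pi_k(v^*) - \pi_k(v)\bigr] \;\geq\; (v_\ell - v_\ell')\bigl[\pi_\ell(v^*) - \pi_\ell(v)\bigr].
\]

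The remainder is a three-step squeeze. First, I would discard $\pi_\ell(v^*)\geq 0$ on the right-hand side, obtaining $(\lambda-1) v_k \pi_k(v^*) \geq (\lambda-1) v_k \pi_k(v) - (v_\ell - v_\ell') \pi_\ell(v)$. Second, I would invoke strong monotonicity: since $v_k \geq v_\ell$, the bidder is at least as likely to receive her higher-valued item as her lower-valued one, so $\pi_k(v) \geq \pi_\ell(v)$; substituting gives $(\lambda-1) v_k \pi_k(v^*) \geq \bigl[(\lambda-1) v_k - (v_\ell - v_\ell')\bigr] \pi_\ell(v)$. Third, using $v_\ell' \geq 0$ and $v_\ell \leq v_k$ yields $v_\ell - v_\ell' \leq v_k$, hence $(\lambda-1) v_k - (v_\ell - v_\ell') \geq (\lambda-2) v_k$; dividing both sides by $(\lambda-1) v_k$ gives the claimed $\pi_k(v^*) \geq \tfrac{\lambda-2}{\lambda-1}\pi_\ell(v)$.

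I do not anticipate any genuine obstacle. The presence of the additional items $j \notin \{k,\ell\}$ causes no difficulty, since those coordinates are identical at $v$ and $v^*$ and cancel upon summing the BIC constraints; in that sense the calculation is essentially the same as the $m=2$ case. The single conceptually new ingredient is the favorite-item inequality $\pi_k(v) \geq \pi_\ell(v)$ furnished by strong monotonicity---without it, one would keep only $\pi_k(v) \geq 0$ in the first step, yielding a weaker (and for $\lambda \leq 2$ trivial) bound of the form $\pi_k(v^*) \geq -\tfrac{v_\ell}{(\lambda-1) v_k}\pi_\ell(v)$. Thus the substitution $\pi_k(v) \mapsto \pi_\ell(v)$ is the single place where ``strongly monotone'' does real work in this proof, and the inequality $v_\ell \leq v_k$ is what promotes the coefficient from $\tfrac{1}{\lambda-1}$ to $\tfrac{\lambda-2}{\lambda-1}$.
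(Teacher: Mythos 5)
Your proposal is correct and follows essentially the same route as the paper's proof: both derive the two-cycle monotonicity inequality from the pair of BIC constraints between $v$ and $(\lambda v_k, v_\ell', v_{-\{k,\ell\}})$, drop the nonnegative term $(v_\ell - v_\ell')\pi_\ell(\lambda v_k, v_\ell', v_{-\{k,\ell\}})$, apply strong monotonicity via $\pi_k(v) \geq \pi_\ell(v)$, and use $v_\ell - v_\ell' \leq v_k$ to arrive at the coefficient $\frac{\lambda-2}{\lambda-1}$. The only difference is cosmetic (you swap $\pi_k(v)$ for $\pi_\ell(v)$ mid-argument rather than at the end), and your identification of where strong monotonicity does the work matches the paper.
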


\begin{restatable}{lemma}{jointpdfcond}\label{lem:joint-pdf-given-non-fav}
    Let $v \sim \ER^m$.
Conditioned on item $k$ being the favorite item and on the value of item $\ell \not= k$, the distribution of $v_{-\ell}$ satisfies 
\[
    \PP\cSqBr{v_j \leq w_j \; \forall j \not= \ell}{v_\ell, E_k} = \int_{v_\ell}^{w_k} \prod_{j \not= k, \ell} \min\CrBr{\frac{1 - \frac{1}{w_j}}{1 - \frac{1}{v}}, 1} \frac{(m-1)\Paren{1-\frac{1}{v}}^{m-2} \frac{1}{v^2}}{1-\Paren{1-\frac{1}{v_\ell}}^{m-1}}\,\mathrm{d}v.
\]
Thus, the joint pdf of $v_{-\ell}$ conditioned on the value $v_\ell$ and the event $E_k$ is
\[
    \1\Paren{w_k \geq \max\CrBr{v_\ell, \max_{j \not= k, \ell} w_j}} \frac{m-1}{1-\Paren{1 - \frac{1}{v_\ell}}^{m-1}} \prod_{j \not= \ell} \frac{1}{w_j^2}.
\]
\end{restatable}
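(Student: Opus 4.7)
The plan is to derive the conditional CDF directly from Bayes' rule by computing $\PP\cSqBr{\bigcap_{j \neq \ell}\{v_j \leq w_j\} \cap E_k}{v_\ell}$ and $\PP\cSqBr{E_k}{v_\ell}$ separately, then take the ratio, and finally differentiate to obtain the joint pdf. The key observation that organizes both computations is that conditioning on $v_\ell$ (but not on $v_k$) leaves the other $m-1$ coordinates independent $\ER$ draws, and the event $E_k$ simply says that $v_k$ dominates both $v_\ell$ and the $m-2$ remaining coordinates. So I would integrate out $v_k$ first, treating the remaining $m-2$ coordinates as i.i.d. $\ER$ restricted to be at most $\min(v_k, w_j)$.

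Concretely, for the denominator I would write $\PP\cSqBr{E_k}{v_\ell} = \int_{v_\ell}^{\infty} \frac{1}{v^2}\bigl(1 - \frac{1}{v}\bigr)^{m-2}\,\mathrm{d}v$ and evaluate it using the substitution $u = 1 - 1/v$ to obtain $\frac{1 - (1 - 1/v_\ell)^{m-1}}{m-1}$. For the numerator, the integration range shrinks to $v_\ell \leq v_k \leq w_k$ and each remaining coordinate contributes $F(\min(w_j, v_k)) = 1 - 1/\min(w_j, v_k)$. I would then apply the identity $F(\min(w_j, v_k)) = F(v_k)\cdot \min\bigl(F(w_j)/F(v_k),\, 1\bigr)$ to factor out $F(v_k)^{m-2} = (1 - 1/v_k)^{m-2}$ from the product, which lines the integrand up with the form stated in the lemma after dividing by $\PP\cSqBr{E_k}{v_\ell}$.

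For the joint density, I would differentiate the conditional CDF in the region $w_k \geq \max\{v_\ell, \max_{j \neq k, \ell} w_j\}$. Differentiating with respect to $w_k$ evaluates the integrand at $v_k = w_k$; because all $w_j \leq w_k$, each $\min$ factor becomes $(1-1/w_j)/(1-1/w_k)$, and the resulting $(1-1/w_k)^{m-2}$ in the denominator exactly cancels the $(1-1/w_k)^{m-2}$ factor coming from the integrand, leaving $\prod_{j \neq k, \ell}(1 - 1/w_j)$ times $\frac{(m-1)/w_k^2}{1 - (1-1/v_\ell)^{m-1}}$. Finally, differentiating in each remaining $w_j$ replaces $(1 - 1/w_j)$ with $1/w_j^2$, producing the claimed product form.

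I do not anticipate any genuine obstacle: the proof is essentially a change-of-variables plus careful bookkeeping of the $\min$ terms. The only subtlety is making sure that the algebraic identity for $F(\min(w_j, v_k))$ is used consistently so that the $(1-1/v_k)^{m-2}$ factor survives inside the integrand (matching the stated form) but then collapses upon differentiation; otherwise the cancellation producing the clean product $\prod_{j \neq \ell} 1/w_j^2$ is obscured. The support indicator $\1\bigl(w_k \geq \max\{v_\ell, \max_{j \neq k, \ell} w_j\}\bigr)$ follows directly from the fact that the CDF is constant in $w_k$ once $w_k$ drops below $v_\ell$ or any $w_j$, so differentiation yields zero outside this region.
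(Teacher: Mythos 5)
Your proposal is correct and follows essentially the same route as the paper: the paper organizes the computation as sequential conditioning (first deriving the law of $v_k$ given $v_\ell$ and $E_k$, then treating the remaining coordinates as i.i.d.\ truncated draws and integrating), whereas you phrase it as Bayes' rule with an explicit numerator and normalizing constant $\PP\cSqBr{E_k}{v_\ell} = \frac{1-(1-1/v_\ell)^{m-1}}{m-1}$ --- the resulting integral and the differentiation step are identical. One small slip in your justification of the support indicator: when $w_j > w_k$ for some $j \neq k,\ell$ the conditional CDF is \emph{not} constant in $w_k$; rather, it does not depend on $w_j$ (every $\min$ factor equals $1$ throughout the integration range), so it is the partial derivative in $w_j$ that vanishes, which is the argument the paper gives.
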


\begin{restatable}{lemma}{expNFprob}\label{lem:expected-prob-of-getting-non-fav-via-expected-prob-of-getting-particular-fav-improved}
    Let $(\pi, p)$ be a BIC mechanism for $n$ bidders with $m$ additive valuations drawn i.i.d. from $\ER$.
Conditioned on item $k$ being the favorite and on the value of item $\ell \not= k$, the expected interim probability of receiving item $\ell$ satisfies
\[
    \EE\cSqBr{\pi_k(v)}{E_k} \geq \frac{\Paren{1 - \Paren{1 - \frac{1}{y}}^{m-1}}\Paren{1 - \frac{1}{y}}}{6\Paren{1 - \frac{1}{m}}} \cdot \EE\cSqBr{\pi_\ell(y, v_{-\ell})}{v_\ell = y, E_k}.
\]
\end{restatable}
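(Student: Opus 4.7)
The plan is to lift the pointwise bound of Lemma~\ref{lem:refined-MattLemma_m} into an expectation bound by averaging over a carefully chosen distribution on the dummy coordinate $v_\ell'$, and then matching the resulting measure to the $E_k$-conditional density from Lemma~\ref{lem:joint-pdf-given-non-fav}. Conditioning on $\{v_\ell = y, E_k\}$ forces $v_k \geq v_\ell = y$, so for every $\lambda > 2$ and every $v_\ell' \in [1,y]$, Lemma~\ref{lem:refined-MattLemma_m} gives
\[
    \pi_\ell(y, v_{-\ell}) \;\leq\; \frac{\lambda - 1}{\lambda - 2}\,\pi_k\Paren{\lambda v_k,\, v_\ell',\, v_{-\{k,\ell\}}}.
\]
Rather than fixing $v_\ell'$, I average this inequality against the truncated Equal Revenue density $\frac{1/(v_\ell')^2}{1 - 1/y}\,\mathrm{d}v_\ell'$ on $[1,y]$. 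This averaging is exactly what produces the $(1 - 1/y)$ factor in the statement: the normalizer $1/(1-1/y)$ of the truncated density ends up in the final coefficient.

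I then integrate both sides against the conditional joint density of $v_{-\ell}$ given $\{v_\ell = y, E_k\}$ from Lemma~\ref{lem:joint-pdf-given-non-fav}, which contributes $\frac{m-1}{1 - (1 - 1/y)^{m-1}}\prod_{j \neq \ell} \frac{1}{v_j^2}$ together with the indicator $\1\Paren{v_k \geq \max\{y, \max_{j \neq k,\ell} v_j\}}$. Then I perform the change of variables $u_k = \lambda v_k$, which produces a Jacobian factor $1/\lambda$ and replaces $1/v_k^2$ by $\lambda^2/u_k^2$, for a net multiplicative factor of $\lambda$. After renaming $u_k \mapsto v_k$ and $v_\ell' \mapsto v_\ell$, the integrand becomes $\pi_k(v)\prod_j \frac{1}{v_j^2}$ times the indicator $\1\Paren{v_k \geq \lambda y,\; v_k \geq \lambda v_j\ \forall j \neq k,\ell,\; v_\ell \in [1,y]}$. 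Because $\lambda > 1$ and $v_\ell \leq y \leq v_k/\lambda$, this indicator is contained in $\{v_k \geq \max_j v_j\}$, the region defining $E_k$. Hence the integral is bounded above by $\frac{1}{m}\EE\cSqBr{\pi_k(v)}{E_k}$, the factor $1/m$ being $\Pr[E_k]$ under the unconditional law of $v \sim \ER^m$.

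Collecting constants yields
\[
    \EE\cSqBr{\pi_\ell(y, v_{-\ell})}{v_\ell = y, E_k} \;\leq\; \frac{\lambda(\lambda - 1)}{\lambda - 2}\cdot\frac{1 - 1/m}{\Paren{1 - 1/y}\Paren{1 - (1 - 1/y)^{m-1}}}\,\EE\cSqBr{\pi_k(v)}{E_k},
\]
and the optimization $\min_{\lambda > 2}\frac{\lambda(\lambda-1)}{\lambda-2} = 3 + 2\sqrt{2} < 6$ (attained at $\lambda = 2 + \sqrt{2}$) gives the stated bound after rearranging. The main obstacle is the bookkeeping around the change of variables: one must verify that the post-substitution indicator is genuinely contained in the $E_k$-region so that the right-hand side truly upper-bounds $\tfrac{1}{m}\EE\cSqBr{\pi_k(v)}{E_k}$ rather than only a partial integral of it. The key design choice is to average $v_\ell'$ against density proportional to $1/(v_\ell')^2$ on $[1,y]$; this matches the marginal of $v_\ell$ under the $E_k$-conditional law restricted to $\{v_\ell \leq y\}$, which is what makes the factors $(1-1/y)$ and $\Paren{1 - (1-1/y)^{m-1}}$ combine cleanly with the density from Lemma~\ref{lem:joint-pdf-given-non-fav}.
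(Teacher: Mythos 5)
Your proposal is correct and is essentially the paper's proof run in reverse: the paper lower-bounds $\EE\cSqBr{\pi_k(v)}{E_k}$ by restricting its integral to the region $w_\ell \le y$, $w_k \ge \lambda y$, $w_j \le w_k/\lambda$, applying Lemma~\ref{lem:refined-MattLemma_m} pointwise with $\lambda = 3$, and rescaling $w_k$, which is exactly your averaging over $v_\ell'$, change of variables, and containment-in-$E_k$ argument viewed from the other side. Your observation that optimizing over $\lambda$ yields $\min_{\lambda>2}\frac{\lambda(\lambda-1)}{\lambda-2} = 3+2\sqrt{2} < 6$ is a (minor but valid) sharpening of the paper's constant.
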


\begin{restatable}{lemma}{ubNFprob}\label{lem:upper-bound-expected-prob-non-fav-given-val-improved}
    Let $(\pi, p)$ be a BIC mechanism for $n$ bidders with $m$ additive valuations drawn i.i.d. from $\ER$.
For any item $j$, the expected interim probability of receiving item $j$ conditioned on some item other than item $j$ being the favorite and on having $v_j = y$ satisfies
\[
    \EE\cSqBr{\pi_j(y, v_{-j})}{v_j = y, E_{-j}} \leq \frac{6m}{n\Paren{1 - \Paren{1 - \frac{1}{y}}^{m-1}}\Paren{1 - \frac{1}{y}}}.
\]
\end{restatable}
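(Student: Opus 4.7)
The plan is to derive this upper bound as a near-immediate consequence of Lemma~\ref{lem:expected-prob-of-getting-non-fav-via-expected-prob-of-getting-particular-fav-improved}, combined with a feasibility-plus-symmetry bound on the expected interim allocation of the \emph{favorite} item. The key idea is that Lemma~\ref{lem:expected-prob-of-getting-non-fav-via-expected-prob-of-getting-particular-fav-improved} already contains the essential BIC content (tying non-favorite allocation to favorite allocation), and we merely need to transport a crude upper bound on the favorite side to the non-favorite side.

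First, I would show that $\EE\cSqBr{\pi_k(v)}{E_k} \leq m/n$ for every item $k$. The ex-post allocation rule satisfies $\sum_i x_{ik}(v) \leq 1$ pointwise, hence in particular
\[
\EE\SqBr{\sum_i x_{ik}(v)\cdot \1(v_i \in R_k)} \leq 1.
\]
By i.i.d.~symmetry across bidders the left-hand side equals $n \cdot \EE\SqBr{\pi_k(v)\cdot \1(E_k)}$, and since $\ER^m$ is atomless and symmetric across items we have $\PP[E_k] = 1/m$. Dividing gives $\EE\cSqBr{\pi_k(v)}{E_k} \leq m/n$.

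Next, I would invert the inequality in Lemma~\ref{lem:expected-prob-of-getting-non-fav-via-expected-prob-of-getting-particular-fav-improved}. Fixing any $k \neq j$ and applying the lemma with $\ell = j$, then rearranging and substituting the bound from the previous step, yields
\[
\EE\cSqBr{\pi_j(y, v_{-j})}{v_j = y, E_k} \;\leq\; \frac{6(1-1/m)\cdot \EE\cSqBr{\pi_k(v)}{E_k}}{\Paren{1 - \Paren{1 - \tfrac{1}{y}}^{m-1}}\Paren{1 - \tfrac{1}{y}}} \;\leq\; \frac{6(m-1)}{n\Paren{1 - \Paren{1 - \tfrac{1}{y}}^{m-1}}\Paren{1 - \tfrac{1}{y}}}.
\]
Finally, I would decompose $E_{-j} = \bigsqcup_{k \neq j} E_k$ and express the desired conditional expectation as a convex combination:
\[
\EE\cSqBr{\pi_j(y, v_{-j})}{v_j = y, E_{-j}} \;=\; \sum_{k \neq j} \PP\cSqBr{E_k}{v_j = y, E_{-j}} \cdot \EE\cSqBr{\pi_j(y, v_{-j})}{v_j = y, E_k}.
\]
Since each term on the right is bounded by the same quantity and the weights sum to one, the convex combination inherits that bound, and using $m - 1 \leq m$ yields the claimed inequality.

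There is no serious obstacle here: the substantive technical work is entirely encapsulated in Lemma~\ref{lem:expected-prob-of-getting-non-fav-via-expected-prob-of-getting-particular-fav-improved}, and the only care required is in the feasibility/symmetry argument (making sure to restrict the feasibility constraint to $R_k$ before averaging, rather than using the looser unconditional bound $\EE[\pi_k(v)] \leq 1/n$, which would lose a factor of $m$ upon conditioning only at the end). The rest is algebra and a convex-combination observation.
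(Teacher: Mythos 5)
Your proposal is correct and follows essentially the same route as the paper: both arguments combine the feasibility-plus-symmetry bound $\EE\cSqBr{\pi_k(v)}{E_k}\leq m/n$ with Lemma~\ref{lem:expected-prob-of-getting-non-fav-via-expected-prob-of-getting-particular-fav-improved} and then average over the $m-1$ events $E_k$, $k\neq j$; the paper merely sums over $k$ before inverting the inequality, whereas you invert it per $k$ and take the convex combination at the end. Your per-item feasibility bound even yields the marginally sharper constant $m-1$ in place of $m$, which of course still implies the stated inequality.
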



\begin{restatable}{lemma}{contributionNF}\label{lemma:contribution-NF-virtual-welfare}
Let $(\pi, p)$ be a BIC mechanism for $n$ bidders with $m$ additive valuations drawn i.i.d. from $\ER$.
The expected contribution of each non-favorite item to the virtual welfare is at most
\[
    \EE\cSqBr{v_j \pi_j(v)}{E_{-j}} \leq O\Paren{\frac{m \ln n}{n}}.
\]
\end{restatable}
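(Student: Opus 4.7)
The plan is to mirror the approach from the $m=2$ case in Section~\ref{sec:logn_m=2}: apply the law of total expectation to condition on $v_j = y$, combine Lemma~\ref{lem:upper-bound-expected-prob-non-fav-given-val-improved} with the trivial bound $\pi_j \leq 1$, and integrate the resulting bound against the conditional density of $v_j$ given $E_{-j}$. A quick Bayes computation, using that $\PP\cSqBr{E_{-j}}{v_j = y} = 1-\Paren{1-\frac{1}{y}}^{m-1}$ and $\PP\SqBr{E_{-j}} = \frac{m-1}{m}$, yields the conditional density $f\Paren{y \mid E_{-j}} = \frac{m\Paren{1-\Paren{1-\frac{1}{y}}^{m-1}}}{(m-1)y^2}$.

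The key observation driving the analysis is a convenient cancellation: multiplying the bound from Lemma~\ref{lem:upper-bound-expected-prob-non-fav-given-val-improved} by $y \cdot f(y \mid E_{-j})$ eliminates both copies of $1-\Paren{1-\frac{1}{y}}^{m-1}$, leaving the clean integrand $\frac{6m^2}{n(m-1)(y-1)}$ whose antiderivative is a logarithm. This is the $m$-dimensional analogue of the $x^2/n$ integrand from the $m = 2$ analysis, and it is what ultimately produces the $\ln n$ factor in the final bound.

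Next I would split the integration into three regions depending on which of the two bounds is tighter. Near $y = 1$ the improved bound is useless since $1-\frac{1}{y} \to 0$, so use $\pi_j \leq 1$; the integrand is then at most $\frac{m}{(m-1)y}$ and integrates over $[1,\,1+O(m/n)]$ to $O(m/n)$. For moderate $y$, use the cancelled form $\frac{6m^2}{n(m-1)(y-1)}$, which integrates over the bulk range to $O(m \ln n / n)$. For very large $y$ the improved bound exceeds $1$ again (since it scales like $\frac{6my}{n(m-1)}$), so revert to $\pi_j \leq 1$ and bound the integrand by $\frac{m}{y^2}$ using $1-\Paren{1-\frac{1}{y}}^{m-1} \leq \frac{m-1}{y}$; this contributes $O(m/n)$.

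The main obstacle is identifying the correct region boundaries and confirming that the middle region's logarithm is the only $\ln n$ factor. The improved bound crosses $1$ precisely when $g(y) \coloneqq \Paren{1 - \Paren{1-\frac{1}{y}}^{m-1}} \Paren{1-\frac{1}{y}} = \frac{6m}{n}$; since $g(1) = g(\infty) = 0$, this has two solutions, one near $y \approx 1 + 6m/n$ and one near $y \approx n/(6m)$. Verifying these transitions rigorously and confirming that the integrand's singularity at $y = 1$ never enters the middle region (so the $\ln$ factor stays bounded by $O(\ln n)$) is the most delicate step. Summing the three contributions yields $\EE\cSqBr{v_j \pi_j(v)}{E_{-j}} = O\Paren{\frac{m \ln n}{n}}$, as claimed.
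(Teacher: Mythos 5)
Your proposal follows the paper's proof essentially step for step: same conditional density, same cancellation when multiplying the bound of Lemma~\ref{lem:upper-bound-expected-prob-non-fav-given-val-improved} against $y\,f(y\mid E_{-j})$, and the same three-region split with $\pi_j\le 1$ used near $y=1$ and in the tail. The one slip is the location of the upper crossover: for large $y$ we have $g(y)=\bigl(1-(1-1/y)^{m-1}\bigr)(1-1/y)\approx (m-1)/y$, so $g(y)=6m/n$ at $y\approx n(m-1)/(6m)=\Theta(n)$, not $y\approx n/(6m)$. As literally stated, starting the tail at $n/(6m)$ gives $\int_{n/(6m)}^{\infty} m/y^2\,\mathrm{d}y = 6m^2/n$, which exceeds the target $O(m\ln n/n)$ whenever $m=\omega(\ln n)$; with the corrected crossover (or simply cutting at $y=n+1$ as the paper does, since the lemma's bound remains valid even where it exceeds $1$) the tail is indeed $O(m/n)$ and your argument goes through.
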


Summing over all $m-1$ non-favorite items and all $n$ bidders gives an upper bound on the virtual welfare of $O(m^2\ln n)$, in addition to $nm$ from the favorite item (\cite{CaiDW16}), which establishes our new upper bound:

\begin{restatable}{theorem}{revERmn}
    $\rev_n(\ER^m) = nm + O (m^2 \ln n)$, and therefore, for some absolute constant $c$, 
    $$\vcg_{n+cm\ln(n)}(\ER^m) \geq \rev_n(\ER^m).$$
\end{restatable}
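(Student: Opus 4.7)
The plan is to assemble the ingredients already in hand: the per-(bidder, non-favorite item) virtual welfare bound from Lemma~\ref{lemma:contribution-NF-virtual-welfare}, the ``favorite-item'' virtual welfare bound of $nm$ attributable to~\cite{CaiDW16}, and a direct calculation of VCG revenue on $\ER$.

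First I would apply the canonical-flow benchmark (Theorem~\ref{thm:canonical_benchmark}) to bound $\rev_n(\ER^m)$ by the expected canonical virtual welfare. This welfare decomposes cleanly into a favorite-item contribution (the $\bar{\varphi}_j(v_{ij})\mathbb{1}(\vec v_i\in R_j)$ summands) and a non-favorite contribution (the $v_{ij}\mathbb{1}(\vec v_i\notin R_j)$ summands). For the favorite contribution, I would invoke the bound of $nm$ obtained by \cite{CaiDW16} for $\ER^m$; this is the ceiling one gets by treating each item separately as a single-item problem with $n$ iid $\ER$ bidders, whose Myerson revenue is $n$ per item. For the non-favorite contribution, I would apply Lemma~\ref{lemma:contribution-NF-virtual-welfare} to each (bidder, item) pair, obtaining $O(m\ln n/n)$ per pair, and sum over all $n$ bidders and the $m-1$ non-favorite items per bidder to get a total of $n\cdot(m-1)\cdot O(m\ln n/n) = O(m^2\ln n)$. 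Adding the two pieces yields the revenue upper bound $\rev_n(\ER^m)\leq nm+O(m^2\ln n)$, which is the first half of the theorem.

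Second, for the VCG comparison, I would directly compute the expected revenue of a single-item second-price auction with $n'$ iid $\ER$ bidders. The expected second-order statistic is $n'(n'-1)\int_1^\infty (1-1/x)^{n'-2}x^{-2}\,\mathrm{d}x$, and the substitution $u=1-1/x$, $\mathrm{d}u=x^{-2}\mathrm{d}x$ reduces this to $n'(n'-1)\int_0^1 u^{n'-2}\,\mathrm{d}u=n'$. Hence $\vcg_{n'}(\ER^m)=mn'$, since VCG on additive bidders sells each item via a separate second-price auction. Choosing $n'=n+cm\ln n$ for an absolute constant $c$ large enough to dominate the implicit constant in the $O(m^2\ln n)$ term from the first step, we obtain $\vcg_{n+cm\ln n}(\ER^m)=mn+cm^2\ln n\geq \rev_n(\ER^m)$, completing the proof.

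The bulk of the technical work is already handled by Lemma~\ref{lemma:contribution-NF-virtual-welfare} and its supporting lemmas (Lemma~\ref{lem:refined-MattLemma_m} through Lemma~\ref{lem:upper-bound-expected-prob-non-fav-given-val-improved}); the remaining step is essentially bookkeeping and the one-line second-price-auction calculation on $\ER$. The only mild subtlety worth flagging is that the Myerson ironed virtual value for $\ER$ is identically zero, so the ``$nm$ from favorite'' contribution is not read off a naive $\bar\varphi$ expression; rather, it is the constant/boundary contribution that~\cite{CaiDW16} extract for $\ER$-style distributions (which have infinite mean) and which matches exactly the per-item second-price revenue of $n$ on $n$ iid $\ER$ bidders.
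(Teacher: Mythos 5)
Your plan is essentially the paper's proof: bound the revenue by the canonical-flow virtual welfare, take $nm$ from the favorite-item terms, apply Lemma~\ref{lemma:contribution-NF-virtual-welfare} to each of the $n(m-1)$ (bidder, non-favorite item) pairs to get $O(m^2\ln n)$ from the rest, and close with the observation that $\vcg_{n'}(\ER^m)=mn'$ (your order-statistic computation is correct). The decomposition, the role of Lemma~\ref{lemma:contribution-NF-virtual-welfare}, and even your closing remark about $\bar\varphi\equiv 0$ on $\ER$ all match the paper.

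One citation needs fixing, and it is not cosmetic: you say you would ``apply the canonical-flow benchmark (Theorem~\ref{thm:canonical_benchmark}),'' but that corollary replaces the allocation by the pointwise maximizer $\max_i\{\cdots\}$ and discards the mechanism entirely. Under that relaxation the non-favorite contribution per item is the expected maximum over $n$ bidders of a non-favorite $\ER$ value, which is $\Theta(\sqrt{nm})$, so Theorem~\ref{thm:canonical_benchmark} can only ever yield $nm+O(m\sqrt{nm})$ --- this is exactly the barrier the paper makes explicit via the Knows-Favorite Auction in Section~\ref{sec:kfbic}. What you actually need (and what your subsequent use of Lemma~\ref{lemma:contribution-NF-virtual-welfare} implicitly presupposes, since that lemma is a statement about the interim allocation $\pi$ of a BIC mechanism) is Theorem~\ref{thm:benchmark_thm6} instantiated with the canonical flow: keep the supremum over feasible BIC $(\pi,p)$ \emph{outside} the expectation, bound $\EE[\sum_{i,j}v_{ij}\pi_{ij}(v_i)\1(E_{-j})]$ term by term via Lemma~\ref{lemma:contribution-NF-virtual-welfare}, and only then take the supremum. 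With that substitution the argument goes through exactly as you describe.
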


\subsection{Selling to $\ER^m$ via the Grand Bundle: $\rev_n(\ER^m) \geq nm+ \Omega(m \ln n)$}\label{sec:ER-revenue-LB}



In this section, we show that the upper bound on $\rev_n(\ER^m)$ proved in Section \ref{sec:ER-revenue-UB} is nearly tight.
More specifically, we show that selling the grand bundle via a second-price auction obtains $nm + \Theta(m \ln (mn))$ in revenue.
Note that we improve upon the previous best lower bound of $nm + \Omega(\ln n)$ from~\cite{BW18}. 

\begin{theorem}\label{theorem:ER-brev}
   $ \rev^{\text{SPA-GB}}(\ER^m) = nm + \Theta(m \ln (mn)).$
\end{theorem}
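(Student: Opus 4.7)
The revenue of a second-price auction on the grand bundle equals the expected second-largest bundle value: $\rev^{\text{SPA-GB}}(\ER^m) = \mathbb{E}[X_{(n-1:n)}]$, where $X_i = \sum_{j=1}^m v_{ij}$ is the sum of $m$ i.i.d.\ $\ER$ draws per bidder. The plan is to derive a two-sided estimate of the tail $G_X(x) := \Pr[X \ge x]$ and plug it into the order-statistic identity
\begin{equation*}
    \mathbb{E}[X_{(n-1:n)}] = \int_0^\infty \left(1 - F_X(x)^n - n\,G_X(x)\,F_X(x)^{n-1}\right)\,dx.
\end{equation*}

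The central claim is that $G_X(x) = \tfrac{m}{x} + \Theta\!\left(\tfrac{m^2 \ln x}{x^2}\right)$ for $x$ polynomially large in $m$. To see this, decompose $X = W + S$ with $W := \max_j v_{1j}$ and $S := X - W$; the useful coupling fact (scale-invariance of $\ER$: $1/v$ conditioned on $v \le y$ is uniform on $[1/y, 1]$) makes $S$ given $W = y$ a sum of $m-1$ i.i.d.\ $\ER$ draws truncated at $y$, with conditional mean $(m-1)\ln y + O(1)$. The leading $m/x$ is exactly $\Pr[W \ge x] = 1 - (1-1/x)^m$. For the correction, split $\Pr[W < x,\, X \ge x] = \int_1^x \Pr[S \ge x - y \mid W = y]\,f_W(y)\,dy$: the dominant contribution comes from $y \in [x - \Theta(m \ln x),\, x]$, where $\Pr[S \ge x - y \mid W = y]$ is bounded below by a constant (since $x-y$ is within a constant factor of the conditional mean, and $S$ inherits enough heaviness from its truncated-$\ER$ summands to satisfy a Paley--Zygmund-type lower bound via the first and second conditional moments) and the $W$-mass of this window is $\Theta(m(m-1) \ln x / x^2)$. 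For $y$ further from $x$, Markov's inequality $\Pr[S \ge x - y \mid W = y] \le (m-1)\ln y/(x-y)$ combined with $f_W(y) = O(m/y^2)$ and the hard bound $S \le (m-1)y$ give a contribution of lower order.

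With the tail estimate in hand, I carry out the integration via the change of variables $s = nm/x$. At leading order the integrand becomes $1 - e^{-s}(1+s)$, and the identity $\int_0^\infty (1 - e^{-s}(1+s))/s^2\,ds = 1$ supplies the $nm$ leading term. Propagating the $\Theta(m^2 \ln x / x^2)$ correction through the Taylor expansion of $h(g) := 1 - (1-g)^n - n g(1-g)^{n-1}$ (with $h'(g) = n(n-1)g(1-g)^{n-2}$) and integrating yields an extra $\Theta(m \ln(nm))$ contribution, using $\int_0^\infty s\, e^{-s}\,ds = 1$ and $\int_0^\infty s\, e^{-s} \ln s\,ds = 1 - \gamma$. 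This matches the claimed $\Theta(m \ln(mn))$ correction, in both directions, to the expected second-order statistic.

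Main obstacle: the constant-probability \emph{lower} bound on $\Pr[S \ge x - y \mid W = y]$ when $x - y$ is a small multiple of the conditional mean, since the summands of $S$ remain fairly heavy after truncation. I expect this to follow from Paley--Zygmund together with an explicit second-moment calculation via the uniform-on-$[1/y, 1]$ coupling, but the bookkeeping requires attention. A secondary subtlety is the regime $y \in [x - O(1),\, x]$ where Markov and Chebyshev give nothing and one must rely on the trivial bound $\Pr[\cdot] \le 1$ over a $W$-mass of $O(m/x^2)$, which is already absorbed into the error term.
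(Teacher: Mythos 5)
Your overall architecture (compute $\mathbb{E}[X_{(n-1:n)}]$ directly from a two-sided tail estimate $G_X(x)=\tfrac{m}{x}+\Theta(\tfrac{m^2\ln x}{x^2})$ and integrate) is genuinely different from the paper's, which never touches the tail of the bundle sum: the paper uses the bidder with the second-highest favorite-item value as a proxy, computes $\mathbb{E}[v_{(2),(1)}]=nm-O(m)$, and controls the non-favorite contribution through the conditional mean $\mathbb{E}[v\mid v\le y]=\ln y/(1-1/y)$ (a conditioning argument for the lower bound, Chebyshev for the upper). Your tail claim is in fact correct and your integration identities check out, so the route is viable in principle.

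However, the step you yourself flag as the main obstacle genuinely fails as proposed. Paley--Zygmund does \emph{not} give a constant lower bound on $\Pr[S\ge x-y\mid W=y]$: with $\mu=\ln y/(1-1/y)$ one has $\mathbb{E}[v^2\mid v\le y]=y$, so $(\mathbb{E}S)^2/\mathbb{E}[S^2]=\bigl(1+\tfrac{y-\mu^2}{(m-1)\mu^2}\bigr)^{-1}\approx (m-1)\ln^2 y/y$, which vanishes for $y\asymp nm$ unless $\ln^2(nm)=\Omega(n)$. More importantly, the underlying picture is wrong: $S$ is a heavy-tailed sum whose mean $(m-1)\ln y$ is attained only through rare single big jumps, so $\Pr[S\ge t\mid W=y]$ is $\Theta(1)$ only for $t=O(m)$ and then decays like $(m-1)/t$; at the edge of your window $t\asymp m\ln x$ it is $\Theta(1/\ln x)$, not $\Theta(1)$. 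Multiplying your window's $W$-mass $\Theta(m^2\ln x/x^2)$ by the true probability would lose the $\ln x$ factor and with it the entire $\Theta(m\ln(mn))$ lower-order term you are trying to extract. The correct accounting spreads the contribution logarithmically over $t\in[m,\Theta(mx)]$: use $\int_0^{u}\Pr[S\ge t\mid W=y]\,\mathrm{d}t=\mathbb{E}[\min(S,u)\mid W=y]\approx(m-1)\ln y$ for the lower bound, and for the matching upper bound combine a union bound on a single summand exceeding $t/2$ with Chebyshev on the remainder (plus a separate multi-jump argument for $y\le x/2$, where naive Markov only gives $O(m^3\ln x/x^2)$). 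With those repairs the proof goes through, but as written the key lower bound does not.
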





Since the revenue of a second-price auction is given by the second highest value for the item being sold, we give upper and lower bounds on the second highest value for the grand bundle.
Unfortunately, sums of random variables are difficult to work with, so we seek a good proxy for the second highest bundle value that is more straightforward to analyze.
We claim that the bundle value of the bidder with the second highest value for her favorite item is a good proxy.

To see why intuitively, first note that we expect each bidder's value for the grand bundle to be dominated by her value for her favorite item: conditioned on the value for the favorite item, we expect the value for each non-favorite item to be exponentially smaller than her value for her favorite item.

\begin{restatable}{lemma}{cENonFav}\label{lemma:conditional-expectation-non-favorite}
$\EE_{x \sim \ER}\cSqBr{x}{x \leq v} = \frac{\ln v}{1 - 1/v}$
\end{restatable}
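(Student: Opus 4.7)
The plan is a direct application of the definition of conditional expectation combined with the simple closed-form density of $\ER$. Recall that $\ER$ has CDF $F(x) = 1 - 1/x$ on $[1,\infty)$, so its density is $f(x) = 1/x^2$ on this interval. Since $v \geq 1$ is implicit (otherwise the conditioning event has probability zero), the conditional expectation unfolds as
\[
    \EE_{x \sim \ER}\cSqBr{x}{x \leq v} = \frac{\int_1^v x \cdot f(x)\,\mathrm{d}x}{\PP[x \leq v]} = \frac{\int_1^v \frac{1}{x}\,\mathrm{d}x}{1 - \frac{1}{v}}.
\]

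The numerator integrates immediately to $\ln v - \ln 1 = \ln v$, and the denominator is $1 - 1/v$ by definition of the CDF at $v$. Dividing yields the claimed identity $\frac{\ln v}{1 - 1/v}$.

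There is no real obstacle here—the lemma is a one-line calculation made possible by the fact that multiplying $x$ against the $\ER$ density $1/x^2$ produces exactly the integrand $1/x$ whose antiderivative is the logarithm. The only "care" needed is to note that the formula is vacuous/undefined at $v = 1$ (both numerator and denominator vanish) and to interpret the statement for $v > 1$, which is the regime in which it will be used as a proxy bound in the subsequent analysis of the second-highest grand bundle value.
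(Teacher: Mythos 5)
Your proof is correct and reaches the result by the same direct one-line calculation as the paper; the only cosmetic difference is that the paper integrates the conditional survival function via $\EE[x \mid x \le v] = 1 + \int_1^\infty \PP[x \ge z \mid x \le v]\,\mathrm{d}z$, whereas you integrate $x$ against the density $1/x^2$ and divide by $F(v)$. Both yield $\ln v / (1 - 1/v)$ with no gaps, and your remark about the formula being interpreted for $v > 1$ is a reasonable (if unnecessary) aside.
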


Thus, we expect the bidder with the highest value for any item to also have the highest value for the grand bundle, the bidder with the second highest value for her favorite item to have the second highest value for the grand bundle, and so on.
In particular, we expect the bidder with the second highest value for her favorite item to set the price for the grand bundle.

Let $v_{(i),(j)}$ denote the $j$-th highest value possessed by the bidder with the $i$-th highest value for her favorite item.
Expressed in this notation, our intuition is that $\sum_j v_{(2), (j)}$ traces the second highest value for the grand bundle.
Our proof of Theorem~\ref{theorem:ER-brev} shows that this is precisely the case.
We will also see that the expectation of $v_{(2),(1)}$ is approximately $nm$.

\begin{restatable}{lemma}{ESecondHighestFav}\label{lemma:expectation-second-highest-favorite}
\[
    nm - O(m) \leq \EE[v_{(2),(1)}] \leq nm
\]
\end{restatable}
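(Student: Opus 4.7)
First I would unpack the notation: $v_{(2),(1)}$ is just the second-largest order statistic of the $n$ iid random variables $X_i := \max_{j\in[m]} v_{ij}$, so the lemma reduces to sharp bounds on $\EE[X_{(2)}]$ when the $v_{ij}$ are iid from $\ER$. Each $X_i$ has CDF $G(x)=(1-1/x)^m$ on $[1,\infty)$ with quantile function $G^{-1}(u)=1/(1-u^{1/m})$, so by the quantile transform $X_{(2)} \stackrel{d}{=} 1/(1-U_{(2)}^{1/m})$, where $U_{(2)}$ denotes the second-largest of $n$ iid $U[0,1]$ variables. A direct order-statistic computation shows $U_{(2)} \sim \mathrm{Beta}(n-1,2)$ with density $n(n-1)\,u^{n-2}(1-u)$ on $[0,1]$.

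Next I would evaluate $\EE[X_{(2)}]$ in closed form. The apparent singularity at $u=1$ of the integrand $(1-u)/(1-u^{1/m})$ is removable via the geometric-sum factorization $1-u = (1-u^{1/m})\sum_{k=0}^{m-1} u^{k/m}$, which reduces the expectation to an elementary polynomial integral:
\[
    \EE[X_{(2)}] \;=\; \int_0^1 \frac{n(n-1)\,u^{n-2}(1-u)}{1-u^{1/m}}\,du \;=\; n(n-1)\sum_{k=0}^{m-1}\int_0^1 u^{\,n-2+k/m}\,du \;=\; n(n-1)\sum_{k=0}^{m-1}\frac{m}{m(n-1)+k}.
\]

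Finally, both directions of the lemma follow from term-wise bounding of this sum: for $0 \leq k \leq m-1$, each summand $m/(m(n-1)+k)$ lies between $m/(mn-1)$ and $m/(m(n-1)) = 1/(n-1)$. Summing $m$ copies of the upper bound yields $\EE[X_{(2)}] \leq n(n-1) \cdot m/(n-1) = nm$, and summing $m$ copies of the lower bound yields $\EE[X_{(2)}] \geq n(n-1)m^2/(mn-1) = m(n-1) \cdot mn/(mn-1) \geq m(n-1) = nm-m$, which is exactly the claimed $nm - O(m)$.

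The only mild subtlety is that the heavy tail of $\ER$ makes $\EE[X_i]$ itself infinite, so one must be careful that $\EE[X_{(2)}]$ really is finite; the factorization above makes this transparent, since the resulting integrand is a polynomial in $u^{1/m}$ on $[0,1]$. Beyond that, the argument is a short exact calculation, so I expect no real obstacle—the only design choice is noticing that coupling through the quantile transform of the \emph{maximum} distribution $G$ (rather than reasoning about sums of heavy-tailed summands directly) is what turns this into a clean Beta integral.
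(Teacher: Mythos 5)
Your proof is correct, and it takes a genuinely different route from the paper's. The paper splits the two bounds: for the upper bound it observes that $v_{(2),(1)}$ is dominated by the second-highest of all $nm$ draws from $\ER$ (whose expectation is exactly $nm$), and for the lower bound it writes down the CDF of $v_{(2),(1)}$, massages the tail integral term by term, and integrates (citing a computer-algebra step), arriving at $nm - \tfrac{n(m-1)}{n-1}$. You instead get both bounds from a single exact identity: via the quantile transform of the row-maximum distribution $G(x)=(1-1/x)^m$ and the Beta$(n-1,2)$ law of $U_{(2)}$, you obtain $\EE[v_{(2),(1)}] = n(n-1)\sum_{k=0}^{m-1}\tfrac{m}{m(n-1)+k}$ (the geometric-sum factorization of $1-u$ is the key step, and it also makes finiteness manifest despite $\ER$ having infinite mean). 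Term-wise bounding then gives $nm - m \le \EE[v_{(2),(1)}] \le nm$, slightly sharpening the paper's lower-bound constant and eliminating both the domination argument and the symbolic-integration step. All intermediate claims check out ($G^{-1}(u)=1/(1-u^{1/m})$, the Beta density $n(n-1)u^{n-2}(1-u)$, and the final inequalities), so this is a clean, self-contained alternative; the only implicit hypothesis, shared with the paper, is $n\ge 2$.
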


\noindent By Lemmas \ref{lemma:conditional-expectation-non-favorite} and \ref{lemma:expectation-second-highest-favorite}, we expect the second highest value for the grand bundle to be around
\[
    \EE\SqBr{\sum_j v_{(2), (j)}} = \EE\SqBr{v_{(2),(1)} + m \ln(v_{(2),(1)})} = nm + \Theta( m \ln(mn) )
\]


\subsection{``Region-separated'' Flows} \label{sec:kfbic}

Finally, we discuss the class of ``region-separated'' flows, which do not cross any axes between different favorite-item regions $R_j$. These correspond to auctions that respect all BIC constraints between bidders with the same favorite item, but not necessarily between bidders with different favorite items; we term such auctions Knows-Favorite BIC (KF-BIC). We design a KF-BIC auction that achieves revenue $nm + \Omega(m\sqrt{nm})$ from $n$ bidders with values drawn i.i.d. from $\ER^m$.

In addition to potentially being of independent interest, this result further highlights our upper bound on $\rev_n(\ER^m)$ from Section~\ref{sec:ERTsqrt} as interesting because it provably cannot follow from an ``region-separated'' flow, and establishes that the cross-diagonal BIC constraints are \emph{necessary} to achieve the optimal bound.

\begin{definition}[KF-BIC] We say that an auction is Knows-Favorite Bayesian Incentive Compatible if for all types $\vec{v}$ with distinct values for every item, $\vec{v}$ does not wish to misreport any other $\vec{w}$ with the same favorite item. That is, if $S_j$ denotes the subset of valuations in the support of $D$ such that $v_j > v_{j'}$ for all $j' \neq j$, and also $v_{j'} \neq v_{j''}$ for all $j',j''$, a mechanism with interim allocation rule $\vec{\pi}(\cdot),p(\cdot)$ is KF-BIC if:
$$\forall j,\ \forall \vec{v}, \vec{w} \in S_j,\ \vec{v} \cdot \vec{\pi}(\vec{v}) -p(\vec{v}) \geq \vec{v} \cdot \vec{\pi}(\vec{w}) - p(\vec{w}).$$
\end{definition}

Importantly, note that a KF-BIC auction is not necessarily BIC. This is both because there are no constraints that involve types with the same value for multiple items, and also because the KF-BIC constraints only guarantee that bidders do not wish to misreport \emph{while keeping their favorite item the same} (they may wish to misreport their favorite item).

\subsubsection{The Knows-Favorite Auction (KFA)} \label{sec:kfa}

\begin{definition}[Knows-Favorite Auction] The Knows-Favorite Auction (KFA) proceeds as follows: 

Let $S$ denote the set of bidders with distinct values for all $m$ items, and $S_j$ denote the subset of $S$ consisting of bidders with favorite item $j$. Each item $j$ is auctioned as follows:
\begin{itemize}
    \item If any bidder $i \in S_j$ has $v_{ij} \ge H = e^{nm}$, the item is awarded to a uniformly random such bidder, and they are charged $H$.
    
    \item If no bidder in $S_j$ has value at least $H$, then the item is offered to bidders in $S \setminus S_j$ at price $L = \sqrt{nm}$ (that is, as long as any bidder in $S \setminus S_j$ is willing to pay $L$, a uniformly random such bidder is given the item and charged $L$).
\end{itemize}
\end{definition}

\begin{restatable}{observation}{kfatruthful}\label{obs:kfa_truthful}
    KFA is KF-BIC; that is, for all $j$, no bidder in $S_j$ wishes to misreport any other type in $S_j$.
\end{restatable}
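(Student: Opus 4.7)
The plan is to show that once we fix a bidder's claimed favorite item $j$, KFA's allocation rule and payment rule decompose additively across items in a clean way: interim probabilities depend on the bidder's report $\vec{w}$ only through whether each coordinate crosses a fixed threshold, and interim prices are a constant (either $H$ or $L$) times the same indicator. Once this decomposition is established, utility becomes a sum of per-item terms that truthful reporting optimizes simultaneously, and since the true type $\vec{v}$ belongs to $S_j$, truthfulness is an admissible report under the KF-BIC restriction.

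First, I would fix a bidder $i \in S_j$ with true type $\vec{v}$ considering an alternative report $\vec{w} \in S_j$, and condition on the other bidders' truthful reports. For item $j$, bidder $i$ sits in the high-branch pool exactly when $w_{ij} \ge H$, and the low branch for item $j$ is closed to her because she is in $S_j$; uniform tie-breaking therefore gives interim allocation probability $\1(w_{ij} \ge H) \cdot C_j$ for some constant $C_j$, with interim price $H$ conditional on winning. For each item $k \neq j$, bidder $i$ lies in $S_j \subseteq S \setminus S_k$, so she is ineligible for the high branch and enters the low-branch pool exactly when $w_{ik} \ge L$; the interim allocation probability is $\1(w_{ik} \ge L) \cdot C_k$ for some constant $C_k$ depending only on other bidders' reports (specifically, on whether any bidder in $S_k$ has value at least $H$ and on the number of other bidders in $S \setminus S_k$ willing to pay $L$). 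Crucially, neither $C_j$ nor any $C_k$ depends on $\vec{w}$ so long as $\vec{w}$ stays inside $S_j$, because bidder $i$'s pool membership for every item is determined by her claimed favorite $j$ alone.

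With this decomposition, the interim utility of reporting $\vec{w} \in S_j$ becomes a sum of separable terms,
\[
    U(\vec{w}) \;=\; C_j \, \1(w_{ij} \ge H)(v_{ij} - H) \;+\; \sum_{k \neq j} C_k \, \1(w_{ik} \ge L)(v_{ik} - L),
\]
and each summand is maximized by setting the indicator equal to $\1(v_{ik} \ge T_k)$, where $T_k \in \{H, L\}$ is the relevant threshold. The truthful report $\vec{w} = \vec{v}$ makes exactly this choice in every coordinate, and since $\vec{v} \in S_j$ by hypothesis, it is itself an admissible element of $S_j$. Truthfulness therefore weakly dominates every other report in $S_j$, establishing the KF-BIC property.

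The main thing to verify carefully is that the allocation really decomposes, i.e., that the constants $C_j$ and $C_k$ do not implicitly depend on the remaining coordinates of $\vec{w}$. This reduces to two observations: KFA uses uniform tie-breaking inside each pool, so only indicators of eligibility enter the interim probability; and for $\vec{w} \in S_j$, bidder $i$'s pool membership is pinned down by the favorite $j$, so the probability that the relevant pool is nonempty and the expected reciprocal pool size are functions of the other bidders' reports alone. Once this is confirmed, the per-item incentive argument is immediate and yields KF-BIC.
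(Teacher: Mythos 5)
Your proposal is correct and takes essentially the same approach as the paper: the paper's proof observes that once a bidder's pool membership for each item is pinned down by her (unchangeable) favorite item, she simply faces a posted price of $H$ or $L$ per item, which is exactly the separable indicator-times-constant decomposition you formalize. Your version just spells out the interim computation that the paper leaves implicit.
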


\begin{lemma} \label{lem:kfa_rev}
$\rev^{KFA}(\ER^{nm}) = nm+\Omega(m\sqrt{nm})$. That is, the expected revenue (assuming bidders tell the truth) of KFA is $nm + \Omega(m\sqrt{nm})$.
\end{lemma}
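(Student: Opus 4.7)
The plan is to decompose the expected revenue across items by linearity of expectation, and then split each item's revenue into the two mutually exclusive sale branches defined by KFA: the \emph{high-value} branch, in which some bidder in $S_j$ has $v_{ij} \ge H = e^{nm}$ and pays $H$, and the \emph{low-value} branch, in which no such bidder exists but some bidder in $S \setminus S_j$ has $v_{i'j} \ge L = \sqrt{nm}$ and pays $L$. I will show the high-value branch contributes $n - o(1)$ per item (total $nm - o(1)$, matching the ``$nm$'' term) and the low-value branch contributes $\Omega(\sqrt{nm})$ per item (total $\Omega(m\sqrt{nm})$, providing the excess). Since $\ER$ is continuous, the ``distinct values'' requirement in $S$ holds almost surely and can be ignored throughout.

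For the high-value branch, fix item $j$ and let $q_H$ be the probability that a fixed bidder $i$ satisfies both $v_{ij} \ge H$ and $i \in S_j$. Using the $\ER$ pdf $1/v^2$ and independence across coordinates, $q_H = \int_H^\infty v^{-2}(1 - 1/v)^{m-1}\,dv$; the substitution $u = 1 - 1/v$ yields the closed form $q_H = (1 - (1 - 1/H)^m)/m = (1/H)(1 - o(1))$, since $H = e^{nm} \gg m$. The probability that at least one of the $n$ bidders qualifies is $1 - (1 - q_H)^n \ge n q_H - \binom{n}{2} q_H^2$, so the per-item expected high-value revenue is at least $H(n q_H - \binom{n}{2} q_H^2) = n - O(n^2/H)$, which is $n - o(1)$ because $n^2/H = n^2 e^{-nm}$ is doubly-exponentially small. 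Summing over the $m$ items contributes $nm - o(1)$.

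For the low-value branch, fix item $j$ and let $p_L$ denote the probability that a fixed bidder $i'$ has $v_{i'j} \ge L$ and some other coordinate exceeds $v_{i'j}$ (equivalently, $i' \in S \setminus S_j$ and accepts the offer). Direct integration gives $p_L = \int_L^\infty v^{-2}\bigl(1 - (1 - 1/v)^{m-1}\bigr)\,dv$. Using $(1 - 1/v)^{m-1} \le e^{-(m-1)/v}$ together with $1 - e^{-y} \ge y/2$ on $[0,1]$ (valid because $(m-1)/v \le (m-1)/L \le \sqrt{m/n} \le 1$ whenever $n \ge m$), I bound $1 - (1 - 1/v)^{m-1} \ge (m-1)/(2v)$ for $v \ge L$, and hence $p_L \ge (m-1)/(4L^2) = (m-1)/(4nm)$. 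Thus $n p_L \ge (m-1)/(4m) \ge 1/8$ for $m \ge 2$, giving $\Pr[\exists i' \in S \setminus S_j : v_{i'j} \ge L] \ge 1 - e^{-n p_L} = \Omega(1)$. Subtracting the vanishing $O(n/H)$ probability that the high-value branch simultaneously fires, the low-value branch succeeds with probability $\Omega(1)$ and earns $L = \sqrt{nm}$ per success, contributing $\Omega(\sqrt{nm})$ per item and $\Omega(m\sqrt{nm})$ overall. Combining the two branches gives $nm - o(1) + \Omega(m\sqrt{nm}) = nm + \Omega(m\sqrt{nm})$.

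The main obstacle is the low-value calibration: we need $n p_L = \Omega(1)$ \emph{uniformly in $m$}, so the factor $(m-1)$ from the tail $1 - (1-1/v)^{m-1}$ must cancel the factor $m$ in $L^2 = nm$. Indeed, a direct closed-form computation gives $p_L = 1/L - (1 - (1 - 1/L)^m)/m$, in which the first-order $1/L$ terms cancel and leave precisely the second-order contribution $\Theta((m-1)/(nm))$ that we exploit; this is why $L = \sqrt{nm}$ is essentially the right choice, balancing a constant success probability against the per-sale revenue. The remaining steps (tail integrals, Bernoulli-type inclusion bounds, and summing across items and branches) are routine. KF-BIC truthfulness is handled separately by Observation~\ref{obs:kfa_truthful}, so for the revenue computation I may assume truthful reporting throughout.
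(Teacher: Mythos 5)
Your proposal is correct and follows essentially the same route as the paper's proof: decompose revenue item by item, split into the high-price branch (contributing $n-o(1)$ per item via first- and second-moment bounds on the probability that some bidder in $S_j$ exceeds $H$) and the low-price branch (contributing $\Omega(\sqrt{nm})$ per item by showing a single bidder lands in $S\setminus S_j$ with $v_{i'j}\ge L$ with probability $\Theta(m/L^2)$, so some bidder does with constant probability). The only differences are cosmetic: you use exact closed-form integrals and the bounds $1-(1-1/v)^{m-1}\ge \frac{m-1}{2v}$ and $1-(1-p_L)^n\ge 1-e^{-np_L}$ where the paper uses inclusion--exclusion/Bonferroni, and both arguments handle the interaction between branches by subtracting the $O(n/H)$ probability that the high branch fires.
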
 
\begin{proof}
    For each item $j$, we consider the revenue from $S_j$ and $S \setminus S_j$ separately.  
    
    Let $p$ denote the probability that a bidder $i \in S$ is in $S_j$ and has $v_{ij} \ge H$, and observe that
    \begin{align*}
        p &= \int_{x=H}^\infty \mathbb{P}(v_{ij} = x) \cdot \prod_{j' \ne j} \mathbb{P}(v_{ij'} < x) dx = \int_{x=H}^\infty \frac{1}{x^2} \left(1 - \frac1x\right)^m dx \\
        \implies p &\in \left[ \int_{x=H}^\infty \frac{1}{x^2} \left(1 - \frac mx\right) dx , \int_{x=H}^\infty \frac{1}{x^2} dx  \right] = \left[\frac1H - \frac{m}{2H^2} , \frac1H \right].
    \end{align*}
    Now, the probability that item $j$ is sold to such a bidder in $S_j$ can be bounded by the union bound:
    \begin{align*}
        \mathbb{P}(\text{sold to }S_j) &\le np \le \frac n H, \\
        \mathbb{P}(\text{sold to }S_j) &\ge np - \binom n 2 p^2 \ge \frac n H - \frac{nm}{2H^2} - \binom{n}{2} \frac{1}{H^2} = \frac n H - \frac{n(n+m-1)}{2H^2}.
    \end{align*}
    Thus, the revenue that comes from selling item $j$ to a bidder is between $n$ and $n - \frac{n(n+m-1)}{2e^{nm}}$, so the expected revenue from this case is $n - o(\frac{1}{m})$.
    
    On the other hand, the probability that item $j$ is available for bidders in $S \setminus S_j$ is at least $1 - \frac n H = 1 - o(1)$. Let $q$ denote the probability that a bidder $i \in S$ is in $S \setminus S_j$ but has $v_{ij} \ge L$, and observe that 
    \begin{align*}
        q &= \int_{x=L}^\infty \mathbb{P}(v_{ij} = x) \cdot \mathbb{P} \left(\bigcup_{j' \ne j} v_{ij'} > x \right) dx \\
        &= \int_{x=L}^\infty \frac{1}{x^2} \left(1 - \left(1-\frac{1}{x}\right)^{m-1} \right) dx \\
        \implies q&\le \int_{x=L}^\infty \frac{m-1}{x^3} dx = \frac{m-1}{2L^2}, \\
        q &\ge \int_{x=L}^\infty \left( \frac{m-1}{x^3} - \binom{m-1}{2} \frac{1}{x^4} \right) dx \\
        &= \frac{m-1}{2L^2} - \frac{(m-1)(m-2)}{6L^3}.
    \end{align*}
    Then, we have
    \begin{align*}
        \mathbb{P}(\text{sold to } S \setminus S_j) &\ge nq - \binom n 2 q^2 \\
        &\ge \frac{n(m-1)}{2L^2} - \frac{n(m-1)(m-2)}{6L^3} - \binom{n}{2} \frac{(m-1)^2}{4L^4},
    \end{align*}
    so the revenue from this case is at least
    \begin{align*}
        \frac{n(m-1)}{2L} - \frac{n(m-1)(m-2)}{6L^2} - \binom{n}{2} \frac{(m-1)^2}{4L^3} &\ge \frac{n(m-1)}{2\sqrt{nm}} - \frac{nm^2}{6nm} - \frac{n^2 m^2}{8nm\sqrt{nm}} \\
        &= \frac{\sqrt{nm}}{2} - \frac{\sqrt{n}}{2\sqrt{m}} - \frac{m}{6} - \frac{\sqrt{nm}}{8} \\
        &\ge \frac{\sqrt{nm}}{2} - \frac{\sqrt{n}}{2\sqrt{m}} - \frac{\sqrt{nm}}{6} - \frac{\sqrt{nm}}{8} \\
        &= \Omega(\sqrt{nm}).
    \end{align*}
    
    Finally, summing over all $m$ items gives a total revenue of $nm - o(1) + m \Omega(\sqrt{nm}) = nm + \Omega(m\sqrt{nm})$.
\end{proof}

\section{Conclusion}
We settle the competition complexity of $n$ bidders with additive valuations over $m$ independent items at $\Theta(\sqrt{nm})$ in the ``Big $n$'' regime. As the ``Little $n$'' regime is previously settled by~\cite{FeldmanFR18, BeyhaghiW19}, this settles the competition complexity for additive bidders over independent items (up to constant factors). On the technical front, we design an explicit BIC-but-not-DSIC mechanism outperforming selling separately (even with additional bidders) in a regime where selling separately is already a $(1-o(1))$-approximation.

We also provide results of independent interest accumulated from our journey: the competition complexity of additive bidders is exactly equal to the competition complexity when restricted to iid truncated equal revenue curves, and despite this the untruncated Equal Revenue curve witnesses an exponentially-suboptimal lower bound. 

As our work now settles the key remaining open problem for competition complexity of exceeding the optimal BIC mechanism by VCG, there are two important directions for future work:

\begin{itemize}
    \item What about the competition complexity of exceeding the optimal DSIC auction? Our BIC auctions cannot be made DSIC, and it initially seems as though BIC auctions may strictly outperform DSIC auctions for the instances that yield our main result. We suspect that our Independent Result I will be useful for upper bounds on this front (if indeed improved upper bounds are possible).
    \item What about the competition complexity of other simple auctions? There is limited work in this direction so far, which so far still loses some (small) fraction of revenue rather truly exceeding the optimum~\cite{FeldmanFR18, CaiS21}.
\end{itemize}

\bibliographystyle{alpha}
\bibliography{main.bib}

\appendix
\section{Deferred proofs}\label{sec:proofs}
Here, we provide complete proofs deferred from the main text.

\subsection{Proofs from Section~\ref{sec:ERTsqrt}} \label{sec:proofs_ERTsqrt}

\srevcompute*
\begin{proof}
    Observe that $\ER_{<T}$ is a regular distribution, with 
    \begin{align*}
        \bar{\varphi}_{j} (v_{ij}) &= v_{ij} - \frac{1 - (1 - 1/v_{ij})}{1/v_{ij}^2} = 0 \qquad \forall v_{ij} < T, \\
        \bar{\varphi}_{j} (T) &= T - 0 = T,
    \end{align*}
    so the optimal auction for each item allocates the item to a bidder with value $T$ for price $T$.
\end{proof}

\begin{restatable}{corollary}{bidderWT}\label{corollary:bidder-w-T}
Suppose $T \geq \sqrt{mn}$ and $a \geq b$.
Let $v \in [1,T]^m$ and suppose there exists some item with value $T$.
Define $H := \CrBr{j : v_j = T}$ and $L := \CrBr{j : v_j \geq mn/T} \setminus H$.
A bidder with type $v$ prefers the menu option $(H, L)$ over any other option in the menu (including not receiving any items).
\end{restatable}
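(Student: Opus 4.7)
The plan is to reduce the corollary to Lemma~\ref{lemma:preferred-option} plus a direct non-negativity check on the utility of $(H,L)$. Since the bidder has $v_j = T$ for some $j$, her favorite item $j^*$ satisfies $v_{j^*} = T$, so $j^* \in \{j : v_j = T\}$. Consequently, the set $H$ defined here coincides with the set $H = \{j : v_j = T\} \cup \{j^*\}$ from Lemma~\ref{lemma:preferred-option}, and likewise $L$ matches. Applying the lemma immediately gives that $(H,L)$ is preferred to every other menu option that allocates at least one item.

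It remains to verify that $(H,L)$ is also preferred over the ``receive nothing, pay nothing'' option; equivalently, the utility of $(H,L)$ is non-negative. Writing out the utility using the payment formula for a menu option $(H',L')$ with $H' = H$ and $L' = L$, we get
\[
    a \sum_{j \in H} v_j + b \sum_{j \in L} v_j - \Paren{\abs{H} a T + \abs{L} b \frac{mn}{T} - (\abs{H}-1) b \Paren{T - \frac{mn}{T}}}.
\]
Since $v_j = T$ for every $j \in H$, the terms $a\sum_{j \in H} v_j$ and $|H|aT$ cancel. What remains is
\[
    b \sum_{j \in L} \Paren{v_j - \frac{mn}{T}} + (\abs{H} - 1) b \Paren{T - \frac{mn}{T}}.
\]
Each summand in the first term is non-negative because $v_j \geq mn/T$ for $j \in L$, and the second term is non-negative because $|H| \geq 1$ and $T \geq mn/T$ (the latter is equivalent to $T^2 \geq mn$, which holds by the assumption $T \geq \sqrt{mn}$).

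Thus the utility of $(H,L)$ is at least $0$, completing the comparison against the no-allocation option, and combined with Lemma~\ref{lemma:preferred-option} this shows $(H,L)$ is preferred over every menu option. The main observation to check is simply that the assumption ``some item has value $T$'' forces $j^* \in H$, so that the two definitions of $H$ from Lemma~\ref{lemma:preferred-option} and from the corollary coincide; the rest is a direct computation exploiting $T \geq \sqrt{mn}$. I do not anticipate any genuine obstacle here.
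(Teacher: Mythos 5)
Your proposal is correct and matches the paper's proof essentially verbatim: invoke Lemma~\ref{lemma:preferred-option} to rule out all other allocating options, then verify non-negativity of the utility of $(H,L)$ by cancelling the $a\sum_{j\in H}v_j$ terms against $|H|aT$ and using $v_j \geq mn/T$ for $j\in L$ together with $T^2\geq mn$. Your explicit observation that $j^*\in\{j: v_j=T\}$, so the corollary's $H$ coincides with the lemma's, is a nice touch the paper leaves implicit.
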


\begin{proof}
By Lemma \ref{lemma:preferred-option}, a bidder with such a type prefers $(H,L)$ over any other option that allocates an item with some positive probability.
It remains to show that the utility of such a bidder for $(H,L)$ is non-negative:
\begin{align*}
    a \sum_{j \in H} v_j + b \sum_{j \in L} v_j & - \Paren{\abs{H}aT + \abs{L} b \frac{mn}{T} - (\abs{H} - 1) b \Paren{T - \frac{mn}{T}}} \\
        &= b \sum_{j \in L} \Paren{v_j - \frac{mn}{T}} + (\abs{H} - 1) b \Paren{T - \frac{mn}{T}} \tag{$v_j = T$ for all $j \in H$} \\
        &\geq (\abs{H} - 1) b \Paren{T - \frac{mn}{T}} \tag{$v_j \geq mn/T$ for all $j \in L$} \\
        &\geq 0. \tag{$T^2 \geq mn$}
\end{align*}
\end{proof}

\begin{restatable}{corollary}{bidderWoT}\label{corollary:bidder-wo-T}
Suppose $T \geq \sqrt{mn}$ and $a \geq b$.
Let $v \in [1,T]^m$.
Let $j^* \in \arg\max_j v_j$ and suppose $v_{j^*} < T$.
Define $L := \CrBr{j : v_j \geq mn/T} \setminus \{j^*\}$.
A bidder with type $v$ prefers the menu option $(\{j^*\}, L)$ over any other option in the menu (including not receiving any items) if and only if $av_{j^*} + b \sum_{j \in L} v_j \geq aT + \abs{L}b\frac{mn}{T}$.
\end{restatable}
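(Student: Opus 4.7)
The plan is to reduce the corollary to a direct consequence of Lemma~\ref{lemma:preferred-option} combined with a single utility computation. The key observation is that because $v_{j^*} < T$ and $j^* = \arg\max_j v_j$, we in fact have $v_j < T$ for all $j$, so in the notation of Lemma~\ref{lemma:preferred-option} the set $H = \{j : v_j = T\} \cup \{j^*\}$ simplifies to $\{j^*\}$, and hence the lemma already tells us that among all menu options $(H', L')$ with $H' \neq \varnothing$, the bidder strictly (weakly) prefers $(\{j^*\}, L)$ with the $L$ as defined in the corollary statement. So the only remaining competitor is the outside option of receiving nothing and paying nothing.

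The next step is to write down the utility of $(\{j^*\}, L)$ using the menu formula from the Not-So-Naive Auction with $|H| = 1$. Plugging $|H| = 1$ into the price expression $b \cdot |L| \cdot \tfrac{mn}{T} + a \cdot |H| \cdot T - b(T - \tfrac{mn}{T})(|H| - 1)$ eliminates the subsidy term and gives price exactly $aT + |L| b \tfrac{mn}{T}$. Therefore the bidder's utility for $(\{j^*\}, L)$ is
\[
    a v_{j^*} + b \sum_{j \in L} v_j - \Bigl(aT + |L| b \tfrac{mn}{T}\Bigr).
\]
Comparing this quantity to the utility $0$ of the outside option immediately gives the claimed biconditional: the bidder prefers $(\{j^*\}, L)$ to receiving nothing if and only if $a v_{j^*} + b \sum_{j \in L} v_j \ge aT + |L| b \tfrac{mn}{T}$.

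Putting the two pieces together finishes the proof: Lemma~\ref{lemma:preferred-option} handles all menu options with $H' \neq \varnothing$, and the utility calculation handles the sole remaining option of receiving nothing. There is no technical obstacle here since all of the real work — verifying that $(\{j^*\}, L)$ dominates every non-trivial alternative — was done in Lemma~\ref{lemma:preferred-option}; the corollary's only new content is to isolate when $(\{j^*\}, L)$ also beats the zero option, which is an algebraic inequality readable off of the explicit price formula. The proof is essentially a one-line appeal to the preceding lemma plus a substitution $|H|=1$ into the menu price.
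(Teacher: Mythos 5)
Your proof is correct and follows the same route as the paper's: invoke Lemma~\ref{lemma:preferred-option} to dispose of all menu options with $H' \neq \varnothing$, then observe that the utility of $(\{j^*\}, L)$ (which with $|H|=1$ has price exactly $aT + |L|b\frac{mn}{T}$, no subsidy term) is non-negative if and only if the stated inequality holds. Your explicit remark that $H = \{j^*\}$ because no item is valued at $T$ is a nice clarification that the paper leaves implicit, but the argument is the same.
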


\begin{proof}
By Lemma \ref{lemma:preferred-option}, a bidder with such a type prefers $(\{j^*\},L)$ over any other option that allocates an item with some positive probability.
To conclude, note that the utility for $(\{j^*\},L)$ is non-negative if and only if the inequality in the lemma statement holds.
\end{proof}

\aVsB*
\begin{proof}
We have
\begin{align*}
    \frac{b}{a} 
        &= \frac{\PP_{v_{-i}}\SqBr{\text{$\not\exists$ high}} \Paren{1 - \Paren{1 - \PP\cSqBr{\text{low}}{\text{not high}}}^n}}{n\PP\cSqBr{\text{low}}{\text{not high}}} \cdot \frac{n\PP\SqBr{\text{high}}}{1 - \Paren{1 - \PP\SqBr{\text{high}}}^n} \\
        &\leq \frac{\PP_{v_{-i}}\SqBr{\text{$\not\exists$ high}} \cdot n \PP \cSqBr{\text{low}}{\text{not high}}} {n\PP\cSqBr{\text{low}}{\text{not high}}} \cdot \frac{n\PP\SqBr{\text{high}}}{1 - \Paren{1 - \PP\SqBr{\text{high}}}^n} \tag{union bound} \\
        &= \frac{n \PP_{v_{-i}}\SqBr{\text{$\not\exists$ high}} \PP\SqBr{\text{high}}}{1 - \Paren{1 - \PP\SqBr{\text{high}}}^n} \\
        &= \frac{n \Paren{1 - \frac{1}{T} - \sum\limits_{\ell = 1}^{m-1} \binom{m - 1}{\ell} q_\ell}^{n-1} \Paren{\frac{1}{T} + \sum\limits_{\ell = 1}^{m-1} \binom{m - 1}{\ell} q_\ell}}{1 - \Paren{1 - \frac{1}{T} - \sum\limits_{\ell = 1}^{m-1} \binom{m - 1}{\ell} q_\ell}^n} \\
        &\leq \frac{n \Paren{1 - \frac{1}{T}}^{n}}{T \Paren{1 - \Paren{1 - \frac{1}{T}}^n}}  \tag{replace summation with $0$} \\
        &\leq \frac{\frac{n}{T} e^{-\frac{n}{T}}}{1 - e^{-\frac{n}{T}}}.
\end{align*}
The second inequality follows from the fact that $\Paren{1 - \frac{1}{T} - x}^{n-1} \Paren{\frac{1}{T} + x}/\Paren{1 - \Paren{1 - \frac{1}{T} - x}^n}$ is decreasing in $x$ for $x \in [0, 1 - 1/T]$ (note that $1 - \frac{1}{T} - \sum_{\ell = 1}^{m-1} \binom{m - 1}{\ell} q_\ell \le 1$ because the left-hand expression is a probability, so indeed $\sum_{\ell = 1}^{m-1} \binom{m - 1}{\ell} q_\ell \leq 1 - \frac{1}{T}$):
\begin{align*}
    \frac{\partial}{\partial x} \frac{\Paren{1 - \frac{1}{T} - x}^{n-1} \Paren{\frac{1}{T} + x}}{1 - \Paren{1 - \frac{1}{T} - x}^n} 
        = \frac{\Paren{1 - \frac{1}{T} - x}^{n-2} \Paren{\Paren{1 - \Paren{1 - \frac{1}{T} - x}^{n}} - n\Paren{\frac{1}{T} + x}}}{\Paren{1 - \Paren{1 - \frac{1}{T} - x}^{n}}^2} 
        \leq 0. \tag{union bound; $x \in [0, 1-1/T]$}
\end{align*}
\end{proof}

\NSNvsLN*
\begin{proof}
By Corollaries \ref{corollary:bidder-w-T} and \ref{corollary:bidder-wo-T}, the revenue \textit{per bidder} of the Not-So-Naive Auction  is
\begin{align*}
    & \begin{aligned}[t]
        & \textstyle \sum\limits_{j^* \in [m]} \sum\limits_{L \subseteq [m] \setminus \{j^*\}} \Paren{aT + \abs{L}b\frac{mn}{T}} \Paren{\frac{1}{T} \Paren{\frac{T}{mn} - \frac{1}{T}}^{\abs{L}}\Paren{1 - \frac{T}{mn}}^{m - 1 - \abs{L}} + q_{\abs{L}}} \\
        & + \textstyle \sum\limits_{\substack{H \subseteq [m] : \\ \abs{H} \geq 2}} \sum\limits_{L \subseteq [m] \setminus H} \Paren{\abs{H}aT + \abs{L}b\frac{mn}{T} - (\abs{H} - 1)b\Paren{T - \frac{mn}{T}}} \frac{1}{T^{\abs{H}}}\Paren{\frac{T}{mn} - \frac{1}{T}}^{\abs{L}}\Paren{1 - \frac{T}{mn}}^{m - \abs{H} - \abs{L}} 
    \end{aligned} \\
        &= \begin{aligned}[t]
            & \textstyle \sum\limits_{k=1}^m \sum\limits_{\ell = 0}^{m - k} \binom{m}{k} \binom{m - k}{\ell} \Paren{kaT + \ell b \frac{mn}{T} - (k-1)b\Paren{T - \frac{mn}{T}}} \frac{1}{T^k}\Paren{\frac{T}{mn} - \frac{1}{T}}^\ell \Paren{1 - \frac{T}{mn}}^{m - k - \ell} \\
            & + \textstyle m \sum\limits_{\ell = 1}^{m-1} \binom{m - 1}{\ell}  \Paren{aT + \ell b \frac{mn}{T}} q_\ell
        \end{aligned} \\
        &= \begin{aligned}[t]
            & \textstyle am + bmn \Paren{\frac{T}{mn} - \frac{1}{T}} \Paren{1 - \Paren{1 - \frac{1}{T}}^m - \frac{m}{T}\Paren{1 - \frac{1}{T}}^{m-1}} + m \sum\limits_{\ell = 1}^{m-1} \binom{m - 1}{\ell} \Paren{aT + \ell b \frac{mn}{T}} q_\ell.
        \end{aligned}
\end{align*}
Recall that the revenue extracted by the Less-Naive Auction from selling item $j$ to bidder $i$ is
\begin{align*}
    a_0 T \cdot \1\Paren{v_{ij} = T, \max_{j' < j} v_{ij'} < T} & + a_0 \Paren{T - \frac{b_0}{a_0}\Paren{T - \frac{mn}{T}}} \cdot \1\Paren{v_{ij} = T, \max_{j' < j} v_{ij'} = T} \\
        &+ \frac{b_0 mn}{T} \cdot \1\Paren{v_{ij} \in [\tfrac{mn}{T}, T), \max_{j'\ne j} v_{ij'} = T}.
\end{align*}

Taking the expectation over the randomness of bidder $i$'s type and summing over all items yields that the revenue extracted by the Less-Naive Auction \textit{per bidder} is
\begin{align*}
    a_0 m + b_0mn \Paren{\frac{T}{mn} - \frac{1}{T}} \Paren{1 - \Paren{1 - \frac{1}{T}}^m - \frac{m}{T}\Paren{1 - \frac{1}{T}}^{m-1}}.
\end{align*}

Now, recall that $a_0$ is the interim probability of winning an item when bidding $T$ under the allocation rule that allocates items uniformly at random to bidders with value $T$, so
\[
    a_0 = \underset{v_{-i}}{\EE}\SqBr{\frac{1}{1 + \sum_{k \not= i} \1(v_{k j} = T)}} = \sum_{k = 0}^{n-1} \frac{\binom{n-1}{k}}{k+1} \frac{1}{T^k}\Paren{1 - \frac{1}{T}}^{m-k-1} = \frac{T}{n}\Paren{1 - \Paren{1 - \frac{1}{T}}^n}.
\]
Note how $a_0$ compares to $a$.
\begin{align*}
    a\Paren{1 +  T \sum\limits_{\ell = 1}^{m-1} \binom{m - 1}{\ell} q_\ell} 
        &= \frac{\Paren{1 - \Paren{1 - \frac{1}{T} - \sum_{\ell = 1}^{m-1} \binom{m-1}{\ell} q_\ell}^n}}{n \Paren{\frac{1}{T} + \sum_{\ell = 1}^{m-1} \binom{m-1}{\ell} q_\ell}}\Paren{1 +  T \sum\limits_{\ell = 1}^{m-1} \binom{m - 1}{\ell} q_\ell} \tag{definition of $a$} \\
        &= \frac{T}{n}\Paren{1 - \Paren{1 - \frac{1}{T} - \sum\limits_{\ell = 1}^{m-1} \binom{m-1}{\ell} q_\ell}^n} \\
        &\geq \frac{T}{n}\Paren{1 - \Paren{1 - \frac{1}{T}}^n} \\
        &= a_0 \tag{definition of $a_0$}
\end{align*}
Thus, the revenue \textit{per bidder} of the Not-So-Naive Auction exceeds that of the Less-Naive Auction by
\begin{align*}
    & \textstyle (a - a_0) m + (b-b_0)mn \Paren{\frac{T}{mn} - \frac{1}{T}} \Paren{1 - \Paren{1 - \frac{1}{T}}^m - \frac{m}{T}\Paren{1 - \frac{1}{T}}^{m-1}} + m \sum\limits_{\ell = 1}^{m-1} \binom{m - 1}{\ell}  \Paren{aT + \ell b \frac{mn}{T}} q_\ell \\
        &= \begin{aligned}[t]
            & \textstyle \Paren{a\Paren{1 +  T \sum\limits_{\ell = 1}^{m-1} \binom{m - 1}{\ell} q_\ell} - a_0} m + (b-b_0)mn \Paren{\frac{T}{mn} - \frac{1}{T}} \Paren{1 - \Paren{1 - \frac{1}{T}}^m - \frac{m}{T}\Paren{1 - \frac{1}{T}}^{m-1}} \\
            & + \textstyle \frac{bm^2(m-1)n}{T} \sum\limits_{\ell = 1}^{m-1} \binom{m - 2}{\ell-1} q_\ell
        \end{aligned} \\
        &\geq \textstyle (b-b_0)mn \Paren{\frac{T}{mn} - \frac{1}{T}} \Paren{1 - \Paren{1 - \frac{1}{T}}^m - \frac{m}{T}\Paren{1 - \frac{1}{T}}^{m-1}} + \frac{bm^2(m-1)n}{T} \sum\limits_{\ell = 1}^{m-1} \binom{m - 2}{\ell-1} q_\ell. 
\end{align*}
Summing over all bidders yields the lemma.
\end{proof}

\begin{restatable}{lemma}{qellUB}\label{lemma:q_ell-upper-bound}
If $T \geq \sqrt{mn}$, then for all $\ell \in [m-1]$,
\[
    q_\ell \leq \Paren{\frac{T}{mn} - \frac{1}{T}}^{\ell} \Paren{1 - \frac{T}{mn}}^{m - \ell - 1} \frac{\ell b}{Ta}. 
\]
\end{restatable}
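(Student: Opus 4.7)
The plan is to sharply exploit the utility condition in the definition of $q_\ell$ to restrict $v_{j^*}$ to a narrow window near $T$, and then integrate directly. I would fix a specific $j^*$ and a specific $L \subseteq [m]\setminus\{j^*\}$ of size $\ell$, and decompose the event defining $q_\ell$ as $E \cap U$, where $E$ collects the three non-utility constraints ($v_{j^*}$ is the max and $<T$; all $v_j$ with $j \in \{j^*\}\cup L$ are at least $mn/T$; all other $v_j$ are below $mn/T$) and $U$ is the condition $a v_{j^*} + b\sum_{j\in L} v_j \ge aT + \ell b \tfrac{mn}{T}$.

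The key observation is that under $E$, each $v_j$ with $j\in L$ satisfies $v_j \le v_{j^*}$, so $\sum_{j\in L} v_j \le \ell v_{j^*}$. Combining with $U$ gives $(a+\ell b)v_{j^*} \ge aT + \ell b \tfrac{mn}{T}$, i.e.
\[
  v_{j^*} \;\ge\; v^*\; :=\; \frac{aT+\ell b\, mn/T}{a+\ell b}.
\]
Since $b/a$ is exponentially small (as noted in Section~\ref{sec:LNA}), $v^*$ is very close to $T$. Using this necessary condition together with independence of the $v_j$'s, and factoring out the contribution of the $m-\ell-1$ items outside $\{j^*\}\cup L$, one gets
\[
  q_\ell \;\le\; \Paren{1-\tfrac{T}{mn}}^{m-\ell-1}\int_{v^*}^T \frac{1}{v^2}\Paren{\tfrac{T}{mn}-\tfrac{1}{v}}^\ell\,dv.
\]
Upper-bounding $(\tfrac{T}{mn}-\tfrac{1}{v})^\ell \le (\tfrac{T}{mn}-\tfrac{1}{T})^\ell$ (valid for $v\le T$) pulls that factor out of the integral and leaves $\int_{v^*}^T dv/v^2 = 1/v^* - 1/T$.

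The remaining step is to check $1/v^* - 1/T \le \ell b/(aT)$, which follows from a direct computation:
\[
  \frac{1}{v^*} - \frac{1}{T} \;=\; \frac{\ell b\,(T - mn/T)}{T(aT + \ell b\,mn/T)} \;\le\; \frac{\ell b\,(T-mn/T)}{aT^2} \;\le\; \frac{\ell b}{aT},
\]
where the first inequality drops the nonnegative $\ell b\,mn$ term from the denominator $aT^2+\ell b\,mn$, and the second uses $T\ge\sqrt{mn}$. The main obstacle is the structural step, not the calculation: a naive integration over all of $[mn/T,T]$ would yield the far weaker factor $\tfrac{T}{mn}-\tfrac{1}{T}$ in place of $\ell b/(aT)$. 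Noticing that $U$ together with the componentwise bound $v_j \le v_{j^*}$ already forces $v_{j^*}\ge v^*$ is the only nontrivial idea; once this is in place, the estimates above are routine.
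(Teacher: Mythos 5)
Your proof is correct and follows essentially the same approach as the paper's: both hinge on the observation that the utility condition together with $\sum_{j\in L} v_j \le \ell v_{j^*}$ forces $v_{j^*} \ge v^* = \frac{aT+\ell b\,mn/T}{a+\ell b}$, and both close with the same estimate $1/v^* - 1/T \le \ell b/(aT)$. The only difference is cosmetic: the paper conditions on the base event and applies a union bound over the $\ell+1$ high-valued items, whereas you integrate directly over the value of $v_{j^*}$; these yield the same bound.
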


\begin{proof}
Observe that
\begin{align*}
    q_\ell 
        &= \textstyle \underset{v}{\PP}\SqBr{v_1 = \max\limits_j v_j < T, \min\limits_{j \in [\ell + 1]} v_j \geq \frac{mn}{T}, \max\limits_{j \not\in [\ell + 1]} v_j < \frac{mn}{T}, av_{1} + b \sum\limits_{j = 2}^{\ell + 1} v_j \geq aT + \ell b\frac{mn}{T}} \\
        & = \textstyle \frac{\Paren{\frac{T}{mn} - \frac{1}{T}}^{\ell + 1} \Paren{1 - \frac{T}{mn}}^{m - \ell - 1}}{\ell + 1} \textstyle \PP\cSqBr{a v_1 + b \sum\limits_{j = 2}^{\ell + 1} v_j \geq aT + \ell b \frac{mn}{T}}{v_1 = \max\limits_j v_j < T, \min\limits_{j \in [\ell + 1]} v_j \geq \frac{mn}{T} > \max\limits_{j \not\in [\ell + 1]} v_j} \\
        & \leq \textstyle \frac{\Paren{\frac{T}{mn} - \frac{1}{T}}^{\ell + 1} \Paren{1 - \frac{T}{mn}}^{m - \ell - 1}}{\ell + 1}  \PP\cSqBr{v_1 \geq \frac{aT + \ell b \frac{mn}{T}}{a + \ell b}}{v_1 = \max\limits_j v_j < T, \min\limits_{j \in [\ell + 1]} v_j \geq \frac{mn}{T} > \max\limits_{j \not\in [\ell + 1]} v_j} \\
        & = \textstyle \frac{\Paren{\frac{T}{mn} - \frac{1}{T}}^{\ell + 1} \Paren{1 - \frac{T}{mn}}^{m - \ell - 1}}{\ell + 1} \Paren{1 - \Paren{1 - \frac{\frac{1}{T - \frac{\ell b}{a + \ell b} \Paren{T - \frac{mn}{T}}} - \frac{1}{T}}{\frac{T}{mn} - \frac{1}{T}}}^{\ell + 1}} \\
        &\leq \textstyle  \Paren{\frac{T}{mn} - \frac{1}{T}}^{\ell} \Paren{1 - \frac{T}{mn}}^{m - \ell - 1} \Paren{\frac{1}{T - \frac{\ell b}{a + \ell b} \Paren{T - \frac{mn}{T}}} - \frac{1}{T}} \tag{union bound} \\
        &\leq \textstyle \Paren{\frac{T}{mn} - \frac{1}{T}}^{\ell} \Paren{1 - \frac{T}{mn}}^{m - \ell - 1} \frac{\ell b}{Ta}. 
\end{align*}
\end{proof}

\begin{restatable}{lemma}{probNoTHigh}\label{lemma:probability-no-T-high}
If $T = \lambda \sqrt{mn}$ for some constant $\lambda > 1$, then the probability of a ``high'' bidder with all values below $T$ can be upper bounded as
\[
    \sum_{\ell=1}^{m-1} \binom{m-1}{\ell} q_\ell \leq \Paren{1 - \frac{1}{\lambda^2}} \frac{b(m-1)}{amn} \Paren{1 - \frac{1}{T}}^{m-2}.
\]
\end{restatable}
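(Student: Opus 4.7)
The plan is to substitute the upper bound on $q_\ell$ from Lemma~\ref{lemma:q_ell-upper-bound} into the summation and then collapse the resulting expression via the binomial theorem. Concretely, I would start by writing
\[
\sum_{\ell=1}^{m-1} \binom{m-1}{\ell} q_\ell \;\leq\; \sum_{\ell=1}^{m-1} \binom{m-1}{\ell} \Paren{\tfrac{T}{mn} - \tfrac{1}{T}}^{\ell} \Paren{1 - \tfrac{T}{mn}}^{m - \ell - 1} \tfrac{\ell b}{Ta}.
\]
The factor of $\ell$ is what makes the sum non-trivial, so the first key maneuver is to absorb $\ell\binom{m-1}{\ell}$ via the standard identity $\ell\binom{m-1}{\ell} = (m-1)\binom{m-2}{\ell-1}$ and pull the common factor $(m-1)b/(Ta)$ outside.

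After that substitution I would reindex with $k := \ell - 1$, giving
\[
\sum_{\ell=1}^{m-1} \binom{m-1}{\ell} q_\ell \;\leq\; \tfrac{b(m-1)}{Ta}\Paren{\tfrac{T}{mn} - \tfrac{1}{T}} \sum_{k=0}^{m-2} \binom{m-2}{k} \Paren{\tfrac{T}{mn} - \tfrac{1}{T}}^{k} \Paren{1 - \tfrac{T}{mn}}^{m - k - 2}.
\]
The inner sum is now exactly a binomial expansion, and the crucial observation is that the two bases add to $1 - \tfrac{1}{T}$, so by the binomial theorem the sum evaluates cleanly to $\Paren{1 - \tfrac{1}{T}}^{m-2}$.

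The final step is purely arithmetic: I would simplify the prefactor $\tfrac{1}{T}\Paren{\tfrac{T}{mn} - \tfrac{1}{T}} = \tfrac{1}{mn} - \tfrac{1}{T^2}$ and then plug in $T = \lambda\sqrt{mn}$, so $T^2 = \lambda^2 mn$ and this prefactor becomes $\tfrac{1}{mn}\bigl(1 - \tfrac{1}{\lambda^2}\bigr)$. Combining everything yields
\[
\sum_{\ell=1}^{m-1} \binom{m-1}{\ell} q_\ell \;\leq\; \Paren{1 - \tfrac{1}{\lambda^2}} \tfrac{b(m-1)}{amn} \Paren{1 - \tfrac{1}{T}}^{m-2},
\]
which matches the target bound exactly. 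I do not anticipate any real obstacle here: the proof is essentially mechanical once one recognizes the binomial collapse, and the only ``cleverness'' needed is to notice that $\ell\binom{m-1}{\ell}$ telescopes into $(m-1)\binom{m-2}{\ell-1}$ so that the sum is a genuine binomial expansion rather than an awkward weighted one.
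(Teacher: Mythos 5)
Your proposal is correct and follows exactly the same route as the paper's proof: apply Lemma~\ref{lemma:q_ell-upper-bound}, use $\ell\binom{m-1}{\ell} = (m-1)\binom{m-2}{\ell-1}$ to reindex, collapse the sum via the binomial theorem since the bases add to $1-\frac{1}{T}$, and simplify the prefactor using $T = \lambda\sqrt{mn}$. No gaps.
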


\begin{proof}
\begin{align*}
    \sum_{\ell=1}^{m-1} \binom{m-1}{\ell} q_\ell
        &\leq \frac{b(m-1)}{Ta} \sum_{\ell=1}^{m-1}\binom{m-2}{\ell - 1} \Paren{\frac{T}{mn} - \frac{1}{T}}^\ell \Paren{1 - \frac{T}{mn}}^{m - \ell - 1} \tag{Lemma \ref{lemma:q_ell-upper-bound}} \\
        &= \frac{b(m-1)}{Ta}\Paren{\frac{T}{mn} - \frac{1}{T}} \sum_{\ell = 0}^{m-2} \binom{m-2}{\ell}\Paren{\frac{T}{mn}  - \frac{1}{T}}^\ell \Paren{1 - \frac{T}{mn}}^{m - \ell - 2} \\
        &= \frac{b(m-1)}{Ta} \Paren{\frac{T}{mn} - \frac{1}{T}}\Paren{1 - \frac{1}{T}}^{m-2} \\
        &= \Paren{1 - \frac{1}{\lambda^2}} \frac{b(m-1)}{amn} \Paren{1 - \frac{1}{T}}^{m-2}.
\end{align*}
\end{proof}

\begin{restatable}{lemma}{probNoTLow}\label{lemma:probability-no-T-low}
If $T = \lambda \sqrt{mn}$ for some constant $\lambda > 1$, then the probability of a ``low'' bidder with all values below $T$ can be upper bounded as
\[
    (m-1) \sum_{\ell=1}^{m-1} \binom{m-2}{\ell - 1} q_\ell \leq \Paren{1 - \frac{1}{\lambda^2}} \frac{b(m-1)}{amn} \Paren{1 - \frac{1}{T}}^{m-3} \Paren{1 - \frac{1}{\lambda \sqrt{mn}} + \Paren{\lambda - \frac{1}{\lambda}}\frac{m-2}{\sqrt{mn}} }.
\]    
\end{restatable}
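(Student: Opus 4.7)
The plan is to mimic the proof of Lemma \ref{lemma:probability-no-T-high} but with the additional factor of $\ell$ coming from the index of the sum and the altered binomial coefficient. I would start by invoking Lemma \ref{lemma:q_ell-upper-bound} to bound each $q_\ell$, yielding
\[
    (m-1)\sum_{\ell=1}^{m-1}\binom{m-2}{\ell-1} q_\ell \le \frac{(m-1)b}{Ta}\sum_{\ell=1}^{m-1} \ell \binom{m-2}{\ell-1}\alpha^\ell \beta^{m-\ell-1},
\]
where I abbreviate $\alpha := \frac{T}{mn}-\frac{1}{T}$ and $\beta := 1-\frac{T}{mn}$, so that $\alpha+\beta = 1-\frac{1}{T}$. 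The goal is then to evaluate the resulting sum in closed form.

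For that, I would reindex with $k=\ell-1$ and split $\ell = k+1$, obtaining
\[
    \sum_{\ell=1}^{m-1} \ell\binom{m-2}{\ell-1}\alpha^\ell \beta^{m-\ell-1}
    = \alpha\sum_{k=0}^{m-2}\binom{m-2}{k}\alpha^k\beta^{m-k-2} + \alpha\sum_{k=0}^{m-2}k\binom{m-2}{k}\alpha^k\beta^{m-k-2}.
\]
The first sum is $(\alpha+\beta)^{m-2} = (1-\tfrac{1}{T})^{m-2}$ by the binomial theorem; the second equals $(m-2)\alpha(\alpha+\beta)^{m-3} = (m-2)\alpha(1-\tfrac{1}{T})^{m-3}$ by the standard identity $k\binom{m-2}{k} = (m-2)\binom{m-3}{k-1}$ followed by the binomial theorem. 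Factoring, the sum reduces to
\[
    \alpha\,(1-\tfrac{1}{T})^{m-3}\bigl[(m-2)\alpha + (1-\tfrac{1}{T})\bigr].
\]

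Finally I would substitute $T = \lambda\sqrt{mn}$. A direct computation shows $\alpha = \frac{1}{\sqrt{mn}}(\lambda - \frac{1}{\lambda})$ and $\frac{\alpha}{T} = \frac{1}{mn}(1-\frac{1}{\lambda^2})$, so
\[
    \frac{(m-1)b}{Ta}\cdot\alpha(1-\tfrac{1}{T})^{m-3}\bigl[(m-2)\alpha+(1-\tfrac{1}{T})\bigr] = \Paren{1-\frac{1}{\lambda^2}}\frac{b(m-1)}{amn}\Paren{1-\frac{1}{T}}^{m-3}\Bigl[1-\frac{1}{\lambda\sqrt{mn}}+\Paren{\lambda-\frac{1}{\lambda}}\frac{m-2}{\sqrt{mn}}\Bigr],
\]
which is the claimed bound.

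There is no real obstacle here beyond careful bookkeeping: the main subtlety relative to Lemma \ref{lemma:probability-no-T-low}'s simpler cousin is that the extra $\ell$ factor forces us to apply the identity $k\binom{m-2}{k}=(m-2)\binom{m-3}{k-1}$ and split the sum into two binomial expansions, which is exactly what produces the extra $(m-2)/\sqrt{mn}$ term inside the final bracket.
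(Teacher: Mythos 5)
Your proposal is correct and matches the paper's proof essentially step for step: invoke Lemma~\ref{lemma:q_ell-upper-bound}, split the extra factor of $\ell$ via the identity $k\binom{m-2}{k}=(m-2)\binom{m-3}{k-1}$ to get two binomial expansions summing to $\alpha(1-\tfrac{1}{T})^{m-3}[(1-\tfrac{1}{T})+(m-2)\alpha]$, and then substitute $T=\lambda\sqrt{mn}$. The only cosmetic slip is that your closing sentence refers to this lemma's ``simpler cousin'' by the label of the lemma itself rather than Lemma~\ref{lemma:probability-no-T-high}.
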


\begin{proof}
\begin{align*}
    (m-1) \sum_{\ell=1}^{m-1} \binom{m-2}{\ell - 1} q_\ell 
        \leq {} & \frac{b(m-1)}{Ta} \sum_{\ell=1}^{m-1} \ell \binom{m-2}{\ell - 1} \Paren{\frac{T}{mn} - \frac{1}{T}}^\ell \Paren{1 - \frac{T}{mn}}^{m - \ell - 1} \\
        = {} & \frac{b(m-1)}{Ta} \Paren{\frac{T}{mn} - \frac{1}{T}}\Paren{1 - \frac{1}{T}}^{m-2} \\
            & + \frac{b(m-1)(m-2)}{Ta} \sum_{\ell = 2}^{m-1} \binom{m-3}{\ell - 2} \Paren{\frac{T}{mn} - \frac{1}{T}}^\ell \Paren{1 - \frac{T}{mn}}^{m - \ell - 1} \\
        = {} & \frac{b(m-1)}{Ta}\Paren{\frac{T}{mn} - \frac{1}{T}} \Paren{1 - \frac{1}{T}}^{m-3} \Paren{1 - \frac{1}{T} + (m-2)\Paren{\frac{T}{mn} - \frac{1}{T}}} \\
        = {} & \Paren{1 - \frac{1}{\lambda^2}} \frac{b(m-1)}{amn} \Paren{1 - \frac{1}{T}}^{m-3} \Paren{1 - \frac{1}{\lambda \sqrt{mn}} + \Paren{\lambda - \frac{1}{\lambda}}\frac{m-2}{\sqrt{mn}} }.
\end{align*}
\end{proof}

\bLB*
\begin{proof}
Note that
\[
    b := \frac{\PP_{v_{-i}}\SqBr{\text{$\not\exists$ high}} \Paren{1 - \Paren{1 - \PP\cSqBr{\text{low}}{\text{not high}}}^n}}{n\PP\cSqBr{\text{low}}{\text{not high}}} = \frac{\PP_{v}\SqBr{\text{$\not\exists$ high}} \Paren{1 - \Paren{1 - \PP\cSqBr{\text{low}}{\text{not high}}}^n}}{n\PP\SqBr{\text{low}}}
\]
We bound each part of $b$:
\begin{align*}
    \underset{v}{\PP}\SqBr{\text{$\not\exists$ high}} 
        &= \Paren{1 - \frac{1}{T} - \sum_{\ell = 1}^{m-1} \binom{m-1}{\ell} q_\ell}^n \\
        &\geq \Paren{1 - \frac{1}{T} - \frac{O(1)}{n}}^n \tag{Lemmas \ref{lemma:a-geq-b} and \ref{lemma:probability-no-T-high}} \\
        &= \Omega\Paren{\Paren{1 - \frac{1}{T}}^n} \tag{$T = \lambda\sqrt{mn}$, $m \leq n$},
\end{align*}
\begin{align*}
    \PP\SqBr{\text{low}}
        &= \Paren{1 - \Paren{1 - \frac{1}{T}}^{m - 1}} \Paren{\frac{T}{mn} - \frac{1}{T}} + (m-1) \sum_{\ell = 1}^{m-1} \binom{m-2}{\ell - 1} q_{\ell} \\
        &\leq \frac{m-1}{T}\Paren{\frac{T}{mn} - \frac{1}{T}} + \frac{O(1)}{n} \tag{union bound; Lemmas \ref{lemma:a-geq-b} and \ref{lemma:probability-no-T-low}} \\
        &= \Paren{1 - \frac{1}{\lambda^2}} \frac{m-1}{mn} + \frac{O(1)}{n}  \tag{$T = \lambda \sqrt{mn}$} \\
        &= O\Paren{\frac{1}{n}},
\end{align*}
\begin{align*}
    \PP\cSqBr{\text{low}}{\text{not high}}
        &= \frac{\Paren{1 - \Paren{1 - \frac{1}{T}}^{m - 1}} \Paren{\frac{T}{mn} - \frac{1}{T}} + (m-1) \sum_{\ell = 0}^{m-2} \binom{m-2}{\ell} q_{\ell + 1}}{1 - \frac{1}{T} - \sum_{\ell = 1}^{m-1} \binom{m - 1}{\ell} q_\ell} \\
        &\geq \frac{\Paren{1 - \Paren{1 - \frac{1}{T}}^{m - 1}} \Paren{\frac{T}{mn} - \frac{1}{T}}}{1 - \frac{1}{T}} \\
        &\geq \frac{\Paren{\frac{m-1}{T} - \frac{\binom{m-1}{2}}{T^2}} \Paren{\frac{T}{mn} - \frac{1}{T}}}{1 - \frac{1}{T}} \tag{inclusion-exclusion} \\
        &= \frac{\Paren{1 - \frac{1}{\lambda^2}} \frac{m-1}{\sqrt{mn}} \Paren{\frac{1}{\sqrt{mn}} - \frac{1}{2\lambda n}}}{1 - \frac{1}{\lambda \sqrt{mn}}} \tag{$T = \lambda \sqrt{mn}$} \\
        &= \Omega\Paren{\frac{1}{n}}.
\end{align*}
Thus, 
\begin{align*}
    b 
        &= \frac{\PP_{v}\SqBr{\text{$\not\exists$ high}} \Paren{1 - \Paren{1 - \PP\cSqBr{\text{low}}{\text{not high}}}^n}}{n\PP\SqBr{\text{low}}} \\
        &= \frac{\Omega\Paren{\Paren{1 - \frac{1}{T}}^n} \Paren{1 - \Paren{1 - \Omega\Paren{\frac{1}{n}}}^n}}{nO\Paren{\frac{1}{n}}} \\
        &= \Omega\Paren{\Paren{1 - \frac{1}{T}}^n}.
\end{align*}
\end{proof}

\mainResult*

\begin{proof}
Recall that by Corollary \ref{corollary:NSN-vs-SREV} and Lemma \ref{lem:srev}, $\textsc{SRev}_{n+c}$ exceeds the revenue of the Not-So-Naive Auction only if
\[
    c \geq \frac{bn^2 \Paren{\frac{T}{mn} - \frac{1}{T}} \Paren{1 - \Paren{1 - \frac{1}{T}}^m - \frac{m}{T}\Paren{1 - \frac{1}{T}}^{m-1}}}{\Paren{1 - \frac{1}{T}}^n} .
\]
Note that if $m \in [n/4, n]$, then
\begin{align*}
    c 
        &\geq \frac{bn^2 \Paren{\frac{T}{mn} - \frac{1}{T}} \Paren{1 - e^{-\frac{m}{T}} - \frac{m}{T}e^{-\frac{m-1}{T}}}}{\Paren{1 - \frac{1}{T}}^n} \\
        &\geq \frac{b \frac{n^{3/2}}{\sqrt{m}} \Paren{\lambda - \frac{1}{\lambda}} \Paren{1 - e^{-\frac{1}{\lambda} \sqrt{\frac{m}{n}}} - \frac{1}{\lambda} \sqrt{\frac{m}{n}} e^{-\frac{1}{2\lambda} \sqrt{\frac{m}{n}}}}}{\Paren{1 - \frac{1}{T}}^n} \tag{$T = \lambda \sqrt{mn}$} \\
        &\geq \frac{b n \Paren{\lambda - \frac{1}{\lambda}} \Paren{1 - e^{-\frac{1}{2\lambda}} - \frac{1}{2\lambda} e^{-\frac{1}{4 \lambda}}}}{\Paren{1 - \frac{1}{T}}^n} \tag{$1 - e^{-x} - x e^{-x/2}$ is increasing in $x$, $m \in [n/4, n]$} \\
        &= \frac{\Omega\Paren{\Paren{1 - \frac{1}{T}}^n} n }{\Paren{1 - \frac{1}{T}}^n} \tag{Lemma \ref{lemma:b-lower-bound}} \\
        &= \Omega(\sqrt{mn}). \tag{$m \in [n/4, n]$}
\end{align*}
If $m \leq n/4$, then
\begin{align*}
    c 
        &\geq \frac{bn^2 \Paren{\frac{T}{mn} - \frac{1}{T}} \frac{m(m-1)}{2T^2} \Paren{1 - \frac{m-2}{T}}}{\Paren{1 - \frac{1}{T}}^n} \tag{inclusion-exclusion} \\
        &\geq \frac{b \frac{(m-1)\sqrt{mn}}{2 \lambda m} \Paren{1 - \frac{1}{\lambda^2}} \Paren{1 - \frac{1}{\lambda} \sqrt{\frac{m}{n}}}}{\Paren{1 - \frac{1}{T}}^n} \tag{$T = \lambda \sqrt{mn}$} \\
        &= \frac{b \frac{(m-1)\sqrt{mn}}{2 \lambda m} \Paren{1 - \frac{1}{\lambda^2}} \Paren{1 - \frac{1}{2\lambda}}}{\Paren{1 - \frac{1}{T}}^n} \tag{$m \leq n/4$} \\
        &= \frac{\Omega\Paren{\Paren{1 - \frac{1}{T}}^n} \sqrt{mn} }{\Paren{1 - \frac{1}{T}}^n} \tag{Lemma \ref{lemma:b-lower-bound}} \\
        &= \Omega(\sqrt{mn}).
\end{align*}
\end{proof}

\subsection{Proofs from Section~\ref{sec:ERTreduction}} \label{sec:proofs_ERTreduction}

\redclaimone*
\begin{proof}
    We take two cases. First, if $\vec{v}_i \in R_j$, then 
    $$\bar{\varphi}_j \left( F_j^{-1} \left(Q_j^x (\vec{v}) \right) \right) = \bar{\varphi}_j \left( F_j^{-1} \left( F_j (v_{ij}) \right) \right) = \bar{\varphi}_j (v_{ij}) = \Phi_{ij}^\lambda(\vec{v}_i) = CDW_j^x (\vec{v}). $$

    Now, suppose $\vec{v}_i \notin R_j$. Taking expectations over the random sampling of $Q_j^x$ gives 
    \begin{align*}
        \EE_{r} \left[ \bar{\varphi}_j \left( F_j^{-1} \left(Q_j^x (\vec{v}) \right) \right) \right] &= \EE_{q \leftarrow U[q_j(\vec{v}_i), 1]} \left[ \bar{\varphi}_j \left( F_j^{-1} (q) \right) \right] \\
        &= \EE_{v \leftarrow D_j |_{\ge v_{ij}}} \left[ \bar{\varphi}_j (v)\right] \\
        &\ge \EE_{v \leftarrow D_j |_{\ge v_{ij}}} \left[ \varphi_j (v)\right] \\
        &= v_{ij} = CDW_j^x (\vec{v}).
    \end{align*}

    Considering the two cases in combination establishes the desired inequality.

    Further, note that equality holds if and only if $D$ is a regular distribution (there is only one step where an inequality is invoked, and equality of unironed and ironed virtual values is equivalent to regularity).
\end{proof}

\redclaimtwo*
\begin{proof}
    We have 
    \begin{align*}
        \srev_{n+c}(D) &= \EE_{\vec{v} \leftarrow D^{n+c}} \left[\sum_j \bar{\varphi}_j \left(F_j^{-1} \left(S_{n+c}(\vec{v}) \right) \right) \right] \\
        &\ge \EE_{\vec{v} \leftarrow D^{n}, r} \left[\sum_j \bar{\varphi}_j \left(F_j^{-1} \left(Q_j^x(\vec{v}) \right) \right) \right] \\
        &\ge \EE_{\vec{v} \leftarrow D^{n}} \left[\sum_j CDW_j^x(\vec{v}) \right], 
    \end{align*}
    where the first line is just the definition of $\srev$, the second line follows from stochastic dominance (since $\bar{\varphi}$ and $F_j^{-1}$ are monotone increasing), and the third line is a direct application of Proposition~\ref{prop:reduction_claim1}.
\end{proof}

\redclaimthree*
\begin{proof}
    Fix $q \in [0,1)$ and consider the equal revenue distribution truncated at $\frac{1}{1-q}$. Using Observation~\ref{obs:ERTquantiles}, we have
    \begin{align*}
        \srev_{n+c}(\ERT[\frac{1}{1-q}]) &= \EE_{\vec{v} \leftarrow \left(\ERT[\frac{1}{1-q}]\right)^{n+c}} \left[\sum_j \bar{\varphi}_j \left(F_j^{-1} \left(S_{n+c}(\vec{v}) \right) \right) \right] \\
        &= \sum_j 1 \cdot \mathbb{P} \left(F_j \left(F_j^{-1} \left(S_{n+c}(\vec{v}) \right) \right) \ge q\right) \\
        &= \sum_j \mathbb{P} \left( S_{n+c}(\vec{v}) \ge q\right).
    \end{align*}
    
    Also, since $\ERT[\frac{1}{1-q}]$ is regular, we have
    \begin{align*}
        \EE_{\vec{v} \leftarrow \left(\ERT[\frac{1}{1-q}]\right)^{n}} \left[ CDW_j^x (\vec{v}) \right] &= \EE_{\vec{v} \leftarrow \left(\ERT[\frac{1}{1-q}]\right)^{n}} \left[\bar{\varphi}_j \left(F_j^{-1} \left(Q_j^x (\vec{v}) \right) \right) \right] \\
        &= 1 \cdot \mathbb{P} \left(F_j \left(F_j^{-1} \left(Q_j^x(\vec{v}) \right) \right) \ge q\right) \\
        &= \mathbb{P} \left( Q_j^x(\vec{v}) \ge q\right).
    \end{align*}
    
    Now, observe that 
    \begin{align*}
        S_{n+c} \succsim Q_j^x(\vec{v}) & \iff \mathbb{P} \left(S_{n+c} < q \right) \le \mathbb{P} \left(Q_j^x(\vec{v}) < q \right) \, \forall q\in [0,1] \\
        &\iff \srev_{n+c}(\ERT[\frac{1}{1-q}]) \ge \EE_{\vec{v} \leftarrow \left(\ERT[\frac{1}{1-q}]\right)^{n}} \left[ \sum_j CDW_j^x (\vec{v}) \right] \, \forall q\in [0,1] .
    \end{align*}

    Taking $T = \frac{1}{1-q} \in [1, \infty)$ establishes the claim.
\end{proof}

\reductioncc*
\begin{proof}
    Combining Propositions~\ref{prop:reduction_claim3} and~\ref{prop:reduction_claim2}, we have $\srev_{n+c}(D) \ge \EE_{\vec{v} \leftarrow D^n} \left[\sum_j CDW_j^x (\vec{v}) \right] \ge \rev^{(x,p)} (D^n)$. Taking the supremum over all mechanisms $(x,p)$ gives $\srev_{n+c}(D) \ge \rev(D^n)$, so the competition complexity for $D$ is at most $c$.
\end{proof}

\subsection{Proofs from Section~\ref{sec:logn_m=2}} \label{sec:proofs_logn_m=2}

\ERmotivationone*
\begin{proof}
    
    We evaluate both sides of the Border constraint for types with $v_2 \ge \ln^2 n$.

    The LHS is the expected number of item $2$ awarded to bidders with $v_2 \ge \ln^2 n$, which is
    \begin{align*}
        &n \int_{v_2 = \ln^2 n}^\infty \left(\int_{v_1 = v_2}^{5n} \frac{v_1}{10n} f(v_1 \mid v_1 \ge v_2) dv_1 + \int_{v_1 = 5n}^\infty \frac{1}{2} f(v_1 \mid v_1 \ge v_2) dv_1 \right) f_N(v_2) dv_2 \\
        &= n \int_{v_2 = \ln^2 n}^\infty \left(\int_{v_1 = v_2}^{5n} \frac{v_1}{10n} \cdot \frac{v_2}{v_1^2} dv_1 + \int_{v_1 = 5n}^\infty \frac{1}{2} \cdot \frac{v_2}{v_1^2} dv_1 \right) \frac{2}{v_2^3} dv_2 \\
        &= n \int_{v_2 = \ln^2 n}^\infty \left(\int_{v_1 = v_2}^{5n} \frac{1}{5n v_1}  dv_1 + \int_{v_1 = 5n}^\infty \frac{1}{v_1^2} dv_1 \right) \frac{1}{v_2^2} dv_2 \\
        &= n \int_{v_2 = \ln^2 n}^\infty \left(\frac{1}{5n} \ln\left(\frac{5n}{v_2} \right) + \frac{1}{5n} \right) \frac{1}{v_2^2} dv_2 \\
        &= \frac{1}{5} \int_{v_2 = \ln^2 n}^\infty \frac{\ln(5n/v_2) + 1}{v_2^2} dv_2 \\
        &=  \frac{\ln (5n) - \ln (\ln^2 n))}{5 \ln^2 n} \\
        &\le \frac{\ln(5n)}{5 \ln^2 n} \\
        &\stackrel{n\to\infty}{\to} \frac{1}{5\ln n}.
    \end{align*}
    
    The RHS is the probability that at least one of the $n$ bidders is a type with $v_1 > v_2 \ge \ln^2 n$, which is 
    \begin{align*}
        \mathbb{P}(\exists \text{ a bidder with } v_1 > v_2 \ge \ln^2 n) &= 1 - \mathbb{P}(\text{single bidder does not have } v_1 > v_2 \ge \ln^2 n)^n \\
        &= 1 - (1-\mathbb{P}(\text{single bidder has } v_1 > v_2 \ge \ln^2 n))^n \\
        &= 1 - \left(1 - \frac{1 - F_N(\ln^2 n)}{2}  \right)^n \\
        &= 1 - \left(1 - \frac{1}{2 \ln^4 n} \right)^n \\
        &\sim 1 - \exp\left(-\frac{n}{2 \ln^4 n} \right) \\
        &\stackrel{n\to\infty}{\to} 1,
    \end{align*}
    so feasibility holds.    
\end{proof}

\ERmotivationtwo*
\begin{proof}
    The expected virtual welfare from the non-favorite item for bidders with $v_N \ge \ln^2 n$ is
    \begin{align*}
        &n \int_{v_N = \ln^2 n}^\infty v_N \left(\int_{v_F = v_N}^{5n} \frac{v_F}{10n} f(v_F \mid v_F \ge v_N) dv_F + \int_{v_F = 5n}^\infty \frac{1}{2} f(v_F \mid v_F \ge v_N) dv_F \right) f_N(v_N) dv_N \\
        &\ge n \int_{v_N = \ln^2 n}^{5n} v_N \int_{v_F = v_N}^{5n} \frac{v_F}{10n} \cdot \frac{v_N}{v_F^2} dv_F \cdot \frac{2}{v_N^3} dv_N \\
        &= \int_{v_N = \ln^2 n}^{5n} \frac{1}{5v_N} \int_{v_F = v_N}^{5n} \frac{1}{v_F} dv_F dv_N \\
        &= \int_{v_N = \ln^2 n}^{5n} \frac{1}{5v_N} \ln \left(\frac{5n}{v_N} \right) dv_N \\
        &= \left. -\frac{\ln^2(5n/v_N)}{10} \right|_{v_N = \ln^2 n}^{5n} \\
        &= \frac{\ln^2(5n/\ln^2 n)}{10} \\
        &= \Omega(\ln^2 n).
    \end{align*}
\end{proof}

\ERmotivationthree*
\begin{proof}
    The expected number of each item allocated to a single bidder is
    \begin{align*}
        \int_{v_1 = 1}^\infty \min \left\{\frac{1}{2} , \frac{v_1}{10n} \right\} f(v_1) dv_1 &= \int_{v_1 = 1}^{5n} \frac{v_1}{10n} f(v_1) dv_1 + \int_{v_1 = 5n}^\infty \frac{1}{2}  f(v_1) dv_1 \\
        &\ge \int_{v_1 = 1}^{5n} \frac{v_1}{10n} \cdot \frac{1}{v_1^2} dv_1 \\
        &= \frac{\ln (5n)}{10n} \\
        &\gg \frac{1}{n}.
    \end{align*}
\end{proof}

\mattlem*
\begin{proof}
    Let $q' \coloneqq \pi_1(v_1, v_2)$ denote the probability that $(v_1, v_2)$ gets item $1$. By IR, $p(v_1, v_2) \le v_1 q' + v_2 q$.

    The utility of $(3v_1, v_2')$ for what $(v_1, v_2)$ gets is
    $$3v_1 q' + v_2' q - p(v_1, v_2) \ge 3v_1 q' + v_2' q - v_1 q' - v_2 q = 2v_1 q' + v_2' q - v_2 q.$$
    
    By BIC the utility of $(3v_1, v_2')$ for their own outcome is also at least this amount, so
    \begin{align*}
        3v_1 \pi_1 (3v_1, v_2') + v_2' \pi_2(3v_1, v_2') &\ge 2v_1 q' + v_2' q - v_2 q + p(3v_1, v_2') \\
        &\ge 2v_1 q' + v_2' q - v_2 q.
    \end{align*}

    If $\pi_1(3v_1, v_2') = p \ge \pi_2(3v_1, v_2')$, then this gives
    \begin{align*}
        3v_1 p + v_2' p &\ge 2v_1 q' + v_2' q - v_2 q & \\
        p &\ge \frac{2v_1 q' + v_2' q - v_2 q}{3v_1 + v_2'} & \\
        &\ge \frac{2v_1 q' + 0 - v_1 q}{4v_1} & (v_2' \ge 0, v_2 \le v_1) & \\
        &\ge \frac{2v_1 q' - v_1 q}{4v_1} & (v_2' \le v_2 \le v_1) \\
        &\ge \frac{2v_1 q - v_1 q}{4v_1} & (q' \ge q) \\
        \implies \pi_1(3v_1, v_2') &\ge \frac{q}{4}. &
    \end{align*}
\end{proof}

\mattcor*
\begin{proof}
    By feasibility, the expected allocation of item $1$ awarded to a bidder can be at most $\frac{1}{n/2} = \frac{2}{n}$ (the factor of $2$ comes from assuming $v_1 > v_2$, by item symmetry). We can write this expectation as
    \begin{align*}
        \frac{2}{n} &\ge \EE_{(v_1, v_2)} \left[\pi_1(v_1, v_2) \right] \\
        &= \EE_{v_2} \EE_{v_1} \left[ \pi_1(v_1, v_2) \mid v_2 \right] \\ 
        &= \int_{w = 1}^\infty \mathbb{P}(v_2 = w) \int_{v = w}^\infty \pi_1(v,w) \mathbb{P}(v_1 = v \mid v_2 = w) dv \, dw \\
        &= \int_{w = 1}^\infty \frac{2}{w^3} \int_{v = w}^\infty \pi_1(v,w) \frac{w}{v^2} dv \, dw \\
        &\ge \int_{w = 1}^y \frac{2}{w^2} \int_{v = w}^\infty \frac{\pi_1(v,w)}{v^2} dv \, dw,
    \end{align*}
    where the final inequality is valid since the integrand is positive. Now that we have $w \le y < 3y$, we can split the inner integral and continue as follows: 
    \begin{align*}
        \frac{2}{n} &\ge \int_{w = 1}^y \frac{2}{w^2} \left(\int_{v = w}^{3y} \frac{\pi_1(v,w)}{v^2} dv \, dw + \int_{v = 3y}^\infty \frac{\pi_1(v,w)}{v^2} dv \, dw \right) \\
        &\ge \int_{w = 1}^y \frac{2}{w^2} \int_{v = 3y}^\infty \frac{\pi_1(v,w)}{v^2} dv \, dw.
    \end{align*}

    In this region of the type space, we have $v\ge 3y$ and $w\le y$. By Lemma~\ref{lem:MattLemma}, we have $\pi_1(v, w) \ge \frac{\pi_2(v/3, y)}{4}$, so now 
    \begin{align*}
        \frac{2}{n} &\ge \int_{w = 1}^y \frac{2}{w^2} \int_{v = 3y}^\infty \frac{\pi_2(v/3,y)}{4v^2} dv \, dw \\
        &= \int_{v = 3y}^\infty \frac{\pi_2(v/3,y)}{2v^2} dv \int_{w=1}^y \frac{1}{w^2} dw \\
        &= \left(1-\frac{1}{y}\right) \int_{v = 3y}^\infty \frac{\pi_2(v/3,y)}{2v^2} dv \\
        &= \left(1-\frac{1}{y}\right) \int_{u = y}^\infty \frac{\pi_2(u,y)}{18 u^2} 3 du \\ 
        &= \frac{1-1/y}{6y} \int_{u=y}^\infty \pi_2(u,y) \frac{y}{u^2} du \\
        &= \frac{1-1/y}{6y} \EE_{v_1} \left[\pi_2(v_1, v_2) \mid v_2 = y \right].
    \end{align*}
    Rearranging and using $y\ge 2$, we get
    $$\EE_{v_1} \left[\pi_2(v_1, v_2) \mid v_2 = y \right] \le \frac{12y}{n(1-1/y)} \le \frac{12y}{n/2} = \frac{24y}{n}.$$
\end{proof}

\subsection{Calculations for $\ER^m$}
We now provide some computations that will be useful for the rest of the analysis of $\ER^m$:
\begin{itemize}
    \item Distribution of favorite item:
    \begin{align*}
        F_{F}(x) &= \left(1 - \frac1x \right)^m, \\
        f_F (x) &= m \left(1-\frac1x \right)^{m-1} \frac{1}{x^2}
    \end{align*}

    \item Distribution of the $m-1$ non-favorite items:
    \begin{align*}
        f_{NF}(x) &= \mathbb{P}(v_j = x \mid \text{item } j \text{ is a non-fav}) \\
        &= \frac{ \mathbb{P}(v_j = x \cap \text{item } j \text{ is a non-fav})}{ \mathbb{P}(\text{item } j \text{ is a non-fav})} \\
        &= \frac{\frac{1}{x^2} \mathbb{P} (\text{one of the other } m-1 \text{ values is} > x)}{1 - \frac1m} \\
        &= \frac{\frac{1}{x^2} \left(1 - (1-\frac1x)^{m-1} \right)}{1 - \frac1m}
    \end{align*}

    \item Non-favorite marginal:
    \[
        F_{NF}(v) = \int_1^v \frac{\frac{1}{x^2} \left(1 - (1-\frac1x)^{m-1} \right)}{1 - \frac1m} \,\mathrm{d}v = \frac{1}{1-\frac1m}\Paren{1 - \frac{1}{v} - \frac{1}{m}\Paren{1-\frac{1}{v}}^m}
    \]
\end{itemize}

\subsection{Proofs from Section~\ref{sec:ER-revenue-UB}} \label{sec:proofs_ER-revenue-UB}

\refinedMatt*
\begin{proof}
By BIC, the utility of a bidder with value $v$ satisfies
\[
    \sum_j v_j \pi_j(v) - p(v) \geq \sum_j v_j \pi_j(\lambda v_k, v_\ell', v_{-\{k, \ell\}}) - p(\lambda v_k, v_\ell', v_{-\{k, \ell\}}).
\]
Similarly, the utility of a bidder with value $(\lambda v_k, v_\ell', v_{-\{k, \ell\}})$ satisfies
\begin{align*}
    \lambda v_k \pi_k(v) & + v'_\ell \pi_\ell(v) + \sum_{j \not= k, \ell} v_j \pi_j(v) - p(v) \\
        &\leq \lambda v_k \pi_k(\lambda v_k, v_\ell', v_{-\{k, \ell\}}) + v'_\ell \pi_\ell (\lambda v_k, v_\ell', v_{-\{k, \ell\}}) + \sum_{j \not= k, \ell} v_j \pi_j(\lambda v_k, v_\ell', v_{-\{k, \ell\}}) - p(\lambda v_k, v_\ell', v_{-\{k, \ell\}}).
\end{align*}
Subtracting the first inequality from the second yields
\[
    (\lambda - 1) v_k \pi_k(v) + (v'_\ell - v_\ell) \pi_\ell(v) \leq (\lambda - 1) v_k \pi_k(\lambda v_k, v_\ell', v_{-\{k, \ell\}}) + (v'_\ell - v_\ell) \pi_\ell(\lambda v_k, v_\ell', v_{-\{k, \ell\}}).
\]
Since $v_\ell' \leq v_\ell$, we can further upper bound the RHS by getting rid of the term associated with item $\ell$:
\[
    (\lambda - 1) v_k \pi_k(v) + (v'_\ell - v_\ell) \pi_\ell(v) \leq (\lambda - 1) v_k \pi_k(\lambda v_k, v_\ell', v_{-\{k, \ell\}}).
\]
Since $v_k \geq v_\ell$, by strong monotonicity, $\pi_k(v) \geq \pi_\ell(v)$, so we can further lower bound $(v'_\ell - v_\ell) \pi_\ell(v)$ by $- v_k \pi_k(v)$.
Thus,
\[
    (\lambda - 2) v_k \pi_k(v) \leq (\lambda - 1) v_k \pi_k(\lambda v_k, v_\ell', v_{-\{k,\ell\}}).
\]
The result now follows from dividing by $(\lambda - 1) v_k$ (which is well-defined and preserves the direction of the inequality since $\lambda > 1$) and applying strong monotonicity.
\end{proof}

\jointpdfcond*
\begin{proof}
Note that
\[
    \PP\cSqBr{v_k \leq w_k}{v_\ell, E_k} = \PP\cSqBr{\max_{j \not = \ell} v_j \leq w_k}{v_\ell, \max_{j \not = \ell} v_j \geq v_\ell} = \frac{\Paren{1 - \frac{1}{w_k}}^{m-1} - \Paren{1 - \frac{1}{v_\ell}}^{m-1}}{1 - \Paren{1 - \frac{1}{v_\ell}}^{m-1}} \cdot \1(v_\ell \leq w_k),
\]
so the pdf of $v_k$ conditioned on $v_\ell$ and $E_k$ is 
\[
    \frac{\partial}{\partial w_k} \PP\cSqBr{v_k \leq w_k}{v_\ell, E_k} = \frac{(m-1)\Paren{1 - \frac{1}{w_k}}^{m-2} \frac{1}{w_k^2}}{1 - \Paren{1 - \frac{1}{v_\ell}}^{m-1}} \cdot \1(v_\ell \leq w_k).
\]
Now, by independence (conditioned on the maximum, the non-maximum random variables become i.i.d.), 
\[
    \PP\cSqBr{v_j \leq w_j \; \forall j \not= \ell}{v_k, v_\ell, E_k} = \1(v_k \leq w_k) \prod_{j \not= k, \ell} \min\CrBr{\frac{1 - \frac{1}{w_j}}{1 - \frac{1}{v}}, 1}.
\]
Integrating the above expression over the pdf of $v_k$ conditioned on $v_\ell$ and $E_k$ yields the result. 
To obtain the joint pdf, observe that when differentiating with respect to each variable $w_j$, if $w_j \geq w_k$ for some $j \not= k, \ell$, then no $w_j$ term appears in the joint cdf so the partial becomes $0$.
\end{proof}

\expNFprob*
\begin{proof}
The pdf of $v_\ell$ conditioned on $E_k$ is 
\[
    \frac{\frac{1}{w_\ell^2} \left(1 - (1-\frac{1}{w_\ell})^{m-1} \right)}{1 - \frac1m}
\]
We compute:
\begin{align*}
    \EE\cSqBr{\pi_k(v)}{E_k}
        &= \EE\cSqBr{\EE\cSqBr{\pi_k(v)}{v_\ell, E_k}}{E_k} \\
        &= \int_1^\infty \int_{w_\ell}^\infty\int_1^{w_k} \dots \int_1^{w_k} \pi_k(w) \frac{m-1}{1-\frac{1}{m}} \prod_j \frac{1}{w_j^2} \, \mathrm{d}w \tag{Lemma \ref{lem:joint-pdf-given-non-fav}; integrate over the pdf of $v_\ell$ conditioned on $E_k$} \\
        &\geq \int_1^y \int_{3y}^\infty\int_1^{\frac{w_k}{3}} \dots \int_1^{\frac{w_k}{3}} \pi_k(w) \frac{m-1}{1-\frac{1}{m}} \prod_j \frac{1}{w_j^2} \, \mathrm{d}w \\
        &\geq \int_1^y \int_{3y}^\infty\int_1^{\frac{w_k}{3}} \dots \int_1^{\frac{w_k}{3}} \frac{\pi_\ell\Paren{\frac{w_k}{3}, y, w_{-\{k, \ell\}}}}{2} \frac{m-1}{1-\frac{1}{m}} \prod_j \frac{1}{w_j^2} \, \mathrm{d}w \tag{Lemma \ref{lem:refined-MattLemma_m} with $\lambda = 3$} \\
        &= \frac{1}{6} \int_1^y \int_y^\infty\int_1^{w_k} \dots \int_1^{w_k} \pi_\ell\Paren{y, w_{-\ell}} \frac{m-1}{1-\frac{1}{m}} \prod_j \frac{1}{w_j^2} \, \mathrm{d}w \tag{change of variable} \\
        &= \Paren{1 - \frac{1}{y}} \int_y^\infty\int_1^{w_k} \dots \int_1^{w_k} \pi_\ell\Paren{y, w_{-\ell}} \frac{m-1}{1 - \frac{1}{m}} \prod_{j \not= \ell} \frac{1}{w_j^2} \, \mathrm{d}w_{-\ell} \\
        &= \frac{\Paren{1 - \Paren{1 - \frac{1}{y}}^{m-1}}\Paren{1 - \frac{1}{y}}}{6\Paren{1 - \frac{1}{m}}} \int_y^\infty\int_1^{w_k} \dots \int_1^{w_k} \pi_\ell\Paren{y, w_{-\ell}} \frac{m-1}{1 - \Paren{1 - \frac{1}{y}}^{m-1}} \prod_{j \not= \ell} \frac{1}{w_j^2} \, \mathrm{d}w_{-\ell} \\
        &= \frac{\Paren{1 - \Paren{1 - \frac{1}{y}}^{m-1}}\Paren{1 - \frac{1}{y}}}{6\Paren{1 - \frac{1}{m}}} \cdot \EE\cSqBr{\pi_\ell(y, v_{-\ell})}{v_\ell = y, E_k} \tag{Lemma \ref{lem:joint-pdf-given-non-fav}}
\end{align*}
\end{proof}

\ubNFprob*
\begin{proof}
We have
\begin{align*}
    \frac{m}{n} 
        &\geq \EE\SqBr{\sum_{k = 1}^m \pi_k(v) \cdot \1(E_k)} \\
        &= \frac{1}{m} \sum_{k = 1}^m \EE\cSqBr{\pi_k(v)}{E_k} \\
        &\geq \frac{1}{m} \sum_{k \not= j} \EE\cSqBr{\pi_k(v)}{E_k} \\
        &\geq \frac{\Paren{1 - \Paren{1 - \frac{1}{y}}^{m-1}}\Paren{1 - \frac{1}{y}}}{6(m-1)} \sum_{k \not= j} \EE\cSqBr{\pi_j(y, v_{-j})}{v_j = y, E_k} \tag{Lemma \ref{lem:expected-prob-of-getting-non-fav-via-expected-prob-of-getting-particular-fav-improved}} \\
        &= \frac{1}{6} \Paren{1 - \Paren{1 - \frac{1}{y}}^{m-1}}\Paren{1 - \frac{1}{y}} \EE\cSqBr{\pi_j(y, v_{-j})}{v_j = y, E_{-j}} 
\end{align*}
\end{proof}

\contributionNF*
\begin{proof}
We have
\begin{align*}
    \EE\cSqBr{v_j \pi_j(v)}{E_{-j}}
        = {} & \int_1^\infty y\EE\cSqBr{\pi_j(y, v_{-j})}{v_j = y, E_{-j}} \frac{\frac{1}{y^2}\Paren{1 - \Paren{1 - \frac{1}{y}}^{m-1}}}{1 - \frac{1}{m}} \,\mathrm{d}y \\
        \leq {} & \Paren{1 + \frac{1}{n}} \int_1^{1 + \frac{1}{n}} \frac{\frac{1}{y^2} \Paren{1 - \Paren{1 - \frac{1}{y}}^{m-1}}}{1 - \frac{1}{m}} \,\mathrm{d}y + \int_{1 + \frac{1}{n}}^{n+1} \frac{6m}{n(y-1)\Paren{1 - \frac{1}{m}}} \,\mathrm{d}y \\
            & + \int_{n+1}^\infty \frac{\frac{1}{y} \Paren{1 - \Paren{1 - \frac{1}{y}}^{m-1}}}{1 - \frac{1}{m}} \,\mathrm{d}y \tag{Lemma \ref{lem:upper-bound-expected-prob-non-fav-given-val-improved}} \\
        \leq {} & \frac{1 + \frac{1}{n}}{1 - \frac{1}{m}} \int_1^{1 + \frac{1}{n}} \frac{m}{y^2} \,\mathrm{d}v + \frac{12m \ln n}{n\Paren{1 - \frac{1}{m}}} + \frac{1}{1 - \frac{1}{m}} \int_{n+1}^\infty \frac{m}{y^2} \,\mathrm{d}v \tag{union bound} \\
        \leq {} & O\Paren{\frac{m \ln n}{n}} + O\Paren{\frac{m}{n}}.
\end{align*}
\end{proof}

\revERmn*
\begin{proof}
\begin{align*}
    \rev_n(\ER^m)
        &\leq \max_{\text{$(\pi, p)$ BIC}} \underset{v \sim (\ER^m)^n}{\EE}\SqBr{\sum_i \sum_j \pi_{ij}(v_i) \Paren{\varphi_{ij}(v_i) \cdot \1(E_j) + v_{ij} \cdot \1(E_{-j})}} \tag{\cite{CaiDW16}, \cite{BeyhaghiW19}} \\
        &\leq nm + \max_{\text{$(\pi, p)$ BIC}} \underset{v \sim (\ER^m)^n}{\EE}\SqBr{\sum_i \sum_j v_{ij} \pi_{ij}(v_i) \cdot \1(E_{-j})} \tag{\cite{BeyhaghiW19}} \\
        &\leq nm + \sum_i \sum_j O\Paren{\frac{m \ln n}{n}} \tag{Lemma \ref{lemma:contribution-NF-virtual-welfare}} \\
        &= nm + O(m^2 \ln n)
\end{align*}
\end{proof}

\subsection{Proofs from Section~\ref{sec:ER-revenue-LB}} \label{sec:proofs_ER-revenue-LB}

\ESecondHighestFav*
\begin{proof}
To see the upper bound, simply note that $v_{(2),(1)}$ is at most the second highest value of $nm$ draws from $\ER$, which has expectation $nm$.
To see the lower bound, note that for any bidder $i$,
\[
    \PP\SqBr{v_{i, (1)} \leq z} = \PP\SqBr{\max_{j \in [m]} v_{i,j} \leq z} = \Paren{1-\frac{1}{z}}^m
\]
Thus,
\begin{align*}
    \PP\SqBr{v_{(2), (1)} \leq z} 
        &= \Paren{1-\frac{1}{z}}^{mn} + n\Paren{1-\frac{1}{z}}^{m(n-1)}\Paren{1-\Paren{1-\frac{1}{z}}^m} \\
        &= \Paren{1-\frac{1}{z}}^{mn} + \frac{mn}{z}\Paren{1-\frac{1}{z}}^{mn-1} + n \Paren{1- \Paren{1+\frac{m}{z-1}}\Paren{1-\frac{1}{z}}^m} \Paren{1-\frac{1}{z}}^{m(n-1)} \\
        &\leq \Paren{1-\frac{1}{z}}^{mn} + \frac{mn}{z}\Paren{1-\frac{1}{z}}^{mn-1} + \frac{nm(m-1)}{z(z-1)} \Paren{1-\frac{1}{z}}^{m(n-1)}
\end{align*}
and
\begin{align*}
    \EE\SqBr{v_{(2),(1)}}
        &= 1 + \int_1^\infty \PP\SqBr{v_{(2), (1)} \geq z} \,\mathrm{d}z \\
        &\geq 1 + \int_1^\infty 1 - \Paren{1-\frac{1}{z}}^{mn} - \frac{mn}{z}\Paren{1-\frac{1}{z}}^{mn-1} \,\mathrm{d}z - \int_1^\infty \frac{nm(m-1)}{z(z-1)} \Paren{1-\frac{1}{z}}^{m(n-1)} \,\mathrm{d}z \\
        &= 1 + \int_1^\infty \PP[\text{second highest value among $mn$ i.i.d. draws} \geq z] \,\mathrm{d}z - \frac{n(m-1)}{n-1} \tag{Wolfram} \\
        &= \EE[\text{second highest value among $mn$ i.i.d. draws}] - \frac{n(m-1)}{n-1} \\
        &= mn - \frac{n(m-1)}{n-1}
\end{align*}
\end{proof}

\cENonFav*
\begin{proof}
$x$ is a random draw from $\ER$ that is at most $v$, so
\[
    \PP\cSqBr{x \leq z}{x \leq v} = \begin{cases} 
        0 & z < 1 \\
        \frac{1-\frac{1}{z}}{1 - \frac{1}{v}} & 1 \leq z \leq v \\
        1 & z > v
    \end{cases}
\]
Thus,
\begin{align*}
    \EE\cSqBr{x}{x \leq v} 
        &= 1 + \int_1^\infty \PP\cSqBr{x \geq z}{x \leq v} \,\mathrm{d}z \\
        &= 1 + \int_1^v 1 - \frac{1-\frac{1}{z}}{1 - \frac{1}{v}} \,\mathrm{d}z \\
        &= \frac{\ln v}{1 - 1/v}
\end{align*}
\end{proof}

\begin{lemma}\label{lemma:conditional-variance-ER}
Let $x \sim \ER$.
\[
    \Var\cSqBr{x}{x \leq v} = v - \Paren{\frac{\ln v}{1 - 1/v}}^2
\]
\end{lemma}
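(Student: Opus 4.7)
The plan is to invoke the standard identity $\Var[X] = \EE[X^2] - (\EE[X])^2$, using the previous lemma (Lemma~\ref{lemma:conditional-expectation-non-favorite}) for the first moment and computing the second moment directly from the conditional density.

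First I would write down the conditional CDF of $x \sim \ER$ given $x \leq v$, which from the proof of Lemma~\ref{lemma:conditional-expectation-non-favorite} is $\PP[x \leq z \mid x \leq v] = \frac{1 - 1/z}{1 - 1/v}$ for $z \in [1,v]$, and differentiate to obtain the conditional density $\frac{1/z^2}{1 - 1/v}$ on $[1,v]$. Then the second moment integral is exceptionally clean:
\[
    \EE\cSqBr{x^2}{x \leq v} = \int_1^v z^2 \cdot \frac{1/z^2}{1 - 1/v}\,\mathrm{d}z = \frac{v-1}{1 - 1/v} = v,
\]
since $1 - 1/v = (v-1)/v$. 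Combining with $\EE\cSqBr{x}{x \leq v} = \frac{\ln v}{1 - 1/v}$ from Lemma~\ref{lemma:conditional-expectation-non-favorite} yields the claimed formula.

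There is no real obstacle here — the $1/z^2$ from the density cancels the $z^2$ in the integrand, so the computation collapses immediately. The only thing to be careful about is using the correct conditional density (normalized by $1 - 1/v$ so that it integrates to $1$ on $[1,v]$), but this is already implicit in the previous lemma's proof.
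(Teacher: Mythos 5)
Your proposal is correct and follows exactly the same route as the paper: compute $\EE\cSqBr{x^2}{x \leq v} = v$ by integrating $z^2$ against the conditional density $\frac{1/z^2}{1-1/v}$ on $[1,v]$, then subtract the square of the conditional mean from Lemma~\ref{lemma:conditional-expectation-non-favorite}. Nothing further is needed.
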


\begin{proof}
By Lemma \ref{lemma:conditional-expectation-non-favorite}, it suffices to compute $\EE\cSqBr{x^2}{x \leq v}$.
\[
    \EE\cSqBr{x^2}{x \leq v} = \int_1^v z^2 \frac{\mathrm{d}}{\mathrm{d}z} \PP\cSqBr{x \leq z}{x \leq v} \, \mathrm{d}z = \int_1^v \frac{1}{1-\frac{1}{v}} \, \mathrm{d}z = v
\]
\end{proof}

\begin{theorem}\label{theorem:brev-ER-LB}
$$\rev^{\text{SPA-GB}}(\ER^m) \geq nm + \Omega(m \ln(nm))$$
\end{theorem}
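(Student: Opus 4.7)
The plan is to analyze the expected second-highest bundle value $\EE[B_{(2)}]$ (which equals the SPA-GB revenue) by using as a proxy the bundle of the bidder whose favorite is the second highest. Let $y := v_{(2),(1)}$ and write $\beta$ for the bidder realizing this second-highest favorite, $\alpha$ for the bidder realizing the top favorite $F_\alpha$, and $N_\beta := B_\beta - y$ for the sum of $\beta$'s non-favorite values, where $B_i := \sum_j v_{ij}$. I will first establish the pointwise inequality
\[
    B_{(2)} \geq y + N_\beta \cdot \mathbf{1}[\beta \text{ is not the unique max-bundle bidder}],
\]
which holds because $B_{(2)} \geq v_{(2),(1)}$ always (the $k$-th order statistic of the bundles dominates the $k$-th order statistic of the favorites, since at least $k$ bundles exceed $F_{(k)}$), and $B_{(2)} \geq B_\beta$ whenever some other bidder beats $\beta$'s bundle. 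By Lemma~\ref{lemma:expectation-second-highest-favorite}, $\EE[y] \geq nm - O(m)$, so it suffices to show $\EE[N_\beta \cdot \mathbf{1}[\beta \ne \max]] = \Omega(m \ln(nm))$.

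To do so, restrict to the good event $\mathcal{E} := \{F_\alpha \geq 2y\} \cap \{N_\beta \leq y\}$: on $\mathcal{E}$ we have $B_\alpha \geq F_\alpha \geq 2y \geq y + N_\beta = B_\beta$, so $\beta$ is strictly beaten by $\alpha$. The key observation is that, conditional on $y$, the two component events of $\mathcal{E}$ depend on disjoint pieces of randomness (the other bidders' values versus $\beta$'s own non-favorite values) and are therefore independent. A direct computation with the favorite CDF $(1-1/x)^m$ gives
\[
    \PP[F_\alpha \geq 2y \mid y] = \frac{1 - (1-1/(2y))^m}{1 - (1-1/y)^m} \geq \frac{1}{2},
\]
which is really the heavy-tailed phenomenon that the gap between the top two order statistics is of the same order as the values themselves. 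For the non-favorite event, Lemma~\ref{lemma:conditional-expectation-non-favorite} gives $\EE[N_\beta \mid y] = \Theta(m \ln y)$, and a calculation analogous to Lemma~\ref{lemma:conditional-variance-ER} yields $\EE[X_j^2 \mid y] = y$ for each of the $m-1$ non-favorite draws $X_j$, so $\EE[N_\beta^2 \mid y] = O(my)$ in our regime. Combining Markov's inequality $\PP[N_\beta > y \mid y] = O(m \ln y / y)$ with Cauchy--Schwarz gives $\EE[N_\beta \mathbf{1}[N_\beta > y] \mid y] = O(m \sqrt{\ln y})$, hence $\EE[N_\beta \mathbf{1}[N_\beta \leq y] \mid y] = \Omega(m \ln y)$ for $y$ sufficiently large.

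Putting the pieces together by independence, $\EE[N_\beta \mathbf{1}_\mathcal{E} \mid y] = \Omega(m \ln y)$. An order-statistics argument in the spirit of Lemma~\ref{lemma:expectation-second-highest-favorite} (using that $\PP[v_{(2),(1)} \geq nm/2]$ is a positive constant) shows $y = \Theta(nm)$ with constant probability, so $\EE[N_\beta \mathbf{1}_\mathcal{E}] = \Omega(m \ln(nm))$, giving $\EE[B_{(2)}] \geq nm + \Omega(m \ln(nm))$. The main obstacle is that $N_\beta$ is heavy-tailed with conditional variance $\Theta(my)$ vastly exceeding its squared mean $\Theta((m \ln y)^2)$, so one cannot hope to concentrate the proxy $B_\beta$ around its expectation; the large-gap event $\mathcal{E}$ cleanly sidesteps this by converting what would be a concentration question into a routine expectation bound, while exploiting the heavy tail of the favorite distribution to guarantee that $\alpha$'s favorite is at least twice $\beta$'s with constant probability.
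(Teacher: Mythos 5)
Your proposal is correct and follows essentially the same route as the paper's proof: the same proxy (the bundle of the bidder $\beta$ with the second-highest favorite value), the same pointwise decomposition $B_{(2)} \geq v_{(2),(1)} + N_\beta \cdot \1[\text{good event}]$, the same conditional independence of $\beta$'s non-favorites and $\alpha$'s favorite given $v_{(2),(1)}$, and the same heavy-tail estimate showing the top favorite exceeds $\beta$'s bundle with constant probability, integrated over the constant-probability event $v_{(2),(1)} = \Theta(nm)$. The only differences are minor technical choices: you control the tail of $N_\beta$ via its conditional second moment and Cauchy--Schwarz with the threshold $F_\alpha \geq 2v_{(2),(1)}$, whereas the paper instead conditions on every non-favorite being at most $v_{(2),(1)}^{1/2}$ and asks for $F_\alpha \geq v_{(2),(1)} + (m-1)v_{(2),(1)}^{1/2}$.
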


\begin{proof}
Note that 
\begin{align*}
    \rev^{\text{SPA-GB}}(\ER^m) 
        &\geq \EE\SqBr{\sum_{j=1}^m v_{(2), (j)} \cdot \1\Paren{\textstyle v_{(1), (1)} \geq \sum_{j=1}^m v_{(2), (j)}} + v_{(2),(1)} \cdot \1\Paren{\textstyle v_{(1), (1)} < \sum_{j=1}^m v_{(2), (j)}}} \\
        &= \EE\SqBr{v_{(2),(1)} + \sum_{j=2}^m v_{(2),(j)} \cdot \1\Paren{\textstyle v_{(1), (1)} < \sum_{j=1}^m v_{(2), (j)}}} \\
        &\geq nm - O(m) + \EE\SqBr{\sum_{j=2}^m v_{(2),(j)} \cdot \1\Paren{\textstyle v_{(1), (1)} < \sum_{j=1}^m v_{(2), (j)}}} \tag{Lemma \ref{lemma:expectation-second-highest-favorite}}
\end{align*}
since if bidder (1) values her favorite item at least as much as bidder (2) values the bundle, then the second highest value for the bundle is at least bidder (2)'s value and if bidder (2) values the bundle more than bidder (1) values her favorite item, then the second highest value for the bundle is at least bidder (1)'s value for her favorite item (which in turn is at least bidder (2)'s value for her favorite item).
It remains to lower bound the expectation.

Let $E$ denote the event that $v_{(2),(j)} \leq v_{(2),(1)}^{1/2}$ for all $j \in [m] \setminus \{1\}$.
\begin{align*}
    & \EE \cSqBr{\sum_{j=2}^m v_{(2),(j)} \cdot \1\Paren{\textstyle v_{(1), (1)} \geq \sum_{j=1}^m v_{(2), (j)}}}{v_{(2),(1)}} \\
        &  \geq \EE\cSqBr{\sum_{j=2}^m v_{(2),(j)} \cdot \1\Paren{\textstyle v_{(1), (1)} \geq \sum_{j=1}^m v_{(2), (j)}}}{v_{(2),(1)}, E} \PP\cSqBr{E}{v_{(2),(1)}} \\
        &  \geq \EE\cSqBr{\sum_{j=2}^m v_{(2),(j)} \cdot \1\Paren{v_{(1), (1)} \geq v_{(2), (1)} + (m-1)v_{(2),(1)}^{1/2}}}{v_{(2),(1)}, E} \PP\cSqBr{E}{v_{(2),(1)}} \\
        &  \geq \EE\cSqBr{\sum_{j=2}^m v_{(2),(j)}}{v_{(2),(1)}, E} \PP\cSqBr{v_{(1), (1)} \geq v_{(2), (1)} + (m-1)v_{(2),(1)}^{1/2}}{v_{(2),(1)}} \PP\cSqBr{E}{v_{(2),(1)}} \tag{$\sum_{j=2}^m v_{(2),(j)}$ and $\1\Paren{v_{(1), (1)} \geq v_{(2), (1)} + (m-1)v_{(2),(1)}^\varepsilon}$ are independent conditioned on $v_{(2),(1)}$}
\end{align*}
Since $v_{(2),(2)}, \dots, v_{(2),(m)}$ consist of $m-1$ i.i.d. random draws from $\ER$ conditioned on being at most $v_{(2),(1)}$, 
\begin{align*}
    \PP\cSqBr{E}{v_{(2),(1)}} 
        &= \Paren{\frac{1-\frac{1}{v_{(2),(1)}^\varepsilon}}{1 - \frac{1}{v_{(2),(1)}}}}^{m-1} \geq \Paren{1 - \frac{1}{v_{(2),(1)}^{1/2}}}^{m-1} \geq 1 - \frac{m-1}{v_{(2),(1)}^{1/2}}
\end{align*}
Meanwhile, 
\begin{align*}
    & \PP \cSqBr{v_{(1), (1)} \geq v_{(2), (1)} + (m-1)v_{(2),(1)}^{1/2}}{v_{(2),(1)}} \\
        &  = \PP_{x \sim \ER^m}\cSqBr{\max_j x_j \geq v_{(2), (1)} + (m-1)v_{(2),(1)}^{1/2}}{v_{(2),(1)}, \max_j x_j \geq v_{(2),(1)}} \tag{conditioned on $v_{(2),(1)}$, $v_{(1),(1)}$ is just the maximum of $m$ i.i.d. random draws from $\ER$} \\
        &  \geq \PP_{x \sim \ER}\cSqBr{v_{i,j} \geq v_{(2), (1)} + (m-1)v_{(2),(1)}^{1/2}}{v_{(2),(1)}, v_{i,j} \geq v_{(2),(1)}} \\
        &  = \frac{v_{(2),(1)}}{v_{(2),(1)} + (m-1)v_{(2),(1)}^{1/2}} \\
        &  = \frac{1}{1 + \frac{m-1}{v_{(2),(1)}^{1/2}}} \\
\end{align*}
Moreover,
\begin{align*}
    \EE\cSqBr{\sum_{j=2}^m v_{(2),(j)}}{v_{(2),(1)}, E}
        &= (m-1) \int_1^{v_{(2),(1)}^{1/2}} 1 - \frac{1-\frac{1}{v}}{1 - \frac{1}{v_{(2),(1)}^{1/2}}} \,\mathrm{d}v \\
        &= \frac{(m-1) \Paren{\frac{\ln v_{(2),(1)}}{2} - 1}}{1 - \frac{1}{v_{(2),(1)}^{1/2}}} \\
        &\geq (m-1) \Paren{\frac{\ln v_{(2),(1)}}{2} - 1}
\end{align*}
Putting all three terms together,
\[
    \EE\cSqBr{\sum_{j=2}^m v_{(2),(j)} \cdot \1\Paren{\textstyle v_{(1), (1)} \geq \sum_j v_{(2), (j)}}}{v_{(2),(1)}} \geq (m-1) \Paren{\frac{\ln v_{(2),(1)}}{2} - 1} \frac{1 - \frac{m-1}{v_{(2),(1)}^{1/2}}}{1 + \frac{m-1}{v_{(2),(1)}^{1/2}}} 
\]
For $v_{(2),(1)} \geq 4(m-1)^2$, the RHS is at least $\frac{m-1}{3}  \Paren{\frac{\ln v_{(2),(1)}}{2} - 1}$.
To conclude,
\begin{align*}
    & \EE \SqBr{\sum_{j=2}^m v_{(2),(j)} \cdot \1\Paren{\textstyle v_{(1), (1)} \geq \sum_j v_{(2), (j)}}} \\
        &  \geq \int_{4(m-1)^2}^\infty \EE\cSqBr{\sum_{j=2}^m v_{(2),(j)} \cdot \1\Paren{\textstyle v_{(1), (1)} \geq \sum_j v_{(2), (j)}}}{v_{(2),(1)}} \frac{\mathrm{d}}{\mathrm{d}v} \PP[v_{(2),(1)} \leq v] \,\mathrm{d}v \\
        &  \geq \frac{m-1}{3}  \int_{4(m-1)^2}^\infty \Paren{\frac{\ln v}{2} - 1} \frac{\mathrm{d}}{\mathrm{d}v} \PP[v_{(2),(1)} \leq v] \,\mathrm{d}v \\
        &  \geq \frac{m-1}{3} \int_{4 nm}^\infty \Paren{\frac{\ln v}{2} - 1} \frac{\mathrm{d}}{\mathrm{d}v} \PP[v_{(2),(1)} \leq v] \,\mathrm{d}v \tag{$n \geq m$} \\
        &  \geq \frac{m-1}{3} \Paren{\frac{\ln(4mn) }{2} - 1} \PP\SqBr{v_{(2),(1)} \geq 4nm}
\end{align*}
Note that
\begin{align*}
    \PP\SqBr{v_{(2),(1)} \geq 4nm} 
        &= 1 - \Paren{1 - \frac{1}{4nm}}^{mn} - n \Paren{1 - \frac{1}{4nm}}^{m(n-1)}\Paren{1 - \Paren{1 - \frac{1}{4nm}}^{m}} \\
        &\geq 1 - e^{-\frac{1}{4}} - \frac{1}{4} e^{-\frac{1}{4}\Paren{1 - \frac{1}{n}}} \\
        &\geq  1 - e^{-1/4} - \frac{1}{4} e^{-1/8} \tag{$n \geq 2$} \\
        &> 0.0005 
\end{align*}
so
\[
    \EE\SqBr{\sum_{j=2}^m v_{(2),(j)} \cdot \1\Paren{\textstyle v_{(1), (1)} > \sum_j v_{(2), (j)}}} \geq \frac{0.0005}{3} (m-1) \Paren{\frac{\ln(4mn) }{2} - 1} = \Omega(m \ln (mn))
\]
\end{proof}

\begin{theorem}\label{theorem:brev-ER-UB}
$$\rev^{\text{SPA-GB}}(\ER^m) \leq nm + O(m \ln(nm))$$.
\end{theorem}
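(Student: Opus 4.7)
The target is to prove $\EE[B_{(2)}] \leq nm + O(m\ln(nm))$, where $B_i := \sum_j v_{ij}$ is bidder $i$'s bundle value and $B_{(2)}$ is the second-highest among $\{B_1,\ldots,B_n\}$ (since the second-price auction on the grand bundle has revenue exactly $\EE[B_{(2)}]$). The plan is to mirror the lower bound proof of Theorem \ref{theorem:brev-ER-LB}: show that $B_{(2)}$ is, in expectation, closely approximated from above by the proxy $\sum_j v_{(2),(j)}$, whose expectation is already known (by Lemmas \ref{lemma:expectation-second-highest-favorite} and \ref{lemma:conditional-expectation-non-favorite}) to be $nm + O(m\ln(nm))$.

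The first step is to exploit the identity $B_{(2)} \leq \max_{k \neq \tau} B_k$ for \emph{any} fixed bidder $\tau$, and choose $\tau$ to be the bidder with the highest favorite value $F_{(1)} := \max_j v_{\tau,j}$. Every remaining bidder then has all values bounded by $v_{(2),(1)}$. Among these $n-1$ remaining bidders, the second-by-favorite bidder $B$ contributes bundle $B_B = v_{(2),(1)} + N_B = \sum_j v_{(2),(j)}$ (matching the proxy), while the other $n-2$ bidders $C_1,\ldots,C_{n-2}$ have all $m$ values drawn i.i.d.\ from $\ER$ truncated at $v_{(2),(1)}$. Writing $\max_{k \neq \tau} B_k = \max(B_B, \max_i B_{C_i})$ and using $\max(a,b) = a + (b-a)_+$ yields
\[
\EE[B_{(2)}] \leq \EE[B_B] + \EE\!\left[\Bigl(\max_i B_{C_i} - B_B\Bigr)_+\right].
\]

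The first term equals $nm + O(m\ln(nm))$, so it remains to bound the excess term by $O(m\ln(nm))$. Conditioning on $y := v_{(2),(1)}$, the bundles $B_{C_i}$ are i.i.d.\ sums of $m$ truncated-$\ER$ variables at $y$, while $B_B = y + N_B$ is independent of them with $\EE[B_B \mid y] = y + O(m\log y)$. The key observation is that the heavy-tail contribution from $\max_i B_{C_i}$ is absorbed by the deterministic offset $y$ in $B_B$: using the one-big-jump principle for sums of regularly-varying variables gives $\Pr[B_{C_i} > t \mid y] \lesssim m/t$ in the range $t \in [m\log y, y]$, so $\Pr[\max_i B_{C_i} > t \mid y] \lesssim nm/t$ in the same range. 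Thus the typical value of $\max_i B_{C_i}$ is $O(nm)$, which is exactly the typical scale of $y$; in particular, for $y$ in the concentration window around $nm$, one has $\max_i B_{C_i} \lesssim y \leq B_B$ on average, and the positive excess is driven only by lower-order fluctuations of size $O(m\log y)$.

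The main technical obstacle is Step~3: making the above heuristic rigorous. A crude union bound on the tail of $\max_i B_{C_i}$ yields only $\EE[\max_i B_{C_i}] = \Theta(nm)$, which is exactly the same order as $y$, so the analysis cannot afford to lose even a constant factor when comparing against $B_B$. The plan is to split over regimes of $y$: for $y$ in a band around $nm$ (the high-probability regime, by the tail bound computed in the appendix proof of Lemma \ref{lemma:expectation-second-highest-favorite}), compare $\max_i B_{C_i}$ against $B_B$ using their joint tail behavior---both concentrate around $y$, with positive-part fluctuations of order $m\log y$; for $y \ll nm$ or $y \gg nm$, use the exponentially decaying probability of $v_{(2),(1)}$ being far from $nm$, combined with the deterministic bound $\max_i B_{C_i} \leq my$, to control the contribution. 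Integrating over $y$ using $\EE[m\log y] = O(m\log(nm))$ then yields the desired $O(m\ln(nm))$ excess, matching the lower bound.
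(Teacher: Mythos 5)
Your overall route is the same as the paper's: upper-bound the revenue by the maximum bundle value among all bidders other than the one with the highest favorite value, identify $\sum_j v_{(2),(j)}$ as the dominant term with expectation $nm + O(m\ln(nm))$ (via Lemmas~\ref{lemma:expectation-second-highest-favorite} and~\ref{lemma:conditional-expectation-non-favorite}), and reduce to bounding the expected positive excess contributed by the remaining $n-2$ bidders. The setup through $\EE[B_{(2)}] \leq \EE[B_B] + \EE[(\max_i B_{C_i} - B_B)^+]$ is correct, and the observation that the $C_i$ bidders' values are i.i.d.\ draws from $\ER$ truncated at $v_{(2),(1)}$ is also correct.

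The gap is Step 3, which you flag as the main obstacle and then leave as a plan rather than a proof. The only quantitative tail estimate you state, $\PP[B_{C_i} > t \mid y] = O(m/t)$, is restricted to $t \leq y$ and is a first-order (``one-big-jump'') bound, whereas the positive part $(\max_i B_{C_i} - B_B)^+$ lives entirely on the event $\max_i B_{C_i} > B_B \geq y$: there a union bound gives only $\PP[\max_i B_{C_i} > y+s] = O(nm/(y+s))$, and integrating over $s \in [0,(m-1)y]$ yields $\Theta(nm\ln m)$, far too large. What is needed is a \emph{second-order} estimate capturing that $B_{C_i} > y + s$ forces the bidder's favorite value to land near $y$ (probability $O(m/y)$ per unit of proximity) \emph{and} the remaining $m-1$ values to overshoot their conditional mean $\Theta(m\ln y)$. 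The paper obtains exactly this by comparing each bundle to the deterministic proxy $v_{(2),(1)} + (m-1)\ln v_{(2),(1)}/(1-1/v_{(2),(1)})$, applying Chebyshev with the conditional variance $\Var[x \mid x \leq v] \leq v$ (Lemma~\ref{lemma:conditional-variance-ER}) conditioned on $(v_{(i),(1)}, v_{(2),(1)})$, and integrating out $v_{(i),(1)}$ to get a per-bidder excess of $O(m^2 \ln v_{(2),(1)}/v_{(2),(1)})$, which sums and integrates to $O(m\ln(nm))$. Your regime-splitting plan does not substitute for this: the claim that $\max_i B_{C_i}$ ``concentrates around $y$ with positive-part fluctuations of order $m\log y$'' is precisely the statement to be proved, and the proposed handling of $y \gg nm$ fails outright --- $v_{(2),(1)}$ is \emph{not} exponentially concentrated above $nm$ (its upper tail is only $\PP[v_{(2),(1)} > z] = \Theta((nm/z)^2)$), so pairing that tail with the crude bound $\max_i B_{C_i} \leq my$ contributes $\Theta(nm^2)$ rather than $O(m\ln(nm))$. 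You need the conditional excess bound itself to decay like $\ln y / y$, as in the paper, to survive the large-$y$ regime.
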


\begin{proof}
Note that
\begin{align*}
    \rev^{\text{SPA-GB}}(\ER^m) 
        &\leq \EE\SqBr{\max_{i \geq 2} \sum_{j=1}^m v_{(i),(j)}} \\
        &= \EE\SqBr{v_{(2),(1)} + \frac{(m-1) \ln v_{(2),(1)}}{1 - \frac{1}{v_{(2),(1)}}} + \Paren{\max_{i \geq 2} \sum_{j=1}^m v_{(i),(j)} - v_{(2),(1)} - \frac{(m-1) \ln v_{(2),(1)}}{1 - \frac{1}{v_{(2),(1)}}}}} \\
        &\leq nm + \frac{(m-1) \ln(nm)}{1 - \frac{1}{nm}} + \EE\SqBr{\max_{i \geq 2} \sum_{j=1}^m v_{(i),(j)} - v_{(2),(1)} - \frac{(m-1) \ln v_{(2),(1)}}{1 - \frac{1}{v_{(2),(1)}}}} \tag{Lemma \ref{lemma:expectation-second-highest-favorite}; Jensen applied to $\frac{\ln x}{1 - 1/x}$} \\
        &\leq nm + \frac{(m-1) \ln(nm)}{1 - \frac{1}{nm}} + \sum_{i=2}^n \EE\SqBr{\Paren{\sum_{j=1}^m v_{(i),(j)} - v_{(2),(1)} - \frac{(m-1) \ln v_{(2),(1)}}{1 - \frac{1}{v_{(2),(1)}}}}^+}
\end{align*}
since $\max_{i \geq 2} \sum_{j=1}^m v_{(i),(j)}$ is either revenue if the bidder with the highest value for any item also has the highest value for the bundle or welfare otherwise.
We show that the sum of expectations on the last line is at most $O(m \ln(mn))$.

Note that conditioned on $v_{(i),(1)}$, for all $j \geq 2$, we can treat $v_{(i),(j)}$ as an independent draw from $\ER$ conditioned on being at most $v_{(i),(1)}$.
Thus, for all $i \geq 3$,
\begin{align*}
    &\PP\cSqBr{\sum_{j = 1}^m v_{(i),(j)} \geq v_{(2),(1)} + \frac{(m-1) \ln v_{(2),(1)}}{1 - \frac{1}{v_{(2),(1)}}} + t(m-1)}{v_{(i),(1)}, v_{(2),(1)}} \\
        &  \leq \frac{(m-1) v_{(i),(1)} \cdot \1\Paren{m v_{(i),(1)} \geq v_{(2),(1)} + \frac{(m-1) \ln v_{(2),(1)}}{1 - \frac{1}{v_{(2),(1)}}} + t(m-1)}}{\Paren{v_{(2),(1)} + \frac{(m-1) \ln v_{(2),(1)}}{1 - \frac{1}{v_{(2),(1)}}} + t(m-1) - \EE\cSqBr{\sum_{j = 1}^m v_{(i),(j)}}{v_{(i),(1)}, v_{(2),(1)}}}^2}  \tag{Lemma \ref{lemma:conditional-variance-ER}; Chebyshev} \\
        &  \leq \frac{(m-1) v_{(i),(1)} \cdot \1\Paren{m v_{(i),(1)} \geq v_{(2),(1)} + \frac{(m-1) \ln v_{(2),(1)}}{1 - \frac{1}{v_{(2),(1)}}} + t(m-1)}}{\Paren{v_{(2),(1)} - v_{(i),(1)} + t(m-1)}^2}  \tag{by Lemma \ref{lemma:conditional-expectation-non-favorite}, $\EE\cSqBr{\sum_{j = 2}^m v_{(i),(j)}}{v_{(i),(1)}, v_{(2),(1)}} = \frac{(m-1) \ln v_{(i),(1)}}{1 - \frac{1}{v_{(i),(1)}}} \leq \frac{(m-1) \ln v_{(2),(1)}}{1 - \frac{1}{v_{(2),(1)}}}$}
\end{align*}
Moreover, conditioned on $v_{(2),(1)}$, for all $i \geq 3$, we can treat $v_{(i),(1)}$ as the maximum of $m$ independent draws from $\ER$ conditioned on being at most $v_{(2),(1)}$.
Note that the cdf and pdf of such a random variable is 
\begin{align*}
    \underset{v \sim \ER^m}{\PP}\cSqBr{\max_j v_j \leq z}{\max_j v_j \leq v_{(2),(1)}} 
        &= \Paren{\frac{1 - \frac{1}{z}}{1 - \frac{1}{v_{(2),(1)}}}}^m \\
    \frac{\mathrm{d}}{\mathrm{d}z} \underset{v \sim \ER^m}{\PP}\cSqBr{\max_j v_j \leq z}{\max_j v_j \leq v_{(2),(1)}}
        &= \frac{m}{z^2\Paren{1 - \frac{1}{v_{(2),(1)}}}}\Paren{\frac{1 - \frac{1}{z}}{1 - \frac{1}{v_{(2),(1)}}}}^{m-1}
\end{align*}
Integrating over the pdf of $v_{(i),(1)}$ treated as the maximum of $m$ independent draws from $\ER$ conditioned on being at most $v_{(2),(1)}$ yields
\begin{align*}
    &\PP\cSqBr{\sum_{j = 1}^m v_{(i),(j)} \geq v_{(2),(1)} + \frac{(m-1) \ln v_{(2),(1)}}{1 - \frac{1}{v_{(2),(1)}}} + t(m-1)}{v_{(2),(1)}} \\
    & \leq (m-1)\int_{\frac{v_{(2),(1)}}{m} + \Paren{1 - \frac{1}{m}}\Paren{\frac{\ln v_{(2),(1)}}{1 - \frac{1}{v_{(2),(1)}}} + t}}^{v_{(2),(1)}} \frac{v}{\Paren{v_{(2),(1)} - v + t(m-1)}^2} \frac{\mathrm{d}}{\mathrm{d}v} \PP\cSqBr{v_{(i),(1)} \leq v}{v_{(2),(1)}} \, \mathrm{d}v \\
    & = \frac{m-1}{1 - \frac{1}{v_{(2),(1)}}}\int_{\frac{v_{(2),(1)}}{m} + \Paren{1 - \frac{1}{m}}\Paren{\frac{\ln v_{(2),(1)}}{1 - \frac{1}{v_{(2),(1)}}} + t}}^{v_{(2),(1)}} \frac{v}{\Paren{v_{(2),(1)} - v + t(m-1)}^2} \frac{m}{v^2} \Paren{\frac{1 - 1/v}{1-1/v_{(2),(1)}}}^{m-1} \, \mathrm{d}v \\
    & \leq \frac{m(m-1)}{1 - \frac{1}{v_{(2),(1)}}}\int_{\frac{v_{(2),(1)}}{m} + \Paren{1 - \frac{1}{m}}\Paren{\frac{\ln v_{(2),(1)}}{1 - \frac{1}{v_{(2),(1)}}} + t}}^{v_{(2),(1)}} \frac{1}{v\Paren{v_{(2),(1)} - v + t(m-1)}^2} \, \mathrm{d}v \\
    & = \frac{m(m-1)}{1 - \frac{1}{v_{(2),(1)}}} \SqBr{\frac{1}{(v_{(2),(1)} + t(m-1))(v_{(2),(1)} - v + t(m-1))}}_{\frac{v_{(2),(1)}}{m} + \Paren{1 - \frac{1}{m}}\Paren{\frac{\ln v_{(2),(1)}}{1 - \frac{1}{v_{(2),(1)}}} + t}}^{v_{(2),(1)}} \\
    & \hspace{13 pt} - \frac{m(m-1)}{1 - \frac{1}{v_{(2),(1)}}} \SqBr{\frac{\ln\Paren{\frac{v_{(2),(1)} + t(m-1)}{v} - 1}}{(v_{(2),(1)} + t(m-1))^2}}_{\frac{v_{(2),(1)}}{m} + \Paren{1 - \frac{1}{m}}\Paren{\frac{\ln v_{(2),(1)}}{1 - \frac{1}{v_{(2),(1)}}} + t}}^{v_{(2),(1)}} \\
    & \leq \frac{m}{1 - \frac{1}{v_{(2),(1)}}} \Paren{\frac{1}{t(v_{(2),(1)} + t(m-1))} + \frac{(m-1)\ln\Paren{\frac{v_{(2),(1)}}{t(m-1)}}}{(v_{(2),(1)} + t(m-1))^2} + \frac{(m-1) \ln (m-1)}{(v_{(2),(1)} + t(m-1))^2}} \\
    & = \frac{m}{1 - \frac{1}{v_{(2),(1)}}} \Paren{\frac{1}{t(v_{(2),(1)} + t(m-1))} + \frac{(m-1)\ln \Paren{\frac{v_{(2),(1)}}{t}}}{(v_{(2),(1)} + t(m-1))^2}} \\
    & \leq \frac{m}{1 - \frac{1}{v_{(2),(1)}}} \Paren{ \frac{1}{t(v_{(2),(1)} + t(m-1))} + \frac{(m-1)\ln v_{(2),(1)}}{(v_{(2),(1)} + t(m-1))^2}} \tag{for $t \geq 1$}
\end{align*}
We will use this bound for $t \geq 1$.
For $t \leq 1$, note that
\begin{align*}
    \PP\cSqBr{\sum_{j = 1}^m v_{(i),(j)} \geq v_{(2),(1)} + \frac{(m-1) \ln v_{(2),(1)}}{1 - \frac{1}{v_{(2),(1)}}} + t(m-1)}{v_{(2),(1)}}
        &\leq \PP\cSqBr{\sum_{j = 1}^m v_{(i),(j)} \geq v_{(2),(1)}}{v_{(2),(1)}} \\
        &\leq \frac{m \ln v_{(2),(1)}}{v_{(2),(1)} \Paren{1 - \frac{1}{v_{(2),(1)}}}} \tag{Markov; Lemma \ref{lemma:conditional-expectation-non-favorite}}
\end{align*}
The last inequality follows from the fact that we can treat $v_{(i),(1)}, \dots, v_{(i),(m)}$ as i.i.d. draws from $\ER$ conditioned on being at most $v_{(2),(1)}$.
Thus,
\begin{align*}
    &\EE\cSqBr{\Paren{\sum_{j = 1}^m v_{(i),(j)} - v_{(2),(1)} - \frac{(m-1) \ln v_{(2),(1)}}{1 - \frac{1}{v_{(2),(1)}}}}^+}{v_{(2),(1)}} \\
    & = (m-1) \int_0^\infty \PP\cSqBr{\sum_{j = 1}^m v_{(i),(j)} \geq v_{(2),(1)} + \frac{(m-1) \ln v_{(2),(1)}}{1 - \frac{1}{v_{(2),(1)}}} + t(m-1)}{v_{(2),(1)}} \,\mathrm{d}t \\
    &\leq \int_0^1 \frac{m(m-1) \ln v_{(2),(1)}}{v_{(2),(1)}\Paren{1 - \frac{1}{v_{(2),(1)}}}} \,\mathrm{d}t + \frac{m(m-1)}{1 - \frac{1}{v_{(2),(1)}}} \int_1^{v_{(2),(1)} - \frac{\ln v_{(2),(1)}}{1 - \frac{1}{v_{(2),(1)}}}} \Paren{ \frac{1}{t(v_{(2),(1)} + t(m-1))} + \frac{(m-1)\ln v_{(2),(1)}}{(v_{(2),(1)} + t(m-1))^2}} \, \mathrm{d}t \tag{$\sum_{j=1}^m v_{(i),(j)} \leq mv_{(2),(1)}$, so the integrand is 0 if $t \geq v_{(2),(1)} + \frac{\ln v_{(2),(1)}}{1 - 1/v_{(2),(1)}}$} \\
    &= \frac{m (m-1) \ln v_{(2),(1)}}{v_{(2),(1)}\Paren{1 - \frac{1}{v_{(2),(1)}}}} - \frac{m(m-1)}{1 - \frac{1}{v_{(2),(1)}}}\SqBr{\frac{\ln t}{v_{(2),(1)} + t(m-1)} + \frac{\ln v_{(2),(1)}}{v_{(2),(1)} + t(m-1)}}_1^{v_{(2),(1)} - \frac{\ln v_{(2),(1)}}{1 - \frac{1}{v_{(2),(1)}}}} \\
    &\leq \frac{2m(m-1) \ln v_{(2),(1)}}{v_{(2),(1)}\Paren{1 - \frac{1}{v_{(2),(1)}}}}
\end{align*}
We integrate the conditional expectation using the pdf of the second highest of $nm$ draws, which stochastically dominates $v_{(2),(1)}$.
The cdf of the second highest of $nm$ draws is
\[
    \Paren{1 - \frac{1}{v}}^{nm} + \frac{nm}{v}\Paren{1 - \frac{1}{v}}^{nm-1}
\]
The pdf is
\[
    \frac{nm(nm-1)}{v^3}\Paren{1-\frac{1}{v}}^{nm-2}
\]
Thus,
\begin{align*}
    \EE\SqBr{\Paren{\sum_{j = 1}^m v_{(i),(j)} - v_{(2),(1)} - \frac{(m-1) \ln v_{(2),(1)}}{1 - \frac{1}{v_{(2),(1)}}}}^+} 
        &\leq \int_1^\infty \frac{2m(m-1) \ln v}{v\Paren{1 - \frac{1}{v}}} \frac{nm(nm-1)}{v^3}\Paren{1-\frac{1}{v}}^{nm-2} \, \mathrm{d}v \\
        &= \frac{2m(m-1)(2 H_{nm} - 3)}{nm-2} \tag{Wolfram}
\end{align*}
so 
\[
    \sum_{i=3}^m \EE\SqBr{\Paren{\sum_{j = 1}^m v_{(i),(j)} - v_{(2),(1)} - \frac{(m-1) \ln v_{(2),(1)}}{1 - \frac{1}{v_{(2),(1)}}}}^+} \leq \frac{2m(m-1)(n-2)(2 H_{nm} - 3)}{nm-2} = O(m \ln (nm))
\]
It remains to bound $\EE\SqBr{\Paren{\sum_{j=2}^m v_{(2),(j)} - \frac{(m-1) \ln v_{(2),(1)}}{1 - \frac{1}{v_{(2),(1)}}}}^+}$, but note that by Lemmas \ref{lemma:expectation-second-highest-favorite} and \ref{lemma:conditional-expectation-non-favorite} and Jensen's inequality,
\begin{align*}
    \EE\SqBr{\Paren{\sum_{j=2}^m v_{(2),(j)} - \frac{(m-1) \ln v_{(2),(1)}}{1 - \frac{1}{v_{(2),(1)}}}}^+}
        &\leq \EE\SqBr{\sum_{j=2}^m v_{(2),(j)}} \leq \frac{(m-1) \ln (mn)}{1 - \frac{1}{mn}}
\end{align*}
\end{proof}

\subsection{Proofs from Section~\ref{sec:kfbic}} \label{sec:proofs_kfbic}

\kfatruthful*
\begin{proof}
    For each item $j$, once each bidder in $S$ is determined to be in or not in $S_j$, they simply face a posted price of either $H$ or $L$. Therefore, there is no incentive to misreport their values. If the bidder cannot control whether or not they are in $S_j$ (e.g. because the auctioneer knows this already), then the auction is truthful. Once the bidder cannot lie about whether or not they are in $S_j$ (e.g. their favorite item), then they cannot benefit from any further manipulation.
\end{proof}

\end{document}